\setlist[enumerate]{label={\upshape\arabic*.},ref={\upshape\arabic*}}
\algnewcommand\algorithmicinput{\textbf{Input:}}
\algnewcommand\Input{\item[\algorithmicinput]}
\algnewcommand\algorithmicoutput{\textbf{Output:}}
\algnewcommand\Output{\item[\algorithmicoutput]}
\newcommand{\ZZ}{\mathbb{Z}}
\newcommand{\NN}{\mathbb{N}}
\newcommand{\FF}{\mathbb{F}}
\newcommand{\rdiv}{\mid_r}
\newcommand{\s}[2]{S_{#1}^{({#2})}}
\newcommand{\sq}[2]{s_{#1}^{({#2})}}
\newcommand{\st}[2]{\tilde S_{#1}^{({#2})}}
\DeclareMathOperator{\rk}{rk}
\DeclareMathOperator{\hw}{w_H}
\DeclareMathOperator{\hd}{d_H}
\DeclareMathOperator{\rkw}{w_{rk}}
\DeclareMathOperator{\rkd}{d_{rk}}
\newcommand{\lclm}[1]{\left[ #1 \right]_\ell}
\newtheorem{theorem}{Theorem}
\newtheorem{lemma}[theorem]{Lemma}
\newtheorem{prop}[theorem]{Proposition}
\newtheorem{corollary}[theorem]{Corollary}
\theoremstyle{definition}
\newtheorem{defn}[theorem]{Definition}
\newtheorem{example}[theorem]{Example}
\newtheorem{assumption}{Assumption}
\newtheorem{problem}{Problem}
\theoremstyle{remark}
\newtheorem{remark}[theorem]{Remark}
\begin{document}

\title{Decoding up to Hartmann--Tzeng and Roos bounds for rank codes}

\thanks{Research funded under grants PID2023-149565NB-I00 and PRE2020-093254, both financed by the Spanish Research Agency (MCIN/AEI / 10.13039/501100011033), the second one being also financed by the European Social Fund ``FSE invierte en tu futuro''.}

\author{José Manuel Muñoz}
\address{Dept. of Algebra, University of Granada, Spain}
\email{munoz@ugr.es}

\begin{abstract}
	A class of linear block codes which simultaneously generalizes Gabidulin codes and a class of skew cyclic codes is defined. For these codes, both a Hartmann--Tzeng-like bound and a Roos-like bound, with respect to their rank distance, are described, and corresponding nearest-neighbor decoding algorithms are presented. Additional necessary conditions so that decoding can be done up to the described bounds are studied. Subfield subcodes and interleaved codes from the considered class of codes are also described, since they allow an unbounded length for the codes, providing a decoding algorithm for them; additionally, both approaches are shown to yield equivalent codes with respect to the rank metric.
\end{abstract}

\keywords{%
	Linear codes,
	rank-metric codes,
	decoding,
	Hartmann--Tzeng bound,
	Roos bound,
	interleaved codes,
	skew cyclic codes%
}

\subjclass[2020]{%
	94B35, 
	94B60, 
	94B65
} 

\maketitle

\section{Introduction}

Cyclic codes are one of the main elementary classes of error-correcting linear codes. Cyclic codes can be described as ideals in the commutative ring $F[x]/(x^n - 1)$, where $F$ is a field and $n$ is the considered code length. This results in $F$-linear codes which, as a consequence of the well-studied properties of such ideals, inherit additional structure beyond the one for a general linear code over the field $F$. This additional structure results in a better understanding of the parameters of the codes and provides tools for designing error-correcting decoding algorithms which are more efficient than the ones available for general linear codes.
In particular, in \cite{BC60} and \cite{Hocquengheim59}, it is shown that a lower bound, known as the BCH bound, on the minimum Hamming distance of a cyclic code results from the existence of patterns in the set of powers of a field element which are roots of the generator polynomial of the code; that is, the defining set of the code. Notably, Reed--Solomon codes can be seen as designed in such a way that they have a BCH bound reaching the Singleton upper bound, being thus MDS codes. This bound was first generalized in \cite{HT72} into what we will refer to as the original Hartmann--Tzeng bound, which was itself generalized in \cite{Roos83} into the Roos bound. The procedure to choose the generator of the code in such a way that a lower bound on its minimum distance results in one of those bounds, therefore getting a code with a prescribed lower bound on their minimum distance, is straightforward, as it amounts to choosing a defining set with suitable structure. As shown in e.g. \cite{FT91}, nearest-neighbor decoding, with respect to the Hamming metric, can be performed up to the mentioned bounds. Another generalization of the bound in \cite{HT72}, which is neither a generalization nor a particular case of the one in \cite{Roos83}, was given in \cite{Roos82}. Despite also being described by Roos, we shall refer to this bound as \textit{the} Hartmann--Tzeng bound, as done in later works (mainly \cite{GLNN18}).

Skew cyclic codes, first presented in \cite{BGU07}, result from replacing the ring of commutative polynomials $F[x]$, used for cyclic codes, with the ring of skew polynomials $F[x;\sigma]$, defined by a variable $x$ and a field automorphism $\sigma : F \to F$ so that $xa = \sigma(a)x$ for each $a \in F$. These rings are a particular case of Ore polynomial rings, introduced in \cite{Ore33}. Then, skew cyclic codes, or $\sigma$-cyclic codes when the twist map $\sigma$ is specified, are left ideals in the ring $F[x;\sigma]/F[x;\sigma](x^n - 1)$, where the length $n$ has to be such that the left ideal $F[x;\sigma](x^n - 1)$ is twosided within the skew polynomial ring.
Skew cyclic codes share many properties with cyclic codes, and many analogous constructions from the ones for cyclic codes have been described for skew cyclic codes. This includes skew Reed--Solomon codes, which are MDS codes (see e.g. \cite{GLN17s,GLN17pgz}, where nearest-neighbor decoding algorithms with respect to the Hamming metric are also described) and BCH-like, Hartmann--Tzeng-like, and Roos-like bounds (see respectively \cite[Proposition 2]{CLU09}, \cite{GLNN18} and \cite{ALN21}), since there is an analogous concept of a defining set for these codes. Again, it is straightforward to design the defining set of a code so that an instance of these bounds applies to the code.

Many properties related to the Hamming metric in cyclic codes apply to the rank metric, with respect to the field extension $F/F^\sigma$ where $F^\sigma$ is the fixed field of $\sigma$, in $\sigma$-cyclic codes. The rank metric with respect to $F/F^\sigma$ is such that the vector $(a_1, \dots, a_n)$ has as its rank weight the $F^\sigma$-dimension of the $F^\sigma$-span of $a_1, \dots, a_n$. In fact, \cite[Proposition 1]{CLU09} gives a BCH-like bound with respect to the rank metric, while the Roos-like bound given in \cite{ALN21} applies to the rank metric (in addition to the Hamming metric, since the Hamming distance is at least equal to the rank distance) and \cite{ALNW22} shows that the Hartmann--Tzeng-like bounds in \cite{GLNN18} also apply with respect to the rank metric. As a result, skew Reed--Solomon codes are MRD codes.
One of the main applications where the rank metric is relevant is random linear network coding, described in \cite{SKK08}, where the proposed codes are not skew cyclic codes but Gabidulin codes. These codes, first introduced in \cite{Delsarte78} and then independently in \cite{Gabidulin85}, are not defined as an ideal in an algebra, but as the kernel of a matrix, see e.g. their definition in \cite{KG05} or \cite[Section VI.A]{SKK08}. These codes are also MRD codes.
It is well-known how to perform nearest-neighbour error-correction for Gabidulin codes, see e.g. \cite{SJB11}, where it is also shown that these codes can be interleaved with themselves in order to get codes whose length is beyond the order of the field automorphism, which is a desirable property in the context of random linear network coding.

In Section \ref{section:definitionsandbounds}, a family of linear codes which generalizes Gabidulin codes is defined. Despite defining these codes as the left kernel of a matrix (hence, they do not take advantage of some algebraic structure as cyclic and skew cyclic codes do), it is shown that Hartmann--Tzeng-like and Roos-like bounds apply to these codes, and therefore it is straightforward to design them in such a way that a lower bound for their minimum rank distance is guaranteed.
In Section \ref{section:sc}, where the relationship of these codes with skew cyclic codes is discussed, it is shown that these codes generalize the family of skew cyclic codes whose length matches the order of the field automorphism, and, furthermore, each one of these codes (in particular, each Gabidulin code) is in fact rank equivalent to some skew cyclic code.
Following analogous approaches for Gabidulin, skew cyclic and cyclic codes, a pair of syndrome-based nearest-neighbor error-correcting algorithms up to the previously described bounds with respect to the rank metric for our codes (and therefore, for skew cyclic codes whose length matches the order of the automorphism) is developed.
This is done in Section \ref{section:sfsrsynthesisdecoding} by applying some known facts on skew-feedback shift registers, briefly recalled in Section \ref{section:sfsr}, to the decoding problem described in Section \ref{section:errorvalueslocators}.
Depending on the exact parameters, these algorithms might not be able to correct up to the considered bounds. Some conditions which guarantee them to reach the bounds are given in Section \ref{section:ontheassumption}, where criteria for determining the error-correcting capacity of the algorithms are discussed.
The possible failures that these algorithms might raise when attempting to decode beyond their error-correcting capacity are discussed in Section \ref{section:toomanyerrors}.
A generalized version of an algorithm by Gabidulin, which can be used in one of the steps of the decoding algorithms, is provided in Section \ref{section:gabidulinsalgorithm}.
An example using the results in this work in order to decode is shown in Section \ref{section:example}.
Finally, Section \ref{section:length} studies the possibility of using the codes introduced in Section \ref{section:definitionsandbounds} as a tool towards constructing codes whose length is beyond the order of the field automorphism through both subfield subcodes and interleaved codes, as this is important in the context of random linear network coding.

\section{Defining sets and lower bounds}
\label{section:definitionsandbounds}

\newcommand{\sord}{{|\sigma|}}
\newcommand{\thord}{{|\theta|}}

\newcommand{\bigO}{O}

\newcommand{\hv}{{\normalfont \textbf h}}
\newcommand{\cv}{{\normalfont \textbf c}}
\newcommand{\ev}{{\normalfont \textbf e}}
\newcommand{\yv}{{\normalfont \textbf y}}
\newcommand{\vv}{{\normalfont \textbf v}}
\newcommand{\bv}{{\normalfont \textbf b}}
\newcommand{\av}{{[\alpha]_\sigma}}
\newcommand{\zerov}{{(0, \dots, 0)}}

\newcommand{\ep}{\varepsilon}
\newcommand{\epv}{{\boldsymbol \upvarepsilon}}
\newcommand{\etav}{{\boldsymbol \upeta}}

Throughout this paper, we consider a field $F$, a field automorphism $\sigma : F \to F$ such that its order as an automorphism, $\sord$, is some positive integer, and a code length $n \le \sord$. See Section \ref{section:length} for constructions where the length is not limited by $\sord$.
Although the rank metric for block codes usually considers finite fields, and consequently most applications of the result of this work can be expected to be considered in the context of finite fields, in which case $\sigma$ is a power of the Frobenius endomorphism in $F = \FF_{q^\sord}$ such that $F^\sigma = \FF_q$, the results in this paper do not require the fields to be finite, so we will not make such a requirement.
It follows from elementary Galois theory that $F/F^\sigma$ is a Galois extension of degree $\sord$.
These items are enough for introducing the main construction in our work:

\begin{defn}\label{defn:ccode}
	Given an automorphism $\sigma : F \to F$ of order $\sord \in \ZZ^+$, a vector $\hv = (h_1, \dots, h_n) \in F^n$ whose $1 \le n \le \sord$ entries are $F^\sigma$-linearly independent and a set $T \subseteq \ZZ$ the $F$-linear code $C_{(\sigma, \hv, T)}$ is the intersection of the left kernels of $\sigma^i(\hv)^T$ for each $i \in T$; that is, $C_{(\sigma, \hv, T)} = \{ \vv \in F^n ~|~ \vv \sigma^i(\hv)^T = 0 \text{ for all } i \in T \}$.
\end{defn}

The $T$ in $\sigma^i(\hv)^T$, or in general in $A^T$ for any vector or matrix $A$, denotes transposition and shall not be confused with the set $T$.

If $T$ is a finite set $\{ i_1 < i_2 < \dots < i_m \}$, then it can equivalently be defined as the left kernel of the matrix
	\begin{equation}\label{eq:H}
		H_{(\sigma, \hv, T)} =
		\begin{pmatrix}
			\sigma^{i_1}(h_1) & \sigma^{i_2}(h_1) & \dots & \sigma^{i_m}(h_1) \\
			\sigma^{i_1}(h_2) & \sigma^{i_2}(h_2) & \dots & \sigma^{i_m}(h_2) \\
			\vdots & \vdots &  & \vdots \\
			\sigma^{i_1}(h_n) & \sigma^{i_2}(h_n) & \dots & \sigma^{i_m}(h_n) \\
		\end{pmatrix}.
	\end{equation}
While the elements in $T$ can be taken without loss of generality as different elements in $\{ 0, \dots, \sord - 1 \}$ by replacing every element in $T$ with its remainder modulo $\sord$ (so $T$ is finite without loss of generality), for convenience we shall not assume that to be the case.

Let $k \ge 0$ be such that the elements in $T$ yield $n - k$ different remainders modulo $\sord$. Then, on account of \cite[Corollary 4.13]{LL88}, $H_{(\sigma, \hv, \{ 0, \dots, n - 1 \})}$ is invertible, so its columns are linearly independent. As a result, if the remainders of the elements in $T$ modulo $\sord$ are in $\{ 0, \dots, n-1 \}$, as it is the case for example for $n = \sord$, then $C_{(\sigma, \hv, T)}$ is an $[n, k]$ $F$-linear code. Furthermore, for any $i \in \ZZ$, $H_{(\sigma, \sigma^i(\hv), \{ 0, \dots, n - 1 \})} = H_{(\sigma, \hv, \{ i, \dots, i + n - 1 \})}$ is also invertible, so for the same reason if there are $\sord - n$ consecutive integers which are not in $T$ modulo $\sord$, $C_{(\sigma, \hv, T)}$ is an $[n, k]$ code. In general, the dimension of $C_{(\sigma, \hv, T)}$ is at least $k$ and at most $k + \sord - n$, depending on the number of linearly independent columns.

Definition \ref{defn:ccode} replicates the notation for $C_{(\varphi_u, \alpha, d)}$-codes as described in \cite[Definition 2]{GNS21}, which takes a more general map $\varphi_u : F \to F$ and is restricted to $T = \{ 0, \dots, d-2 \}$ for some $2 \le d \le n$ and $\hv = (\alpha, \varphi_u(\alpha), \dots, \varphi_u^{n-1}(\alpha))$, where $n$ is required to be the dimension of $F$ over $F^{\varphi_u}$ (which, for $\varphi_u = \sigma$, is $\sord$; see \cite[Section II]{GNS21} for the definition of $F^{\varphi_u}$).

\begin{remark}\label{remark:shortened}
	If $n < \sord$, then the $n$-length code $C_{(\sigma, \hv, T)}$ can be seen as a shortened code from the $\sord$-length code $C_{(\sigma, \hv^*, T)}$, where $\hv^*$ is any $F^\sigma$-basis of $F$ having $\hv$ as a subset. In general, if every entry in $\hv$ is in $\hv^*$, then $C_{(\sigma, \hv, T)}$ is a shortened code from $C_{(\sigma, \hv^*, T)}$.
\end{remark}

The rank distance function (or rank metric) and the rank weight function (always with respect to the extension $F/F^\sigma$) over elements in the $F$-vector space $F^n$ will be denoted by $\rkd$ and $\rkw$.
That is, $\rkd : F^n \times F^n \to \NN$ is the map $(\vv, \vv') \mapsto \rkw(\vv - \vv')$, where $\rkw : F^n \to \NN$ assigns to every $\vv \in F^n$ its rank weight, defined as the dimension of the $F^\sigma$-vector space spanned by its entries.
Analogously, the Hamming distance function and the Hamming weight function will be denoted by $\hd$ and $\hw$. We will also denote by $\rkd(\mathcal C)$ and $\hd(\mathcal C)$ the minimum nonzero rank and Hamming distance between elements in a code $\mathcal C \subseteq F^n$, which, if $\mathcal C$ is $F$-linear, match the minimum nonzero rank and Hamming weight of elements in $\mathcal C$ respectively.

The following equivalence criterion for linear rank codes, which is a coarser version from the result of extending into general fields the one implied by \cite[Proposition 1]{Morrison14}, will be enough for the purposes in this work.

\begin{defn}\label{defn:fsigmalinearrankequivalence}
	Two linear codes $\mathcal C_1, \mathcal C_2$ of length $n$ over a field $F$ are said to be \emph{$F^\sigma$-linearly rank equivalent} (with respect to the rank metric derived from the field extension $F/F^\sigma$) if there exists an invertible $n \times n$ matrix $P$ with entries in $F^\sigma$ such that $\mathcal C_2 = \mathcal C_1 P$.
\end{defn}

Since the map $F^n \to F^n$, $\vv \mapsto \vv P$ is an isometry for the rank metric, two $F^\sigma$-linearly rank equivalent codes have the same dimension and the same rank weight distribution.
For finer criteria for rank equivalences between linear codes, see the linear and semilinear rank equivalences, introduced in \cite{Berger03} for finite fields, also studied in \cite[Section III.B]{Morrison14}. The semilinear rank equivalence has been considered in the context of general fields in, for example, \cite[Definition 3.4]{ALNW22}.

\begin{lemma}\label{lemma:rankequivalence}
	All codes of length $n$ satisfying Definition \ref{defn:ccode} for the same $\sigma, T$ and such that their respective $\hv \in F^n$ span the same subspace of $F$ are $F^\sigma$-linearly rank equivalent.
\end{lemma}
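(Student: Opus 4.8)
The plan is to reduce the claim to a change of basis on the column side of the parity-check description. Suppose $C_{(\sigma, \hv, T)}$ and $C_{(\sigma, \hv', T)}$ satisfy Definition \ref{defn:ccode} for the same $\sigma$ and $T$, and that $\hv = (h_1, \dots, h_n)$ and $\hv' = (h'_1, \dots, h'_n)$ span the same $n$-dimensional $F^\sigma$-subspace $V$ of $F$. Since both tuples are $F^\sigma$-linearly independent (by the hypothesis of Definition \ref{defn:ccode}) and span $V$, they are two $F^\sigma$-bases of $V$, so there is an invertible $n \times n$ matrix $P = (p_{jk})$ with entries in $F^\sigma$ such that $h'_k = \sum_{j=1}^n p_{jk} h_j$ for each $k$; equivalently, $\hv' = \hv P$ in row-vector notation (with $P$ acting on the right).

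The key step is to observe that applying any power $\sigma^i$ commutes with this $F^\sigma$-linear combination, because the scalars $p_{jk}$ lie in the fixed field $F^\sigma$. Hence $\sigma^i(h'_k) = \sum_{j=1}^n p_{jk}\, \sigma^i(h_j)$ for every $i \in \ZZ$ and every $k$, i.e. $\sigma^i(\hv') = \sigma^i(\hv) P$ as row vectors, so $\sigma^i(\hv')^T = P^T \sigma^i(\hv)^T$ as column vectors. Therefore a vector $\vv \in F^n$ satisfies $\vv\, \sigma^i(\hv')^T = 0$ if and only if $(\vv P^T)\, \sigma^i(\hv)^T = 0$, using associativity and the fact that $\vv P^T \sigma^i(\hv)^T = \vv (P^T \sigma^i(\hv)^T) = \vv\, \sigma^i(\hv')^T$. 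Running this over all $i \in T$ shows $\vv \in C_{(\sigma, \hv', T)}$ iff $\vv P^T \in C_{(\sigma, \hv, T)}$, that is, $C_{(\sigma, \hv', T)} (P^T) = C_{(\sigma, \hv, T)}$, or equivalently $C_{(\sigma, \hv', T)} = C_{(\sigma, \hv, T)} (P^T)^{-1}$. Since $(P^T)^{-1}$ is an invertible $n \times n$ matrix with entries in $F^\sigma$, Definition \ref{defn:fsigmalinearrankequivalence} gives that the two codes are $F^\sigma$-linearly rank equivalent. (When $T$ is finite, the same computation reads $H_{(\sigma, \hv', T)} = P^T H_{(\sigma, \hv, T)}$ at the level of the matrices in \eqref{eq:H}, which is perhaps the cleanest way to phrase it, since left-multiplying the parity-check matrix by an invertible matrix leaves the left kernel unchanged up to the right action of $(P^T)^{-1}$ on codewords.)

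I do not anticipate a genuine obstacle here: the only thing to be careful about is bookkeeping with transposes and with the side on which $P$ acts, and making sure the scalars really are $\sigma$-fixed so that $\sigma^i$ passes through the linear combination — which is exactly why the hypothesis is phrased in terms of the $F^\sigma$-span rather than, say, the $F$-span. It is also worth noting that the matrix $P$ is \emph{independent of} $i$, which is what makes a single equivalence matrix work simultaneously for all $i \in T$; this is the structural reason the lemma holds for arbitrary $T$ and not just for special defining sets.
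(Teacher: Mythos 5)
Your proof is correct and follows essentially the same route as the paper: both arguments take the change-of-basis matrix $P$ over $F^\sigma$ between the two bases of $V$, use that $\sigma^i(P)=P$ to pull $P$ through every $\sigma^i$, and conclude that the parity-check matrices differ by left multiplication by an invertible $F^\sigma$-matrix, so the left kernels are related by the right action of its inverse. The only difference is notational (the paper writes $\hv_2^T = P\hv_1^T$, so its $P$ is your $P^T$).
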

\begin{proof}
	Let $\hv_1, \hv_2$ be any vectors of length $n \le \sord$ with $n$ $F^\sigma$-linearly independent entries that span the same $F^\sigma$-vector space $V \subseteq F$. Then, these vectors are $F^\sigma$-bases of $V$, so $\hv_2^T = P \hv_1^T$ for some invertible $n \times n$ matrix $P$ with entries in $F^\sigma$. Then, $\sigma^d(\hv_2^T) = P\sigma^d(\hv_1^T)$ for any $d \in \ZZ$ since $\sigma^d(P) = P$.
	As a consequence, $H_{(\sigma, \hv_2, T)} = P H_{(\sigma, \hv_1, T)}$, so their left kernels, which are $C_{(\sigma, \hv_2, T)}$ and $C_{(\sigma, \hv_1, T)} P^{-1}$ respectively, are equal.
\end{proof}

If $n = \sord$, $\hv$ will be an $F^\sigma$-basis of $F$. Hence, Lemma \ref{lemma:rankequivalence} and Remark \ref{remark:shortened} give the next result, which generalizes \cite[Proposition 1]{Berger03} and a subsequent remark, which give the analogous result from Gabidulin codes.

\begin{corollary}\label{cor:equivalencensord}
	All codes according to Definition \ref{defn:ccode} of some common length $n$ for some common $\sigma$ and some common $T$ are $(\sord - n)$-times shortened from $F^\sigma$-linearly rank equivalent codes.
	If $n = \sord$, they are $F^\sigma$-linearly rank equivalent.
\end{corollary}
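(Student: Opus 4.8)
The plan is to split into the two cases $n = \sord$ and $n < \sord$, establishing the former directly from Lemma~\ref{lemma:rankequivalence} and then reducing the latter to it via Remark~\ref{remark:shortened}.

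First, for $n = \sord$: I would note that if $\hv \in F^n$ has $n = \sord$ entries that are $F^\sigma$-linearly independent, then, since $[F : F^\sigma] = \sord$ (recorded above as a consequence of Galois theory), these entries are forced to be an $F^\sigma$-basis of $F$. Hence, for any two admissible choices $\hv_1, \hv_2$ in Definition~\ref{defn:ccode} with the same $\sigma$ and the same $T$, both span the same subspace of $F$, namely all of $F$. Lemma~\ref{lemma:rankequivalence} then applies verbatim and gives that $C_{(\sigma, \hv_1, T)}$ and $C_{(\sigma, \hv_2, T)}$ are $F^\sigma$-linearly rank equivalent, which is the second assertion.

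Next, for $n < \sord$: given any code $C_{(\sigma, \hv, T)}$ of length $n$, I would complete $\hv$ to an $F^\sigma$-basis $\hv^*$ of $F$ — possible because its $n$ entries are $F^\sigma$-linearly independent and $[F : F^\sigma] = \sord$ — so that every entry of $\hv$ lies in $\hv^*$ and $\hv^*$ has length $\sord$. Remark~\ref{remark:shortened} then says that $C_{(\sigma, \hv, T)}$ is obtained from $C_{(\sigma, \hv^*, T)}$ by shortening $\sord - n$ times. Applying the case $n = \sord$ just proved (equivalently, Lemma~\ref{lemma:rankequivalence} at length $\sord$, all the relevant $\hv^*$ spanning $F$), every length-$\sord$ code $C_{(\sigma, \hv^*, T)}$ arising this way, as $\hv$ ranges over the admissible vectors for the fixed $\sigma$ and $T$, lies in a single $F^\sigma$-linear rank equivalence class. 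This is exactly the first assertion.

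Since both steps are brief invocations of already-established results, I do not expect a genuine obstacle; the one point that warrants care is the literal meaning of the first assertion. It claims that each length-$n$ code is a shortening of a length-$\sord$ code and that these length-$\sord$ ``parent'' codes are mutually $F^\sigma$-linearly rank equivalent — \emph{not} that the shortened length-$n$ codes themselves are equivalent. The latter would generally fail, because the coordinates deleted in passing from $C_{(\sigma, \hv^*, T)}$ to $C_{(\sigma, \hv, T)}$ depend on $\hv$, and shortening equivalent codes at non-corresponding positions need not preserve equivalence.
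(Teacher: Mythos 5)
Your proof is correct and follows exactly the route the paper takes: for $n = \sord$ the entries of any admissible $\hv$ form an $F^\sigma$-basis of $F$, so Lemma \ref{lemma:rankequivalence} applies with the common span equal to $F$, and for $n < \sord$ one completes $\hv$ to a basis and invokes Remark \ref{remark:shortened} to reduce to the full-length case. Your closing caveat about the precise meaning of the first assertion is also the correct reading of the statement.
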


The minimum distance for a code $\mathcal D$ is kept, if not increased, for any subcode $\mathcal C \subseteq \mathcal D$.
Note that, when we state that $\mathcal C$ is a subcode of $\mathcal D$, $\mathcal C$ is not required to be an $F$-vector space (that is, if may not be an $F$-linear code, potentially being any subset of $\mathcal D$).
Thus, any error-correcting algorithm that successfully corrects up to $\tau$ errors for $\mathcal D$ will also correct up to $\tau$ errors for any $\mathcal C \subseteq \mathcal D$. Consequently, from an error-correcting perspective, for any code $\mathcal C \subseteq F^n$ we may only focus on its $\hv$-defining set for some $\hv$, as we shall define now.

\newcommand{\T}[2]{T_{#1}^{#2}}

\begin{defn}\label{defn:ds}
	For a field automorphism $\sigma : F \to F$ and a given $\hv = (h_1, \dots, h_n) \in F^n$ with $F^\sigma$-linearly independent elements,
	the \emph{$\hv$-defining set (with respect to $\sigma$) of a code $\mathcal C \subseteq F^n$} is
	\begin{equation}\label{eq:ds}
		\T \hv \sigma(\mathcal C) = \{ i \in \ZZ ~|~ \mathcal \cv \sigma^i(\hv)^T = 0 \text{ for all } \cv \in \mathcal C \}.
	\end{equation}
\end{defn}

The defining set, as defined here, is closed under addition or subtraction of multiples of $\sord$, so $i \in \ZZ$ is in $\T \hv \sigma(\mathcal C)$ if and only if $i$ mod $\sord$ is, and $\T \hv \sigma(\mathcal C)$ is completely determined by its subset within $\{ 0, \dots, \sord - 1 \}$, or by any subset of representatives modulo $\sord$. It is immediate that $C_{(\sigma, \hv, T)}$ is the largest code in $F^n$ such that its $\hv$-defining set with respect to $\sigma$ contains $T$; in particular, any code $\mathcal C \subseteq F^n$ is a subcode of $C_{(\sigma, \hv, \T \hv \sigma(\mathcal C))}$.

Following \cite[Theorem 3.3]{GLNN18} and \cite[Theorem 22]{ALN21}, we will now prove that, with respect to the rank metric (and therefore the Hamming metric), a Hartmann--Tzeng-like bound and a Roos-like bound result from the existence of patterns in the $\hv$-defining set of a code, or equivalently, in the set $T$ of a $C_{(\sigma, \hv, T)}$-code.
For brevity, $(a, b)$ will denote the greatest common divisor of the integers $a$ and $b$.

\begin{theorem}[Hartmann--Tzeng bound]\label{thm:ht}
	Assume that, for a code $\mathcal C \subseteq F^n$ and some $\hv \in F^n$ with $n$ $F^\sigma$-linearly independent entries where $n \le \sord$,
	there exist some integers $\delta \ge 2, r \ge 0, b, t_1, t_2$ such that $(\sord, t_1) = 1$, $(\sord, t_2) < \delta$ and the set $T = b + t_1 \{ 0, 1, \dots, \delta - 2 \} + t_2\{ 0, 1, \dots, r \}$ is a subset of $\T \hv \sigma(\mathcal C)$. Then $\hd(\mathcal C) \ge \rkd(\mathcal C) \ge \delta + r$.
\end{theorem}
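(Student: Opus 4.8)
The plan is to reduce the Hartmann--Tzeng bound to a statement about ranks of certain structured submatrices of the parity-check matrix, in the spirit of the classical argument for the HT bound for cyclic codes and its skew-cyclic analogue in \cite[Theorem 3.3]{GLNN18}. Since the Hamming distance dominates the rank distance, it suffices to prove the rank-distance bound $\rkd(\mathcal C) \ge \delta + r$. Suppose, for contradiction, that there is a nonzero codeword $\cv \in \mathcal C$ with $\rkw(\cv) = w \le \delta + r - 1$. Writing $V \subseteq F$ for the $F^\sigma$-span of the entries of $\cv$, we have $\dim_{F^\sigma} V = w$, so we can factor $\cv = \av' M$ where $\av'$ is a length-$w$ vector of $F^\sigma$-linearly independent elements (an $F^\sigma$-basis of $V$) and $M$ is a $w \times n$ matrix over $F^\sigma$ of full row rank $w$. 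The defining-set condition says $\cv\,\sigma^i(\hv)^T = 0$ for every $i \in T$; substituting the factorization and using that $\sigma$ fixes $F^\sigma$ entrywise on $M$, this becomes $\av'\,\bigl(M \sigma^i(\hv)^T\bigr) = 0$, i.e. setting $g_j^{(i)} := \text{(the $j$-th entry of } M\sigma^i(\hv)^T)$, the vector $\gv^{(i)} = (g_1^{(i)}, \dots, g_w^{(i)}) \in F^w$ satisfies $\av'\,(\gv^{(i)})^T = 0$ for all $i \in T$.

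Next I would exploit the $\sigma$-structure of the indices. Because $(\sord, t_1) = 1$, replacing $\hv$ by $\sigma^b(\hv)$ and $\sigma$ by $\sigma^{t_1}$ (which still has order $\sord$) lets me normalize $b = 0$ and $t_1 = 1$ at the cost of renaming the automorphism; the condition $(\sord, t_2) < \delta$ is unchanged since it only depends on $t_2 \bmod \sord$ after this substitution is tracked carefully. The key point is then that $g_j^{(i)} = \sigma^i\!\left(g_j^{(0 \text{ shifted})}\right)$ is \emph{not} literally true because $M$ has entries in $F^\sigma$ which $\sigma$ fixes, so in fact $M\sigma^{i}(\hv)^T = \sigma^i\!\left(M \hv^T\right)$, giving $\gv^{(i)} = \sigma^i(\gv^{(0)})$ with $\gv^{(0)} := M\hv^T$. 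Thus the situation is: there is a single vector $\gv \in F^w$ such that $\av'\,\sigma^i(\gv)^T = 0$ for all $i$ in an index set which, after normalization, contains $\{0, 1, \dots, \delta-2\} + t_2\{0, \dots, r\}$. Now I claim the entries of $\gv$ are $F^\sigma$-linearly independent: indeed $M$ has full row rank $w$ over $F^\sigma$ and $\hv$ has $F^\sigma$-linearly independent entries, so $\gv^T = M\hv^T$ has $F^\sigma$-linearly independent entries (a full-row-rank $F^\sigma$-matrix applied to an $F^\sigma$-linearly independent list gives an $F^\sigma$-linearly independent list). Therefore $\av'$ is orthogonal (over $F$) to $\sigma^i(\gv)^T$ for $w-1+1 = w$... — more precisely, $\av'$ is a nonzero vector in the left kernel of the $w \times |T|$ matrix whose columns are $\sigma^i(\gv)^T$, $i \in T$.

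To finish, I would show that the presence of the HT pattern forces this matrix to have rank $w$, contradicting the existence of the nonzero left-kernel vector $\av'$. The matrix whose columns are $\sigma^i(\gv)^T$ for $i = 0, \dots, \delta - 2$ is exactly $H_{(\sigma, \gv, \{0, \dots, \delta-2\})}$ in the notation of \eqref{eq:H}, and by \cite[Corollary 4.13]{LL88} any $w$ columns of $H_{(\sigma, \gv, \{0, \dots, \sord-1\})}$ coming from consecutive exponents are linearly independent — but here we may only have $\delta - 1 \le w$ consecutive columns, which is not by itself enough. This is where the extra $t_2\{0, \dots, r\}$ block enters: one argues as in the cyclic/skew-cyclic HT proof that, because $(\sord, t_2) < \delta$, the $r+1$ ``shifted copies'' of the $\{0, \dots, \delta-2\}$-block by multiples of $t_2$ glue together, via a Vandermonde-type / Wronskian-type argument over the Ore setting, to produce $w \le \delta - 1 + r$ columns that remain $F$-linearly independent; concretely, one assembles the columns into a block-structured matrix and shows by a rank argument (reducing to the moore-matrix nonsingularity of \cite[Corollary 4.13]{LL88} applied to $\gv$ and to the auxiliary element controlling the $t_2$-shift) that its rank is $w$. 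I expect this gluing step — making precise why $(\sord, t_2) < \delta$ is exactly the condition that the $t_2$-shifted blocks contribute genuinely new independent columns rather than collapsing — to be the main obstacle, and I would model it closely on the proof of \cite[Theorem 3.3]{GLNN18}, checking that every Hamming-metric Vandermonde argument there has a valid Moore-matrix / $\sigma$-linearly-independent analogue here.
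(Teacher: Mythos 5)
Your reduction is the same as the paper's: factor the low-rank-weight codeword as $\cv = \cv' M_\cv$ with $M_\cv$ over $F^\sigma$ of full row rank, pull $\sigma^i$ through $M_\cv$ to obtain a single vector $M_\cv \hv^T$ with $F^\sigma$-linearly independent entries, and conclude that $\cv'$ lies in the left kernel of the matrix whose columns are the $\sigma^i$-images of that vector for $i \in T$. However, the decisive step --- that this block-structured matrix has full row rank --- is exactly where you stop: you say you ``expect this gluing step to be the main obstacle'' and would model it on \cite{GLNN18}, but you do not supply the argument. That step is the entire mathematical content of the theorem beyond bookkeeping, and it is the only place where the hypothesis $(\sord, t_2) < \delta$ enters; the paper discharges it by invoking \cite[Lemma 3.2]{GLNN18} directly, and your proof is not complete without either that citation or a proof of the corresponding rank statement.

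There is also a small but genuine mismatch in your setup that would block a direct appeal to that lemma. You factor $\cv$ through a matrix with $w = \rkw(\cv)$ rows, so your final matrix has $w$ rows where $w$ may be strictly less than $\delta + r - 1$, whereas the lemma is applied in the paper precisely under the equality $w = (\delta - 1) + r$. The paper avoids this by padding from the outset: it takes $M_\cv$ of rank $\delta + r - 1$ and lets $\cv'$ have $\nu = \rkw(\cv)$ independent entries followed by zeros, so that the kernel computation takes place in the dimension the lemma requires. Without that padding device, or a separate argument that the rank statement survives shrinking $w$, your final contradiction does not go through as written.
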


\begin{proof}
	Let $\cv = (c_1, \dots, c_n) \in \mathcal C_{(\sigma, \hv, T)}$ be such that $\rkw(\cv) = \nu$ for some $\nu \le w = \delta + r - 1$.
	This means that $\cv = \cv' M_\cv$ for some $M_\cv \in M_{w \times n}(F^\sigma)$ of rank $w$ and some $\cv' = (c_1', \dots, c_\nu', 0, \dots, 0) \in F^w$ with $\nu$ $F^\sigma$-linearly independent entries.
	For every $0 \le i \le \delta - 2$ and every $0 \le j \le r$,
	\begin{equation}\label{eq:htds}
	0 = \cv \sigma^{b + i t_1 + j t_2}(\hv)^T = \cv' M_\cv \sigma^{b + i t_1 + j t_2}(\hv)^T = \cv' \sigma^{b + i t_1 + j t_2}\left(M_\cv \hv^T\right),
	\end{equation}
	since the entries of $M_\cv$ are in the fixed field of $\sigma$. Consider $(a_1, \dots, a_w)^T = M_\cv \hv^T$. For each $1 \le l \le w$, the $l$-th row of $M_\cv$ has the coordinates of $a_l$ with respect to $\hv$, which is a basis for some $n$-dimensional $F^\sigma$-subspace of $F$. As a result, the $F^\sigma$-linear independence of the rows of $M_\cv$ implies the one of $a_1, \dots, a_w$.

	Equation \eqref{eq:htds} means that $\cv'$ is in the left kernel of the matrix
	\[ B = \sigma^b\left( \left[ \begin{array}{c|c|c|c|c}
A & \sigma^{t_2}(A) & \sigma^{2 t_2}(A) & \dots & \sigma^{r t_2}(A) \\
\end{array} \right]_{w \times (\delta - 1)(r + 1)} \right), \]
	
	where
	\[ A = \left( \begin{array}{cccc}
		a_1 & \sigma^{t_1}(a_1) & \dots & \sigma^{(\delta - 2)t_1}(a_1) \\
		a_2 & \sigma^{t_1}(a_2) & \dots & \sigma^{(\delta - 2)t_1}(a_2) \\
		\vdots & \vdots &  & \vdots \\
		a_w & \sigma^{t_1}(a_w) & \dots & \sigma^{(\delta - 2)t_1}(a_w) \\
	\end{array}\right)_{w \times (\delta - 1)}. \]

	Such left kernel is trivial as \cite[Lemma 3.2]{GLNN18}, which can be applied by noting that $w = (\delta - 1) + r$, yields that $\rk(B) = w$.

	Thus, $\cv'$ is a zero vector and so is $\cv = \cv' M_\cv$: there are no nonzero elements in $C_{(\sigma, \hv, T)} \supseteq \mathcal C$ of rank weight at most $w = \delta + r - 1$ and, by the linearity of $C_{(\sigma, \hv, T)}$, the minimum rank distance between elements of $C_{(\sigma, \hv, T)}$, and therefore $\mathcal C$, is at least $\delta + r$. The well-known inequality $\hd \ge \rkd$ completes the proof.
\end{proof}

\begin{theorem}[Roos bound]\label{thm:roos}
	Assume that, for a code $\mathcal C \subseteq F^n$ and some $\hv \in F^n$ with $n$ $F^\sigma$-linearly independent entries where $n \le \sord$,
	there exist some integers $\delta \ge 2, r \ge 0, b, t_1, t_2, k_0,\dots,k_r$ such that $(\sord, t_1) = 1 = (\sord, t_2)$, $k_0 < \dots < k_r$, $k_r - k_0 \le \delta + r - 2$ and the set $T = b + t_1 \{ 0, 1, \dots, \delta - 2 \} + t_2 \{ k_0,\dots,k_r \}$ is a subset of $\T \hv \sigma(\mathcal C)$. Then $\hd(\mathcal C) \ge \rkd(\mathcal C) \ge \delta + r$.
\end{theorem}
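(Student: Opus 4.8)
The plan is to mimic the structure of the proof of Theorem~\ref{thm:ht} almost verbatim, replacing only the final matrix-rank input: instead of invoking \cite[Lemma 3.2]{GLNN18} on a block matrix built from the ``arithmetic-progression'' column pattern, I would invoke the Roos-type rank lemma (the skew analogue of \cite[Theorem 22]{ALN21}, or equivalently the rank statement underlying the classical Roos bound) on a block matrix built from the sparser column pattern $t_2\{k_0,\dots,k_r\}$. First I would set $w = \delta + r - 1$, take $\cv = (c_1,\dots,c_n) \in C_{(\sigma,\hv,T)}$ with $\rkw(\cv) = \nu \le w$, and write $\cv = \cv' M_\cv$ exactly as before, with $M_\cv \in M_{w\times n}(F^\sigma)$ of rank $w$ and $\cv' \in F^w$ having $\nu$ linearly independent entries and $w-\nu$ trailing zeros. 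Setting $(a_1,\dots,a_w)^T = M_\cv\hv^T$, the same argument (coordinates with respect to the basis $\hv$, $F^\sigma$-linear independence of rows transferring to the $a_l$) shows $a_1,\dots,a_w$ are $F^\sigma$-linearly independent.

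Next, from $T = b + t_1\{0,\dots,\delta-2\} + t_2\{k_0,\dots,k_r\} \subseteq \T\hv\sigma(\mathcal C)$ and the fact that $M_\cv$ has entries in $F^\sigma$, I would derive, for every $0 \le i \le \delta-2$ and every $0 \le j \le r$,
\[ 0 = \cv\,\sigma^{b + it_1 + t_2 k_j}(\hv)^T = \cv'\,\sigma^{b + it_1 + t_2 k_j}\!\left(M_\cv\hv^T\right), \]
so that $\cv'$ lies in the left kernel of
\[ B = \sigma^b\!\left( \left[ \begin{array}{c|c|c|c} \sigma^{t_2 k_0}(A) & \sigma^{t_2 k_1}(A) & \dots & \sigma^{t_2 k_r}(A) \end{array} \right]_{w \times (\delta-1)(r+1)} \right), \]
where $A$ is the same $w \times (\delta-1)$ Moore-like matrix in the columns $\sigma^{it_1}(a_l)$ as in Theorem~\ref{thm:ht}. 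Applying $\sigma^{-b}$ is harmless, and since each $\sigma^{t_2 k_j}$ is an automorphism the block $\sigma^{t_2 k_j}(A)$ is built from the $F^\sigma$-linearly independent elements $\sigma^{t_2 k_j}(a_l)$; the hypotheses $(\sord, t_1) = 1$, $(\sord, t_2) = 1$, $k_0 < \dots < k_r$, $k_r - k_0 \le \delta + r - 2 = w - 1$ are exactly what the Roos-type rank lemma needs to conclude $\rk(B) = w$, hence that the left kernel of $B$ is trivial. Therefore $\cv' = \mathbf{0}$, so $\cv = \cv' M_\cv = \mathbf{0}$; there is no nonzero codeword of rank weight $\le w$, and by linearity of $C_{(\sigma,\hv,T)}$ (which contains $\mathcal C$) we get $\rkd(\mathcal C) \ge \delta + r$, with $\hd(\mathcal C) \ge \rkd(\mathcal C)$ finishing the proof.

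The main obstacle, and the only genuinely new ingredient, is the rank lemma for the matrix $B$: one must show that a horizontal concatenation of $\sigma$-twisted copies $\sigma^{t_2 k_j}(A)$ of a single Moore-type matrix $A$, with the shift set $\{k_0,\dots,k_r\}$ of diameter at most $w-1$, still has full row rank $w$. This is the skew (rank-metric) counterpart of the linear-algebra fact behind the classical Roos bound, and it is presumably proved either by adapting the determinant/van der Monde argument of \cite[Theorem 22]{ALN21} to the present setting or by an induction on $r$ that reduces the diameter-$(w-1)$ shift pattern to the consecutive case handled by \cite[Lemma 3.2]{GLNN18}. I would either cite such a lemma if it has been (or will be) stated separately in the paper, or prove it inline; everything else in the argument is a word-for-word transcription of the Hartmann--Tzeng proof, since the reduction $\cv \mapsto \cv'$, the transfer of $F^\sigma$-independence through $M_\cv$, and the passage from ``no low-weight codeword'' to ``minimum distance bound'' are identical.
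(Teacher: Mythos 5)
Your skeleton is exactly the paper's: the reduction $\cv = \cv' M_\cv$, the transfer of $F^\sigma$-linear independence to $a_1, \dots, a_w$, and the deduction of the bound from $\rk(B) = w$ are all identical to the proof of Theorem \ref{thm:ht} and to the paper's proof of Theorem \ref{thm:roos}. The gap is the one you flag yourself: you never establish that your matrix $B$ has rank $w$. The ``Roos-type rank lemma'' you would need for your block arrangement (outer blocks $\sigma^{t_2 k_j}(A)$ of the Moore-type inner matrix $A$ built from the $\sigma^{i t_1}(a_l)$) is not something you can simply cite; in particular \cite[Theorem 22]{ALN21} is the Roos bound for skew cyclic codes, i.e.\ the statement being generalized, not the linear-algebra input behind it. Since this rank claim is the only nontrivial step of the entire proof, the argument is incomplete as written.

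The paper closes exactly this gap without proving any new lemma, by a reindexing you did not find: set $\theta = \sigma^{t_2}$ and choose $s$ with $s t_2 \equiv t_1 \pmod{|\sigma|}$, which is possible precisely because $(|\sigma|, t_2) = 1$. Up to a permutation of columns (the $(l,(i,j))$ entry is $\sigma^{i t_1 + k_j t_2}(a_l)$ either way), your $B$ then becomes
\[
\sigma^b\!\left( \left[ \begin{array}{c|c|c|c} A' & \theta^{s}(A') & \dots & \theta^{(\delta-2)s}(A') \end{array} \right] \right),
\qquad A' = \left( \theta^{k_j}(a_l) \right)_{w \times (r+1)},
\]
so the roles of the two progressions are swapped: the outer blocks now form a consecutive progression with step $s$ coprime to $|\theta| = |\sigma|$, and the Roos pattern $k_0 < \dots < k_r$ with $k_r - k_0 \le w - 1$ sits inside $A'$. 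This is exactly the shape covered by \cite[Lemma 12]{ALN21}, whose hypotheses are met because $(|\sigma|, t_2) = 1$ gives $F^\theta = F^\sigma$ (so $a_1, \dots, a_w$ remain $F^\theta$-linearly independent) and $(|\theta|, s) = (|\sigma|, s t_2) = (|\sigma|, t_1) = 1$. To make your version stand on its own you must either perform this substitution or actually prove the rank statement for your arrangement; ``presumably proved by adapting the determinant argument or by induction on $r$'' does not discharge the step.
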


\begin{proof}
	As in the previous proof, any $\cv = (c_1, \dots, c_n) \in \mathcal C_{(\sigma, \hv, T)}$ such that $\rkw(\cv) = \nu$ for some $\nu \le w = \delta + r - 1$ is such that $\cv = \cv' M_\cv$ where $M_\cv$ and $\cv'$ are as in the proof above and
	\begin{equation}\label{eq:rds}
	0 = \cv \sigma^{b + i t_1 + k_j t_2}(\hv)^T = \cv' M_\cv \sigma^{b + i t_1 + k_j t_2}(\hv)^T = \cv' \sigma^{b + i t_1 + k_j t_2}\left(M_\cv \hv^T\right)
	\end{equation}
	for every $0 \le i \le \delta - 2$ and every $0 \le j \le r$.
	If we define $\theta = \sigma^{t_2}$ and take as $s$ any integer such that $s t_2 \equiv t_1 \mod \sord$ (which exists since $(\sord, t_2) = 1$), \eqref{eq:rds} means that $\cv'$ is in the left kernel of the matrix
	\[ B = \sigma^b\left( \left[ \begin{array}{c|c|c|c|c}
A & \theta^{s}(A) & \theta^{2s}(A) & \dots & \theta^{(\delta - 2) s}(A) \\
\end{array} \right]_{w \times (\delta - 1)(r + 1)} \right), \]
	
	where
	\[ A = \left( \begin{array}{cccc}
		\theta^{k_0}(a_1) & \theta^{k_1}(a_1) & \dots & \theta^{k_r}(a_1) \\
		\theta^{k_0}(a_2) & \theta^{k_1}(a_2) & \dots & \theta^{k_r}(a_2) \\
		\vdots & \vdots &  & \vdots \\
		\theta^{k_0}(a_w) & \theta^{k_1}(a_w) & \dots & \theta^{k_r}(a_w) \\
	\end{array}\right)_{w \times (r + 1)}. \]
	The left kernel is again trivial as $\sigma^{-b}(B)$, and therefore $B$, has rank $w$ by \cite[Lemma 12]{ALN21}. Since $(\sord, t_2) = 1$, $F^\sigma = F^\theta$, hence $a_1, \dots, a_w$ are $F^\theta$-linearly independent, as required for that lemma. In addition, $\thord = \sord$, so $(\thord, s) = (\thord, s t_2) = (\thord, t_1) = 1$.
	
	Hence, $\cv'$ is a zero vector and so is $\cv = \cv' M_\cv$. The asserted statements follow from this as at the end of the previous proof.
\end{proof}

\begin{remark}
	Theorem \ref{thm:roos} does not generalize Theorem \ref{thm:ht}, since Theorem \ref{thm:ht} is based on the bound in \cite{Roos82}, which generalizes the original Hartmann--Tzeng bound introduced in \cite{HT72}.
	For example, for $\sord = 22$ and a defining set \(
	T = \{
	0,   1,  2,\allowbreak
	4,   5,  6,\allowbreak
	8,   9, 10,\allowbreak
	12, 13, 14
	\}
	\), Theorem \ref{thm:ht} can yield a lower bound of $7$ through $\delta = 4$, $r = 3$, $b = 0$, $t_1 = 1$, $t_2 = 4$, while similar parameters are not possible through Theorem \ref{thm:roos}, which requires $(\sord, t_2) = 1$.
	Theorem \ref{thm:roos} would generalize Theorem \ref{thm:ht} if in addition $(\sord, t_2)$ is required to be $1$.
\end{remark}

\begin{remark}\label{remark:htrgenerality}
	In Theorem \ref{thm:roos}, $k_0$ can be taken as $0$ without loss of generality, as the set $T$ is kept the same by replacing $k_j$ with $k_j - k_0$ for each $0 \le j \le r$ and then substituting $b + t_2 k_0$ for $b$. Alternatively, for the same reason, $b$ can be taken as $0$.
	In both Theorem \ref{thm:ht} and Theorem \ref{thm:roos}, $t_1$ may be considered to be $1$ without loss of generality, by replacing $\sigma$ with a power thereof with the same order and the same fixed field. Alternatively, for the same reason, $t_2$ may be considered to be $1$ in Theorem \ref{thm:roos}.
\end{remark}

Generalized Gabidulin codes of designed rank distance $d$, as described in \cite[Section IV.A]{KG05}, can be seen as $C_{(\sigma, \hv, T)}$, as in Definition \ref{defn:ccode}, where $T = \{ 0, t_1, 2t_1, \dots, (d-2)t_1 \}$ for some $t_1$ coprime with $\sord$, and $\sigma$ is the minimum nonnegative power of the Frobenius endomorphism of a finite field $F$ which fixes some given subfield. Stated otherwise, if a subfield of $F$ with cardinality $q = p^r$, for $p$ the characteristic of $F$, is the subfield to be considered for the rank metric, then $\sigma$ is the endomorphism $a \mapsto a^q$, the $r$-th power of the Frobenius endomorphism.
	Alternatively, generalized Gabidulin codes may be described as $C_{(\sigma, \hv, T)}$ for $T = \{ 0, 1, \dots, d-2 \}$ and for $\sigma$ any automorphism in a finite field $F$, as every automorphism in a finite field is some power of the Frobenius endomorphism.
	For $t_1 = 1$ and $\sigma$ the minimum nonnegative power of the Frobenius endomorphism fixing a given subfield, the result is a Gabidulin code in the classic sense (see \cite[Section 4]{Gabidulin85} or \cite[Section III]{KG05}).
	Both Theorem \ref{thm:ht} and Theorem \ref{thm:roos} show that the minimum distance of these codes is $d$, as their dimension is $k = n - d + 1$ and the Singleton bound for block codes (any row of the $k \times n$ generator matrix of the code in standard form, which will itself be a codeword, has a Hamming, and therefore rank, weight of at most $n - k + 1$) prevents it from being greater than $n - (n - d + 1) + 1 = d$.
	Hence, these results generalize \cite[Theorem 1]{KG05}, which proves the result for $F$ finite, $b = r = 0$, $t_2 = 1$ and, in the case of Theorem \ref{thm:roos}, $k_0 = 0$. In particular, maximum rank distance (MRD) codes (with respect to the rank metric as considered here; see e.g. \cite[Section II.B]{GY06} or \cite[Section II.C]{SKK08} for their definition) over infinite fields can be described by taking an infinite $F$ and any $\sigma : F \to F$.
	Note that, in \cite{KG05}, the parity check matrix is considered as the transpose of the matrix described in Definition \ref{defn:ccode}.

While Theorem \ref{thm:ht} and Theorem \ref{thm:roos} can be used to prove a lower bound in an already constructed code through its defining set if it happens to fit the required properties, for practical purposes the most straightforward approach is to use Definition \ref{defn:ccode} and either Theorem \ref{thm:ht} or Theorem \ref{thm:roos} to construct a code $C_{(\sigma, \hv, T)}$ satisfying a minimum lower bound on its minimum rank (or, therefore, Hamming) distance through choosing a suitable set $T$ for the considered $\sord$.

The existence of a lower bound $\delta + r$ for the minimum distance of a code $\mathcal C$ satisfying Theorem \ref{thm:ht} or Theorem \ref{thm:roos} implies an error-correcting capacity of at least $\tau = \lfloor \frac {\delta + r - 1} 2 \rfloor$: for any codeword $\cv \in \mathcal C \subseteq F^n$ and any error vector $\ev \in F^n$ of rank (or Hamming) weight at most $\tau$, there is a single codeword $\cv' \in \mathcal C$ such that the rank (respectively, Hamming) distance between $\cv'$ and $\yv = \cv + \ev$ is at most $\tau$, which is $\cv' = \cv$. We shall construct a procedure for, under conditions on the code parameters studied in Section \ref{section:ontheassumption}, successfully computing the only, if any, codeword $\cv \in \mathcal C$ such that $\rkd(\cv, \yv) \le \tau$ for a given $\yv \in F^n$, which can be used to decode any codeword after it has been added an error of rank (or Hamming) weight at most $\tau$; that is, a nearest-neighbor decoding algorithm up to these lower bounds.

\section{Skew polynomials and skew cyclic codes}
\label{section:sc}

As noted in the previous section, Definition \ref{defn:ccode} generalizes Gabidulin codes (and an extension thereof described in \cite{KG05}) into a family of codes for which, depending on the set $T$, a Hartmann--Tzeng-like bound and a Roos-like bound may apply, replicating the work in \cite{GLNN18} and \cite{ALN21}. As we shall see now, these codes also extend the ones in \cite{GLNN18} and \cite{ALN21}, so the nearest-neighbor error-correcting algorithm described in the coming sections will also apply to those skew cyclic codes, which we shall briefly describe now.

A field $F$ and a field automorphism $\sigma : F \to F$ of order $n$ define the skew polynomial ring $R = F[x;\sigma]$, which is the set of polynomials written with its coefficients on the left of the variable $x$ with the usual sum and a product such that $xa = \sigma(a) x$ for any $a \in F$. 
Then, a \emph{$\sigma$-cyclic} (or, if $\sigma$ can be inferred from the context, \emph{skew cyclic}) code of length $n$ is any left ideal of $\mathcal R = R/R(x^n - 1)$. Since $R$ is a left and right Euclidean domain (see e.g. \cite{Jacobson96}), every left ideal and every right ideal in $R$ is principal. As a result, any such code $\mathcal C$ can be identified with the only monic right divisor $g \in R$ of $x^n - 1$ such that the left ideal $\mathcal Rg$ equals $\mathcal C$.
$\mathcal R$ inherits both the Hamming metric and the rank metric from $F^n$ through the canonical $F$-vector space isomorphism
\begin{equation}\label{eq:vsisomorphism}
	F^n \to \mathcal R; \qquad (a_0, \dots, a_{n-1}) \mapsto a_0 + a_1 x + \dots + a_{n-1} x^{n-1} + R(x^n - 1),
\end{equation}
so the Hamming and rank weight in $\mathcal R$ are defined as the corresponding weight of the coefficient vector for the only representative of degree less than $n$. Further details on skew cyclic codes can be found in e.g. \cite[Section 0]{BU09}, \cite[Section 2]{CLU09} or \cite[Section 2]{GLNN18}.

While skew cyclic codes are commonly defined in such a way that the order of $\sigma$ is only required to divide $n$, here we will only consider the case where its order $\sord$ equals $n$, since this is necessary in order to use the concept of defining sets.
This limitation is overcome by considering subfield subcodes: in \cite[Section 2]{GLNN18}, it is shown that $\sigma$-cyclic codes of length $n$ such that $n = s \sord$ for $s > 1$ can be considered a subfield subcode from a $\theta$-cyclic code, for some $\theta : L \to L$ in an extension $L$ of $F$, where $\thord = n$ whenever $F$ is a finite field, as well as when $\sord$ is coprime with $s$ and $F$ is a rational function field $\mathbb F(z)$ in the variable $z$ for some finite field $\mathbb F$.
The given construction results in the fixed fields of $\theta$ and $\sigma$ being equal, so these subfield subcodes inherit any lower bounds for the minimum rank distance. Furthermore, $\theta$ is chosen so that its restriction to $F$ is $\sigma$, allowing decoding to be performed over the larger field if necessary. Some practical considerations and examples on choosing $\theta$-cyclic codes of length $\thord$ so that they have a suitable subfield subcode over a given subfield are reviewed in \cite[Section 4]{GLNN18}. Analogous remarks apply to codes following Definition \ref{defn:ccode}, as later described in Section \ref{section:length}.

As shown in \cite{GLNN18}, skew cyclic codes can be described from a BCH-like point of view in the following sense.
Consider some $\alpha \in F$ such that $\{ \alpha, \sigma(\alpha), \dots, \sigma^{n-1}(\alpha) \}$ is an $F^\sigma$-basis for $F$, and then define $\beta = \alpha^{-1}\sigma(\alpha)$. In this context, for any $g \in R$ which right divides $x^n - 1$ (we shall denote this by $g \rdiv x^n - 1$), the $\beta$-defining set of $g$, $T_\beta(g)$, and the $\beta$-defining set of the corresponding code $\mathcal Rg$, $T_\beta(\mathcal Rg)$, are defined as
\begin{equation}\label{eq:betads}
T_\beta(g) = T_\beta(\mathcal Rg) = \{ i \in \ZZ ~|~ x - \sigma^i(\beta) \rdiv g \}.
\end{equation}
The definition of the $\beta$-defining set in \cite{GLNN18} is this one restricted to the range $0 \le i \le n - 1$, which is enough to characterize the $\beta$-defining set as defined in \eqref{eq:betads}. Since for any finite set of skew polynomials $f_1, \dots, f_k$ there exists a least common left multiple $\lclm{f_1, \dots, f_k}$, which is the only monic generator of the left ideal $R f_1 \cap \dots \cap R f_k$, skew cyclic codes may directly be constructed as matching some set $T$ by choosing $g$ as the least common left multiple of $x - \sigma^i(\beta)$ for $i \in T$, whose minimum rank (and therefore Hamming) distance can be conveniently chosen to satisfy a suitable lower bound by \cite[Theorem 3.3]{GLNN18} or \cite[Theorem 22]{ALN21}. The least common left multiple can be computed through the extended left Euclidean algorithm; see e.g. \cite[Ch. I, Theorem 4.33]{Bueso/alt:2003}.

We will make use of the following elementary result.
\begin{lemma}\label{lemma:lrdiv}
	A skew polynomial $\sum_k c_k x^k \in F[x;\sigma]$ is right divided by $x - a^{-1}\sigma(a)$ for some $0 \ne a \in F$ if and only if $\sum_k c_k \sigma^k(a) = 0$.
\end{lemma}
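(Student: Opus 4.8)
The plan is to compute the right remainder of $f = \sum_k c_k x^k$ upon right division by the monic linear skew polynomial $x - \gamma$, where $\gamma = a^{-1}\sigma(a)$, and to show that this remainder equals $a^{-1}\sum_k c_k \sigma^k(a)$. Since $F[x;\sigma]$ is right Euclidean, $x - \gamma \rdiv f$ holds exactly when that remainder vanishes, and as $a \ne 0$ this is equivalent to $\sum_k c_k \sigma^k(a) = 0$.

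First I would invoke that $x - \gamma$ is monic of degree $1$ to write $f = q(x - \gamma) + r$ uniquely with $r \in F$, and let $\rho(f)$ denote the remainder $r$, giving a map $\rho : F[x;\sigma] \to F$. Adding two such expressions and multiplying one on the left by a scalar $c \in F$ shows that $\rho$ is additive and $\rho(cf) = c\,\rho(f)$; hence $\rho(f) = \sum_k c_k\,\rho(x^k)$, and it suffices to determine $N_k := \rho(x^k)$. From $x^k = x\cdot x^{k-1}$ and $x b = \sigma(b)\,x$ for $b \in F$ one gets $N_0 = 1$ and the recursion $N_k = \sigma(N_{k-1})\,\gamma$, so $N_k = \sigma^{k-1}(\gamma)\cdots\sigma(\gamma)\,\gamma$.

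Next I would specialize to $\gamma = a^{-1}\sigma(a)$ and show $N_k = a^{-1}\sigma^k(a)$ by induction on $k$: the base case $k = 0$ is immediate, while for the step $N_k = \sigma(N_{k-1})\,\gamma = \sigma\bigl(a^{-1}\sigma^{k-1}(a)\bigr)\,a^{-1}\sigma(a) = \sigma(a)^{-1}\sigma^k(a)\,a^{-1}\sigma(a) = a^{-1}\sigma^k(a)$, where the factors $\sigma(a)^{\pm 1}$ cancel because $F$ is commutative (equivalently, $N_k = \prod_{j=1}^{k}\sigma^{j-1}(a)^{-1}\sigma^{j}(a)$ telescopes). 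Substituting back, $\rho(f) = \sum_k c_k\,a^{-1}\sigma^k(a) = a^{-1}\sum_k c_k\sigma^k(a)$, and the equivalence in the statement follows at once.

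I do not anticipate a real difficulty; the only care needed is bookkeeping in the noncommutative ring $F[x;\sigma]$ — keeping $x - \gamma$ on the right throughout, and recalling that $\rho$ is linear for left (not right) scalar multiplication — together with the observation that the cancellation producing $a^{-1}\sigma^k(a)$ rests on the commutativity of the coefficient field $F$.
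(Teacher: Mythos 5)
Your argument is correct and is essentially the paper's proof written out in full: the paper simply cites \cite[Lemma 2.4]{LL88} (that the right remainder of $\sum_k c_k x^k$ upon division by $x-\gamma$ is $\sum_k c_k N_k(\gamma)$ with $N_k(\gamma)=\sigma^{k-1}(\gamma)\cdots\sigma(\gamma)\gamma$ the $k$-partial norm) together with the identity $N_k\bigl(a^{-1}\sigma(a)\bigr)=a^{-1}\sigma^k(a)$, which are exactly the two facts you derive from scratch via the recursion $N_k=\sigma(N_{k-1})\gamma$ and the telescoping product. Both steps check out, so your proposal is a valid self-contained version of the same argument.
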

\begin{proof}
	This follows from \cite[Lemma 2.4]{LL88} and the fact that the $k$-partial norm of $a^{-1} \sigma(a)$ relative to $\sigma$ is $a^{-1}\sigma^k(a)$.
\end{proof}

Since $\beta = \alpha^{-1}\sigma(\alpha)$ and therefore $\sigma^i(\beta) = \sigma^i(\alpha)^{-1} \sigma(\sigma^i(\alpha))$, by Lemma \ref{lemma:lrdiv} $\sum_j c_j x^j \in F[x;\sigma]$ is right divided by $x - \sigma^i(\beta)$ if and only if $\sum_j c_j \sigma^{i+j}(\alpha) = 0$. This can be used to show that $\beta$-defining sets can be considered a particular case of $\hv$-defining sets. The skew cyclic code generated by $g$ is $\mathcal Rg$, which means that any codeword in $\mathcal Rg$ can be lifted into a skew polynomial $c = \sum_{i=0}^{n-1} c_k x^k \in F[x;\sigma]$ which is a left multiple of $g$, which is in turn a left multiple of $x - \sigma^i(\beta)$ for all $i \in T_\beta(g)$. This means that
\begin{align*}
	T_\beta(g)
	& = \{ i \in \ZZ ~|~ x - \sigma^i(\beta) \rdiv g \} \\
	& = \{ i \in \ZZ ~|~ x - \sigma^i(\beta) \rdiv c \text{ for all } c \in \mathcal Rg \} \\
	& = \{ i \in \ZZ ~|~ \sum_{j = 0}^{n-1} c_j \sigma^{i+j}(\alpha) = 0 \text{ for all } c \in \mathcal Rg \} \\
	& = \{ i \in \ZZ ~|~ \cv \sigma^i(\av)^T = 0 \text{ for all } \cv \in \mathcal C \}
\end{align*}
where $\av = \left( \alpha, \sigma(\alpha), \dots, \sigma^{n-1}(\alpha) \right)$ and $\mathcal C$ is $\mathcal R g$ seen as an $F$-vector space through the inverse map of \eqref{eq:vsisomorphism}. This yields the following statements.

\begin{prop}\label{prop:skewcyclicdefiningsets}
	Consider any field automorphism $\sigma : F \to F$ of order $n$ and any $\alpha \in F$ such that $\av = \left( \alpha, \sigma(\alpha), \dots, \sigma^{n-1}(\alpha) \right)$ is an $F^\sigma$-basis of $F$. Define $\beta = \alpha^{-1}\sigma(\alpha)$ and $\mathcal R = R/R(x^n - 1)$.
	\begin{enumerate}
		\item For any skew cyclic code $\mathcal C = \mathcal R g$, $T_\beta(g)$ equals $\T \av \sigma(\mathcal C)$ as in Definition \ref{defn:ds}.
	
		\item For any set $T \subset \ZZ$, the $\sigma$-cyclic code $\mathcal R g$ where $g = \lclm{x - \sigma^i(\beta) ~|~ i \in T}$ equals the image under the isomorphism \eqref{eq:vsisomorphism} of $C_{(\sigma, \av, T)}$ as in Definition \ref{defn:ccode}, or equivalently, it is the image of the left kernel of $H_{(\sigma, \av, T)}$.
	\end{enumerate}
\end{prop}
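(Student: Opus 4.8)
The plan is to read off both claims from the chain of equalities displayed immediately above the statement, once two small structural facts are recorded. The first is the translation, already noted before the proposition as a consequence of Lemma \ref{lemma:lrdiv}, that a skew polynomial $\sum_j c_j x^j$ is right divided by $x - \sigma^i(\beta)$ precisely when $\sum_j c_j \sigma^{i+j}(\alpha) = 0$. The second is that whenever $g \rdiv x^n - 1$ one has $R(x^n - 1) \subseteq Rg$, so the left ideal $\mathcal Rg$ of $\mathcal R$ is the image of the left ideal $Rg$ of $R$; equivalently, a coset lies in $\mathcal Rg$ if and only if its unique representative of degree less than $n$ is a genuine left multiple of $g$ in $R$.

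For part (1) the monic generator $g$ satisfies $g \rdiv x^n - 1$ by definition. Hence $x - \sigma^i(\beta) \rdiv g$ holds if and only if $x - \sigma^i(\beta)$ right divides every degree-less-than-$n$ representative $c$ of an element of $\mathcal Rg$ --- forwards because such a $c$ is a left multiple of $g$ by the second fact and every right divisor of $g$ right divides all of $Rg$, backwards because $g$ itself (or $x^n - 1$ when $\deg g = n$, which $x - \sigma^i(\beta)$ always right divides) reduces into $\mathcal Rg$. By the first fact this condition reads $\sum_{j=0}^{n-1} c_j \sigma^{i+j}(\alpha) = 0$, i.e. $\cv\,\sigma^i(\av)^T = 0$, for every $\cv$ in the image $\mathcal C$ of $\mathcal Rg$ under the inverse of \eqref{eq:vsisomorphism}; by Definition \ref{defn:ds} this is exactly $i \in \T\av\sigma(\mathcal C)$. (This is precisely the content of the displayed computation.)

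For part (2), one first checks that $g = \lclm{x - \sigma^i(\beta) ~|~ i \in T}$ really does generate a length-$n$ skew cyclic code: applying Lemma \ref{lemma:lrdiv} to $x^n - 1 = -1 + x^n$ with $a = \sigma^i(\alpha)$ gives $\sigma^n(\sigma^i(\alpha)) - \sigma^i(\alpha) = 0$ since $\sord = n$, so $x - \sigma^i(\beta) \rdiv x^n - 1$ for every $i$; as the $x - \sigma^i(\beta)$ depend only on $i$ modulo $n$, the family is finite, $x^n - 1$ is a common left multiple of its members, and therefore the least common left multiple $g$ right divides $x^n - 1$. Now $g$ is the monic generator of $\bigcap_{i \in T} R(x - \sigma^i(\beta))$, so a skew polynomial is a left multiple of $g$ exactly when it is a left multiple of $x - \sigma^i(\beta)$ for every $i \in T$; for a degree-less-than-$n$ representative $c = \sum_{j=0}^{n-1} c_j x^j$, the first fact turns this into $\sum_{j=0}^{n-1} c_j \sigma^{i+j}(\alpha) = 0$ for all $i \in T$, i.e. $\cv\,\sigma^i(\av)^T = 0$ for all $i \in T$, i.e. $\cv \in C_{(\sigma, \av, T)}$ in the sense of Definition \ref{defn:ccode}. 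Together with the second fact this identifies the image of $\mathcal Rg$ under the inverse of \eqref{eq:vsisomorphism} with $C_{(\sigma, \av, T)}$, which for finite $T$ is the left kernel of $H_{(\sigma, \av, T)}$ by \eqref{eq:H}. The whole argument is bookkeeping; the only point that needs care is the passage between cosets in $\mathcal R$ and their degree-less-than-$n$ representatives in $R$, which is settled once and for all by $R(x^n - 1) \subseteq Rg$, together with the harmless reduction of a possibly infinite set $T$ to the finitely many residues it represents modulo $\sord$.
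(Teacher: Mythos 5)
Your argument is correct and is essentially the paper's own: the paper proves this proposition via the displayed chain of equalities and the surrounding prose immediately before the statement, resting on exactly the same two ingredients you isolate (the Lemma \ref{lemma:lrdiv} translation of right divisibility by $x - \sigma^i(\beta)$ into $\sum_j c_j \sigma^{i+j}(\alpha) = 0$, and the identification of cosets in $\mathcal Rg$ with left multiples of $g$ via their degree-less-than-$n$ representatives). Your version merely spells out a few details the paper leaves implicit, such as the verification that each $x - \sigma^i(\beta)$ right divides $x^n - 1$ and the degenerate case $g = x^n - 1$.
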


Consequently, \cite[Theorem 3.3]{GLNN18} is generalized by Theorem \ref{thm:ht}, which in addition shows that the bound applies for the rank metric, and \cite[Theorem 22]{ALN21} is generalized by Theorem \ref{thm:roos}.
In the latter, in contrast with Theorem \ref{thm:roos}, it is required that $t_2 = 1$; however, as noted in Remark \ref{remark:htrgenerality}, no generality is lost by doing so.

\begin{remark}\label{remark:equivalentorshortenedfromskewcyclic}
	By Corollary \ref{cor:equivalencensord}, all codes of length $n = \sord$ satisfying Definition \ref{defn:ccode} for some $\sigma$ and some $T$ are equivalent, with respect to the rank metric (with respect to $F/F^\sigma$), to any given $\sigma$-cyclic code with the same length and defining set.
	Notably, such $\sigma$-cyclic codes exist as, by \cite[Proposition 11]{GNS21}, there always exists an $\alpha \in F$ such that $\av$ is an $F^\sigma$-basis of $F$.
	Furthermore, codes of shorter length according to Definition \ref{defn:ccode} are shortened from $\sigma$-cyclic codes with the same defining set.
\end{remark}

\section{The multisequence skew-feedback shift-register synthesis problem}
\label{section:sfsr}

A key step for decoding up to the bounds given by Theorem \ref{thm:ht} and Theorem \ref{thm:roos} will be shown to be equivalent to solving a particular instance of the skew-feedback shift-register synthesis problem, which was first described in \cite{SJB11} by generalizing the linear-feedback shift-register synthesis problem.

\begin{defn}\label{defn:SFSR}
	Let $s_0, \dots, s_{N-1}$ be a sequence of length $N$ of elements in a field $F$ with a field automorphism $\theta : F \to F$. A vector $\vv = (v_0, v_1, \dots, v_\ell)$ with entries in $F$ (or a field extension thereof), where $v_0 \ne 0$, is a \emph{$\theta$-skew-feedback shift register} (or $\theta$-SFSR for short) of length $\ell$ for the sequence if
	\begin{equation} \label{eq:SFSR}
\sum_{i=0}^\ell v_i \theta^i\left(s_{n-i} \right) = 0 \qquad \text{for all } \ell \le n < N.
	\end{equation}
\end{defn}

	We are using $\theta$ instead of $\sigma$ for the field automorphism since it will not necessarily match $\sigma$ as considered throughout the rest of this work. If $\theta$ is chosen as the identity map in $F$, this is a classical linear-feedback shift register, or LFSR. By SFSR we may denote a $\theta$-SFSR for some unspecified $\theta$.

	There is a notion of equivalence between SFSRs. Any vector $\vv$ whose first entry $v_0$ is nonzero can be multiplied by any nonzero element in $F$ (or an extension thereof), including $v_0^{-1}$. The resulting vector is a $\theta$-SFSR for any given sequence if and only if so is the initial $\vv$. Hence, two $\theta$-SFSRs which are proportional (that is, they are equal once divided by their respective first entry) can be considered to be equivalent. Additionally, a $\theta$-SFSR of length $\ell$ can also be defined from its last $\ell$ components, excluding $v_0$ (which is assumed to be $1$ without loss of generality), as in \cite{SJB11}. While, with such a definition, this equivalence is replaced with an identity, we will adjust to our definition since the SFSRs we will find include their first entry, which will mostly not be normalized to $1$.

	The existence of an SFSR of length $\ell$ for a given sequence implies that any subset of $\ell$ consecutive items, including the first $\ell$ ones, determine the next terms in the sequence, since $v_0 \ne 0$ and therefore an expression for $s_n$ results from \eqref{eq:SFSR}. Hence, an SFSR can be understood as an abstract machine that yields the full sequence when receiving the first $\ell$ elements as input; see \cite{SJB11}. Note that multiplying $\vv$ by any nonzero element in $F$ (or an extension thereof), and therefore considering an equivalent SFSR, has no effect in the generated sequence. In the case that $v_\ell$ is also nonzero, which is not required by our definition, any subset of $\ell$ consecutive terms will also determine the previous elements in the sequence.
	
	We are mainly interested in the following version of the SFSR synthesis problem for multiple sequences.
	
	\begin{problem}\label{problem:SFSRsynthesis}
		Consider some $N,L \in \ZZ^+$, some field $F$ with an automorphism $\theta$ and, for each $j \in \{ 0, \dots, {L-1} \}$, a sequence $s^{(j)} = \sq 0 j, \sq 1 j, \dots, \sq {N-1} j$ of length $N$ with elements in $F$.
		Find the smallest $\ell \in \ZZ^+$ such that there exists a vector $\vv \in F^{\ell+1}$ which is a $\theta$-SFSR for each one of those sequences, as well as one of such vectors $\vv$.
	\end{problem}

	For our work, the problem as stated above will suffice. In \cite{SJB11}, a solution, requiring $\bigO(N^2)$ operations in the field $F$, for a generalization of this problem, where the sequences are allowed to have different lengths, is provided, and asymptotically faster algorithms also exist, see \cite{SB14}. We will make use of the fact that this problem is, thus, solvable within reasonable complexity with respect to the number of operations in the field.

	Allowing $\vv$ to belong to a field extension from $F$ in Problem \ref{problem:SFSRsynthesis} would not result in essentially different SFSRs, in the sense that any SFSR can be projected (assuming the axiom of choice if the field extension is not finite) into one whose entries are in $F$:
	
	\newcommand{\p}[2]{\pi_{#1}^{(#2)}}
	
	\begin{lemma}\label{lemma:SFSRsubfield}
		Let $\vv = (v_0, v_1, \dots, v_\ell)$ be a $\theta$-SFSR, where $\theta : F \to F$, with entries in any field extension $E$ of $F$, of length $\ell$ for $L$ sequences of length $N$ $s^{(0)}, \dots, s^{(L-1)}$ whose entries belong to a subfield $K$ of $F$ such that $\theta$ restricted to $K$ is a field automorphism $\tau : K \to K$.
		Consider any $K$-basis $\mathcal B = \{ b_i ~|~ i \in I\}$ of $E$, where $I$ is an arbitrary set of indices, such that $b_\iota = 1$ for some $\iota \in I$ and the projection to the $\iota$-th coordinate $\p {\mathcal B} \iota : E \to K$, $\p {\mathcal B} \iota \left(\sum_{i \in I} c_i b_i \right) = c_\iota$.
		Then, $\p {\mathcal B} \iota (v_0^{-1}\vv) = (\p {\mathcal B} \iota (1), \p {\mathcal B} \iota (v_0^{-1}v_1), \dots, \p {\mathcal B} \iota (v_0^{-1}v_\ell))$, whose entries are in $K$, is a $\tau$-SFSR for the sequences.
	\end{lemma}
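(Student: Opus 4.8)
The plan is to reduce the statement to a one-line verification of the defining identity \eqref{eq:SFSR} for the projected vector, using only that $\p{\mathcal B}{\iota}$ is $K$-linear, that $\theta$ restricts to an automorphism of $K$, and that the SFSR property is insensitive to rescaling.

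I would begin with two trivial preliminaries. First, multiplying \eqref{eq:SFSR} by the nonzero scalar $v_0^{-1} \in E$ preserves the identity, so the normalized vector $v_0^{-1}\vv = (1, v_0^{-1}v_1, \dots, v_0^{-1}v_\ell)$ is again a $\theta$-SFSR of length $\ell$ for each $s^{(j)}$, as already noted after Definition \ref{defn:SFSR}. Second, since $b_\iota = 1$, the $\iota$-th coordinate of $1 \in E$ in the basis $\mathcal B$ is $1$, so $\p{\mathcal B}{\iota}(1) = 1 \ne 0$; thus $\p{\mathcal B}{\iota}(v_0^{-1}\vv)$ has nonzero first entry and is an admissible SFSR in the sense of Definition \ref{defn:SFSR}, with all entries in $K$ by the definition of $\p{\mathcal B}{\iota}$.

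Then I would fix $j \in \{0, \dots, L-1\}$ and $n$ with $\ell \le n < N$ and apply $\p{\mathcal B}{\iota}$ to the identity $\sum_{i=0}^\ell v_0^{-1}v_i\, \theta^i(s^{(j)}_{n-i}) = 0$. The key observation is that each sequence entry $s^{(j)}_{n-i}$ lies in $K$ and $\theta|_K = \tau$, so $\theta^i(s^{(j)}_{n-i}) = \tau^i(s^{(j)}_{n-i}) \in K$; and since $\p{\mathcal B}{\iota} : E \to K$ is $K$-linear (being a coordinate projection relative to a $K$-basis of the $K$-vector space $E$), each such $K$-scalar may be pulled out of the projection. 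This yields
\[
0 \;=\; \p{\mathcal B}{\iota}\!\left(\sum_{i=0}^\ell v_0^{-1}v_i\, \theta^i(s^{(j)}_{n-i})\right) \;=\; \sum_{i=0}^\ell \p{\mathcal B}{\iota}(v_0^{-1}v_i)\, \tau^i(s^{(j)}_{n-i}),
\]
which is precisely condition \eqref{eq:SFSR} for the vector $\p{\mathcal B}{\iota}(v_0^{-1}\vv)$ with automorphism $\tau$ and sequence $s^{(j)}$. As this holds for every $j$ and every $n$ in range, $\p{\mathcal B}{\iota}(v_0^{-1}\vv)$ is a $\tau$-SFSR of length $\ell$ for all $L$ sequences, which is the claim.

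No step is genuinely hard: the single point demanding care is that $\theta$ sends $K$ into $K$, so that applying $\theta^i$ to a sequence entry agrees with applying $\tau^i$ and the result stays in $K$. That is exactly the hypothesis that $\theta|_K$ is an automorphism of $K$, and it is what lets the transformed sequence entries be treated as $K$-scalars commuting past the $K$-linear map $\p{\mathcal B}{\iota}$; everything else is the linearity of a coordinate projection together with the rescaling invariance of the SFSR condition.
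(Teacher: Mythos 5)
Your proof is correct and follows essentially the same route as the paper's: normalize by $v_0^{-1}$, apply the $K$-linear projection $\p{\mathcal B}{\iota}$ to each identity in \eqref{eq:SFSR}, pull the $K$-scalars $\tau^i(\sq{n-i}{j})$ out of the projection, and note that the first entry projects to $\p{\mathcal B}{\iota}(1)=1\ne 0$. You merely spell out the $K$-linearity step that the paper leaves implicit.
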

	\begin{proof}
		The result of applying $\p {\mathcal B} \iota $ to both sides of each one of the identities
		$0 = \sum_{i=0}^\ell v_0^{-1}v_i \tau^i\left(\sq {n-i} j \right)$ for each $\ell \le n < N$ and each $0 \le j < L$ meets the definition of $\p {\mathcal B} \iota (v_0^{-1}\vv)$ being a $\tau$-SFSR for the sequences. Note that the first entry of $\p {\mathcal B} \iota (v_0^{-1}\vv)$ is nonzero, since it is $\p {\mathcal B} \iota (1) = 1$.
	\end{proof}
	
	Furthermore, the length $\ell$ and the uniqueness (up to a proportionality factor) of the solutions $\vv$ of Problem \ref{problem:SFSRsynthesis} do not depend on the considered field $F$ as long as it contains the elements of the sequences:

	\begin{prop}\label{prop:SFSRsubfield}
		Let $\vv = (v_0, v_1, \dots, v_\ell) \in K^{\ell + 1}$ be a $\tau$-SFSR for the $L$ sequences of length $N$ $s^{(0)}, \dots, s^{(L-1)}$ whose entries belong to $K$, for some automorphism $\tau : K \to K$. Let $F$ be a field extension of $K$.
		Then, $\vv$ is the only shortest $\tau$-SFSR with entries in $K$ for the given sequences if and only if $\vv$ is the only shortest $\tau$-SFSR with entries in $F$ for the sequences.
	\end{prop}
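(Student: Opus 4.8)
The plan is to reduce the statement to the standard fact that the solution space of a homogeneous linear system does not change dimension under a field extension. For each integer $m \ge 0$, let $(\star_m)$ denote the homogeneous linear system in the unknowns $v_0, \dots, v_m$ consisting of the equations $\sum_{i=0}^{m} v_i \tau^i(\sq{n-i}{j}) = 0$ for all $m \le n < N$ and all $0 \le j < L$. Its coefficients $\tau^i(\sq{n-i}{j})$ all lie in $K$ (note that $\tau$ is applied only to sequence entries, which lie in $K$, so no extension of $\tau$ is needed to make sense of a $\tau$-SFSR with entries in $F$). Hence, for every field $E$ with $K \subseteq E$, the system has a solution space $W_m^E \subseteq E^{m+1}$, and a $\tau$-SFSR of length $m$ with entries in $E$ is exactly an element of $W_m^E \setminus H$, where $H = \{ (v_0, \dots, v_m) \in E^{m+1} : v_0 = 0 \}$. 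Since $(\star_m)$ imposes no equations at all once $m \ge N$, a $\tau$-SFSR for the sequences always exists of length at most $N$, so $\ell_0^E := \min\{ m \ge 0 : W_m^E \not\subseteq H \}$ is a well-defined nonnegative integer, equal to the minimal length of a $\tau$-SFSR with entries in $E$.

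Next I would record two routine base-change facts. Because the coefficient matrix of $(\star_m)$ has all its entries in $K$, its rank is the same whether computed over $K$ or over $F$; consequently $\dim_K W_m^K = \dim_F W_m^F$, and any $K$-basis of $W_m^K$ is also an $F$-basis of $W_m^F$. In particular $W_m^K \subseteq H$ if and only if $W_m^F \subseteq H$, whence $\ell_0^K = \ell_0^F$; denote this common value by $\ell_0$. In particular $\vv \in K^{\ell+1}$ lies in $W_\ell^K \subseteq W_\ell^F$ and, since $v_0 \ne 0$, in $W_\ell^E \setminus H$ for both $E = K$ and $E = F$.

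The core of the argument is the following characterization, stated for an arbitrary field $E \in \{ K, F \}$: the fixed vector $\vv$ is the only shortest $\tau$-SFSR with entries in $E$ — in the sense of the proportionality equivalence recalled in Section \ref{section:sfsr} — if and only if $\ell = \ell_0$ and $\dim_E W_\ell^E = 1$. For the ``if'' direction, $\dim_E W_\ell^E = 1$ forces $W_\ell^E$ to be one-dimensional, which by $\ell = \ell_0$ is not contained in $H$; hence every nonzero vector of $W_\ell^E$ has nonzero first coordinate and is a scalar multiple of every other, and as $\vv \in W_\ell^E \setminus H$ it follows that $\vv$ is the only $\tau$-SFSR of length $\ell_0$ up to equivalence. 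For the converse I would argue contrapositively: if $\ell \ne \ell_0$ then $\vv$, having length $\ell$, is not even shortest; and if $\dim_E W_\ell^E \ge 2$, then picking any $w \in W_\ell^E \setminus H$ (which exists since $\ell = \ell_0$) and any nonzero $u \in W_\ell^E \cap H$ (which exists since this intersection has dimension $\dim_E W_\ell^E - 1 \ge 1$), the vectors $w$ and $w + u$ are both $\tau$-SFSRs of length $\ell$ and are not proportional, so no shortest $\tau$-SFSR is unique up to equivalence.

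Finally I would put the pieces together: by the base-change facts, $\ell_0$ does not depend on whether we work over $K$ or over $F$, and $\dim_K W_\ell^K = \dim_F W_\ell^F$; hence the characterization of the previous paragraph holds for $E = K$ exactly when it holds for $E = F$, which is precisely the claimed equivalence. I expect the base-change facts to be entirely routine; the only points requiring care are organizational — keeping the proportionality equivalence straight, and remembering that the length of an SFSR is governed by the position of its (nonzero) first entry, so that padding a shorter SFSR with trailing zeros lengthens it and $\ell_0$ really is the minimum described above.
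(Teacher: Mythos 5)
Your proof is correct, but it takes a genuinely different route from the paper. The paper proves only the nontrivial direction ($K$-uniqueness implies $F$-uniqueness) by invoking Lemma \ref{lemma:SFSRsubfield}: it projects a candidate $F$-coefficient SFSR onto the coordinate of $1$ in a $K$-basis of $F$, deduces from $K$-uniqueness that the projection equals $\vv$ for \emph{every} such basis, and then uses a two-bases trick (one basis containing $(v_0')^{-1}v_i'$, another containing $1+(v_0')^{-1}v_i'$) to force each normalized entry to lie in $K$. You instead observe that the $\tau$-SFSRs of length $m$ with entries in $E$ are exactly $W_m^E\setminus\{v_0=0\}$ for a homogeneous linear system whose coefficients lie in $K$, characterize ``unique shortest'' as ``$\ell=\ell_0$ and $\dim_E W_\ell^E=1$'', and transport both conditions across the extension via invariance of rank under base change. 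Your argument is symmetric in the two directions, yields a checkable criterion (the solution space is one-dimensional and not contained in the hyperplane $v_0=0$), and sidesteps the need to choose a $K$-basis of $F$ — so it does not require the axiom of choice for infinite extensions, a caveat the paper explicitly attaches to Lemma \ref{lemma:SFSRsubfield}. The paper's approach, in exchange, produces an explicit projection map turning any $F$-coefficient SFSR into a $K$-coefficient one of the same length, which is of independent use. All the steps you flag as routine (rank invariance, $W_m^K\subseteq H\Leftrightarrow W_m^F\subseteq H$, and the dimension-one characterization including the converse via a nonzero $u\in W_\ell^E\cap H$) do go through as stated.
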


	By the only shortest $\tau$-SFSR with entries in a field $L$ for the sequences, we mean that the only elements in $L^{\ell + 1}$ which are a $\tau$-SFSR for the sequences are the nonzero $L$-multiples of $\vv$ and the sequence admits no shorter $\tau$-SFSRs with entries in $L$.

	\begin{proof}
		In order to prove the nontrivial implication, assume $\vv = (v_0, \dots, v_\ell)$, with $v_0 = 1$, is the only shortest $\tau$-SFSR with entries in $K$ for the sequences. Consider any $K$-basis $\mathcal B$ of $F$ with $1$ as its first entry and any $\vv' = (v_0', \dots, v_{\ell'}') \in F^{\ell' + 1}$ which is a $\tau$-SFSR of length $\ell' \le \ell$ for the sequences. By Lemma \ref{lemma:SFSRsubfield}, $\p {\mathcal B} 1 ((v_0')^{-1} \vv')$, whose first entry is $1$, is a $\tau$-SFSR in $K^{\ell' + 1}$ for the sequences. By the hypothesis on $\vv$, $\ell' = \ell$ and $\p {\mathcal B} 1 ((v_0')^{-1} \vv') = \vv$; that is, $\p {\mathcal B} 1 ((v_0')^{-1} v_i') = v_i$ for each $0 \le i \le \ell$ for every $K$-basis $\mathcal B$ of $F$ whose first entry is $1$.

		We now have to show that $(v_0')^{-1} \vv' = \vv$ and therefore $\vv'$ is an $F$-multiple of $\vv$.
		If $(v_0')^{-1} v_i'$ was not an element in $K$ for some $i$, then there would exist at least one basis of the form $\mathcal B = \{ 1, (v_0')^{-1} v_i, \dots \}$, which results in $v_i = \p {\mathcal B} 1 ((v_0')^{-1} v_i') = 0$, as well as another basis of the form $\mathcal B' = \{ 1, 1 + (v_0')^{-1} v_i', \dots \}$, and therefore $0 = v_i = \p {\mathcal B'} 1 ((v_0')^{-1} v_i') = 1$. This contradiction shows that $(v_0')^{-1} v_i' \in K$ and therefore $(v_0')^{-1} v_i' = \p {\mathcal B} 1 ((v_0')^{-1} v_i') = v_i$ for any basis $\mathcal B$ whose first entry is $1$.
	\end{proof}

\section{Error values and error locators}
\label{section:errorvalueslocators}

The usual approach for syndrome-based error-correcting decoding in the context of Gabidulin codes (see e.g. \cite[Sections V.A and VI.A3]{SKK08} and \cite[Section VII]{SJB11}) as well as skew codes and generalizations thereof (see \cite{GLN17s,GLN17pgz,GNS21}) involves, at least implicitly, the concepts of error values and error locators, as well as the equivalence between the error-correcting problem and the problem of finding those error values and locators.
We shall describe this in our context of some $\sigma : F \to F$, some $n \le \sord$ and some defining set $\T \hv \sigma(\mathcal C)$ for some code $\mathcal C \subseteq F^n$.
We will mostly follow \cite{SKK08}, although the parallelism among the cited works is apparent. We are considering a reasonably general scenario, where the fields may or may not be finite and the defining set can be any set; we are not yet requiring the defining set to satisfy Theorem \ref{thm:ht} or Theorem \ref{thm:roos}. Further generalizations (e.g. for error-erasure decoding or for list decoding) should also be immediate from the following description.

For any $d \in \ZZ$, we define the $d$-th syndrome (with respect to \( \hv \)) of any $\vv \in F^n$ as $\vv \sigma^d(\hv)^T$.
Consider any decomposition of some received $\yv \in F^n$ as $\yv = \cv + \ev$ where $\cv \in \mathcal C$ and $\ev = (e_0, \dots, e_{n-1}) \in F^n$.
If $d \in \T \hv \sigma(\mathcal C)$, then the $d$-th syndromes of $\yv$ (which we shall denote by $S_d$) and $\ev$ are equal, as the one of $\cv$ is $0$ by Definition \ref{defn:ds}:
\begin{equation}\label{eq:sdeh}
	S_d = \yv \sigma^d(\hv)^T = \ev \sigma^d(\hv)^T \qquad \text{for all } d \in \T \hv \sigma(\mathcal C).
\end{equation}

This means that the $d$-th syndrome of $\ev$, if $d \in \T \hv \sigma(\mathcal C)$, can be computed from $\yv$.

Now, consider the $F^\sigma$-vector space spanned by $\{ e_0, \dots, e_{n-1}\}$. Let $\nu \le n$ be its $F^\sigma$-dimension, which is also $\rkw(\ev)$, and let $\epv = (\ep_1, \dots, \ep_\nu) \in F^\nu$ be any ordered basis for the space. Then, there exists a $\nu$-rank matrix $B \in M_{\nu \times n}(F^\sigma)$ such that
\begin{equation}\label{eq:epsilon}
	\ev = \epv B.
\end{equation}
Plugging this into \eqref{eq:sdeh},
$S_d = \ev \sigma^d(\hv)^T = \epv B \sigma^d(\hv)^T = \epv \sigma^d(\etav)^T$
for all $d \in \T \hv \sigma(\mathcal C)$, where
\begin{equation}\label{eq:eta}
	\etav^T = (\eta_1, \dots, \eta_\nu)^T = B \hv^T.
\end{equation}

\begin{defn}
	Following \cite{SKK08}, we will refer to the entries of $\epv$ and $\etav$ respectively as \emph{error values} and \emph{error locators} (associated to $\epv$ and $\etav$ respectively, and ultimately to the error vector $\ev$). These are not unique, since $\epv$ is any basis of a space.
\end{defn}

We used the fact that the entries of $B$ are in $F^\sigma$ and therefore $B \sigma^d(\hv)^T = \sigma^d(B) \sigma^d(\hv)^T = \sigma^d(B\hv^T) = \sigma^d(\etav)^T$. From \eqref{eq:eta}, $B$ can be seen as the matrix whose $i$-th row contains the coordinates of the $i$-th entry of $\etav$, $\eta_i$, with respect to $\hv$, which is a $F^\sigma$-basis for some $n$-dimensional subspace of $F$ (equal to $F$ if and only if $n = \sord$). Since the $\nu$ rows of $B$ are linearly independent, the error locators are $F^\sigma$-linearly independent. 

Hence, the syndromes, error values and error locators satisfy the equations
\begin{equation} \label{eq:syndromeeq}
	S_d = \epv \sigma^d(\etav)^T = \sum_{k=1}^\nu \ep_k \sigma^d(\eta_k) \qquad \text{for all } d \in \T \hv \sigma(\mathcal C),
\end{equation}
or equivalently, applying $\sigma^{-d}$ to both sides,
\begin{equation} \label{eq:syndromeeq2}
	\sigma^{-d}\left(S_d\right) = \etav \sigma^{-d}(\epv)^T = \sum_{k=1}^\nu \eta_k \sigma^{-d}(\ep_k) \qquad \text{for all } d \in \T \hv \sigma(\mathcal C).
\end{equation}

\begin{remark}\label{remark:decompositionfromsolution}
	From each solution $\epv, \etav$ for \eqref{eq:syndromeeq} or the equivalent \eqref{eq:syndromeeq2} such that the error locators are in the $F^\sigma$-span of $\hv$, there is a single matrix $B \in M_{\nu \times n}(F^\sigma)$ satisfying \eqref{eq:eta}, and then \eqref{eq:epsilon} returns some $\ev$ such that, by \eqref{eq:sdeh}, the $d$-th syndromes of $\cv = \yv - \ev$ are zero for each $d \in \T \hv \sigma(\mathcal C)$ and therefore $\cv \in C_{(\sigma, \hv, \T \hv \sigma(\mathcal C))}$.
\end{remark}

The following observation, which leads to the possibility of modifying solutions into other, possibly shorter solutions, is straightforward from \eqref{eq:syndromeeq} (or \eqref{eq:syndromeeq2}).

\begin{lemma}\label{lemma:epetalc}
	Let $\epv, \etav \in F^\nu$ constitute a solution for \eqref{eq:syndromeeq} or the equivalent \eqref{eq:syndromeeq2} and let $M$ be any invertible matrix in $M_{\nu \times \nu}(F^\sigma)$.
	Then $\epv M, \etav (M^{-1})^T$ is another solution.
\end{lemma}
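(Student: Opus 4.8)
The plan is to verify \eqref{eq:syndromeeq} directly for the modified pair; the whole content is that powers of $\sigma$ commute with multiplication by a matrix over the fixed field. First I would record the elementary observation that, for any matrix $N \in M_{\nu\times\nu}(F^\sigma)$ and any row vector $\ww \in F^\nu$, one has $\sigma^d(\ww N) = \sigma^d(\ww)\,\sigma^d(N) = \sigma^d(\ww)\,N$ for every $d \in \ZZ$, since $\sigma^d$ acts entrywise and fixes $N$. I would also note that $(M^{-1})^T$ again has entries in $F^\sigma$, because $F^\sigma$ is a field and $M$ is invertible over it, so $M^{-1}$ already exists over $F^\sigma$.

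Next I would set $\epv' = \epv M$ and $\etav' = \etav(M^{-1})^T$ and compute, for each $d \in \T\hv\sigma(\mathcal C)$,
\[
\epv'\,\sigma^d(\etav')^T = \epv M\,\bigl(\sigma^d(\etav)\,(M^{-1})^T\bigr)^T = \epv M\, M^{-1}\,\sigma^d(\etav)^T = \epv\,\sigma^d(\etav)^T = S_d,
\]
where the first equality uses the observation above with $N=(M^{-1})^T$ and $\ww=\etav$ together with $(AB)^T=B^TA^T$, and the last uses that $(\epv,\etav)$ solves \eqref{eq:syndromeeq}. Since \eqref{eq:syndromeeq2} is obtained from \eqref{eq:syndromeeq} by applying $\sigma^{-d}$ to both sides, the same computation shows that $(\epv',\etav')$ also solves \eqref{eq:syndromeeq2}, which finishes the argument.

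There is no genuine obstacle here; the only points to watch are the placement of transposes and the remark that $M$ being invertible over $F^\sigma$ forces $M^{-1}$ to lie over $F^\sigma$ as well. One may additionally observe — though this is not needed for the statement as phrased — that, as $M$ is invertible over $F^\sigma$, the entries of $\epv'$ are $F^\sigma$-linearly independent exactly when those of $\epv$ are, the $F^\sigma$-spans of the entries of $\etav$ and $\etav'$ coincide, and $\nu$ is unchanged, so the new pair still feeds into Remark \ref{remark:decompositionfromsolution} and in fact recovers the same error vector $\ev$ (the associated matrix changes from $B$ to $M^{-1}B$, and $\epv M\cdot M^{-1}B = \epv B$).
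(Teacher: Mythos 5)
Your verification is correct and matches the paper's intent exactly: the paper states this lemma without a written proof, calling it straightforward from \eqref{eq:syndromeeq}, and the computation it has in mind is precisely the one you carry out (using that $\sigma^d$ fixes matrices over $F^\sigma$ and that $M\,M^{-1}$ cancels after transposition). Your closing remarks on linear independence and on recovering the same $\ev$ via $B \mapsto M^{-1}B$ are also consistent with how the paper uses the lemma afterwards.
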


That is, if elementary column operations over $F^\sigma$ are performed on $\epv$ and the elemantary row operations whose matrices are the inverse of the previous column operations are applied to $\etav^T$ in the same order, we get another solution.
In particular, if $\epv, \etav$ was a solution for the above equations such that there existed some $F^\sigma$-linear dependence among the corresponding error values or the error locators, a shorter solution can be constructed.
For example, if $\ep_1 = \ep_2 - \ep_3$, then $\sum_{k=1}^3 \ep_k \sigma^d(\eta_k) = (\ep_2 - \ep_3) \sigma^d(\eta_1) + \ep_2 \sigma^d(\eta_2) + \ep_3 \sigma^d(\eta_3) = \ep_2 \sigma^d(\eta_1 + \eta_2) + \ep_3 \sigma^d(-\eta_1 + \eta_3)$, so $\ep_1$ can be removed from $\epv$ and the segment $(\eta_1, \eta_2, \eta_3)$ of $\etav$ can be replaced by $(\eta_1 + \eta_2, -\eta_1 + \eta_3)$.

Starting from any solution such that the error locators are in the $F^\sigma$-span of $\hv$, this removal of $F^\sigma$-linear dependent entries can be done until we arrive at what we are defining as \textit{relevant solution}.

\begin{defn}\label{defn:relevantsolution}
	In the context of Equation \eqref{eq:syndromeeq} or the equivalent Equation \eqref{eq:syndromeeq2}, we define a \emph{relevant solution of length $\nu \ge 0$} as a pair $\epv, \etav$ solving those equations such that $\epv = (\ep_1, \dots, \ep_\nu)$ and $\etav = (\eta_1, \dots, \eta_\nu)$ are $F^\sigma$-linearly independent and the error locators are in the $F^\sigma$-span of $\hv$.
\end{defn}

The following is immediate from Definition \ref{defn:relevantsolution} and Lemma \ref{lemma:epetalc}.

\begin{lemma}\label{lemma:minrelevant}
	The shortest $\nu$ such that there are solutions $\epv, \etav$ for \eqref{eq:syndromeeq} or \eqref{eq:syndromeeq2} such that $\eta_1, \dots, \eta_\nu$ are in the $F^\sigma$-span of $\hv$ only admits relevant solutions.
\end{lemma}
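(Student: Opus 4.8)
The plan is to argue by contradiction, leveraging Lemma \ref{lemma:epetalc} to "shorten" any non-relevant solution. Among all solutions of \eqref{eq:syndromeeq} (equivalently \eqref{eq:syndromeeq2}) whose error locators lie in the $F^\sigma$-span of $\hv$, let $\nu$ be the minimal length (this set of solutions is nonempty: writing $\yv$ itself in a basis of the $F^\sigma$-span of its coordinates and setting $\etav^T = B\hv^T$ as in \eqref{eq:eta} produces one, so $\nu$ is well defined). Fix such a solution $\epv = (\ep_1, \dots, \ep_\nu)$, $\etav = (\eta_1, \dots, \eta_\nu)$ and suppose it is not relevant. By Definition \ref{defn:relevantsolution} this forces either $\ep_1, \dots, \ep_\nu$ or $\eta_1, \dots, \eta_\nu$ to be $F^\sigma$-linearly dependent. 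I will produce from it a solution of length $\nu - 1$ with error locators still in the $F^\sigma$-span of $\hv$, contradicting minimality; this shows every solution of minimal length is relevant.

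For the reduction I invoke Lemma \ref{lemma:epetalc}: for every invertible $M \in M_{\nu\times\nu}(F^\sigma)$, the pair $\epv M$, $\etav(M^{-1})^T$ is again a solution, and since $M^{-1}$ has entries in $F^\sigma$, each entry of $\etav(M^{-1})^T$ is an $F^\sigma$-linear combination of $\eta_1, \dots, \eta_\nu$, hence still in the $F^\sigma$-span of $\hv$ (an $F^\sigma$-subspace). If $\ep_1, \dots, \ep_\nu$ are $F^\sigma$-dependent, some $\ep_j$ lies in the $F^\sigma$-span of the others, so there is an $M \in M_{\nu\times\nu}(F^\sigma)$ realizing the column operations that clear the $j$-th coordinate of $\epv$ to $0$; such an $M$ is a product of elementary matrices over $F^\sigma$, hence invertible. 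In the resulting solution the $j$-th entry of $\epv M$ is zero, so its term in \eqref{eq:syndromeeq} vanishes, and deleting the $j$-th coordinate from both $\epv M$ and $\etav(M^{-1})^T$ leaves a genuine solution of length $\nu - 1$ whose locators are still in the $F^\sigma$-span of $\hv$. If instead $\eta_1, \dots, \eta_\nu$ are $F^\sigma$-dependent, I do the symmetric thing: choose an invertible $N \in M_{\nu\times\nu}(F^\sigma)$ whose column operations clear one coordinate of $\etav$, and apply Lemma \ref{lemma:epetalc} with $M = (N^{-1})^T$, so that $\etav(M^{-1})^T = \etav N$ has a zero entry; since $\sigma^d(0) = 0$, the corresponding term in \eqref{eq:syndromeeq} again vanishes, and the same deletion yields a length-$(\nu-1)$ solution with locators still in the $F^\sigma$-span of $\hv$.

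The only points requiring care are entirely routine: that the transforming matrix is invertible over $F^\sigma$ (it is a product of elementary matrices over $F^\sigma$), that deleting one coordinate preserves \eqref{eq:syndromeeq} once that coordinate's $\ep$-entry or $\eta$-entry is zero, and that the truncated locators remain in the $F^\sigma$-span of $\hv$. There is no substantive obstacle: the statement is essentially a repackaging of Lemma \ref{lemma:epetalc} together with the observation, already made in the paragraph preceding Definition \ref{defn:relevantsolution}, that $F^\sigma$-dependences among the error values or among the error locators can always be eliminated to obtain a strictly shorter solution.
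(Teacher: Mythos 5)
Your proof is correct and follows exactly the route the paper intends: the paper states the lemma as ``immediate'' from Definition \ref{defn:relevantsolution} and Lemma \ref{lemma:epetalc}, relying on the discussion just before Definition \ref{defn:relevantsolution} where $F^\sigma$-linear dependences among the error values or locators are cleared by an invertible matrix over $F^\sigma$ and the resulting zero entry is deleted to obtain a strictly shorter solution whose locators remain in the $F^\sigma$-span of $\hv$. Your write-up is simply a careful elaboration of that same argument, with the minor (correct) extra observations that the solution set is nonempty and that the two dependence cases are handled symmetrically via the transpose-inverse in Lemma \ref{lemma:epetalc}.
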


For each relevant solution $\epv, \etav$, when applying Remark \ref{remark:decompositionfromsolution}, the rank of the matrix $B$ is $\nu$, and so is the rank weight of $\ev$. This results in the following equivalence:

\begin{prop}\label{prop:solutionequivalence}
	For every $\nu \in \NN$ and every $\yv \in F^n$, finding a decomposition $\yv = \cv + \ev$ where $\cv \in C_{(\sigma, \hv, \T \hv \sigma(\mathcal C))}$ (equivalently, $\cv \sigma^d(\hv)^T = 0$ for all $d \in \T \hv \sigma(\mathcal C)$) and $\rkw(\ev) = \nu$ is equivalent to finding a relevant solution $\epv, \etav$ of length $\nu$.
\end{prop}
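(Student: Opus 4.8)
The plan is to prove the two directions of the asserted equivalence separately, exploiting the bookkeeping already set up in this section. In one direction, suppose we are given a relevant solution $\epv, \etav$ of length $\nu$; I would apply Remark \ref{remark:decompositionfromsolution} directly. It produces a unique matrix $B \in M_{\nu \times n}(F^\sigma)$ with $\etav^T = B\hv^T$ (this uses that the error locators lie in the $F^\sigma$-span of $\hv$, which is part of the definition of a relevant solution), and then $\ev = \epv B$ together with $\cv = \yv - \ev$ satisfies $\cv \sigma^d(\hv)^T = 0$ for all $d \in \T \hv \sigma(\mathcal C)$, i.e. $\cv \in C_{(\sigma, \hv, \T \hv \sigma(\mathcal C))}$. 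It then remains to check that $\rkw(\ev) = \nu$. Here the key observation, already noted just before the statement, is that since $\etav = (\eta_1, \dots, \eta_\nu)$ are $F^\sigma$-linearly independent and the $i$-th row of $B$ records the coordinates of $\eta_i$ with respect to the $F^\sigma$-basis $\hv$ of an $n$-dimensional subspace of $F$, the matrix $B$ has rank exactly $\nu$; and since $\epv$ is likewise $F^\sigma$-linearly independent, $\ev = \epv B$ has rank weight equal to $\mathrm{rk}(B) = \nu$. So $\yv = \cv + \ev$ is a decomposition of the required type.

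In the other direction, suppose we are given a decomposition $\yv = \cv + \ev$ with $\cv \in C_{(\sigma, \hv, \T \hv \sigma(\mathcal C))}$ and $\rkw(\ev) = \nu$. This is exactly the situation analyzed at the beginning of the section: write $\ev = (e_0, \dots, e_{n-1})$, let $\epv = (\ep_1, \dots, \ep_\nu)$ be any ordered $F^\sigma$-basis of the span of the $e_j$, obtain $B \in M_{\nu \times n}(F^\sigma)$ of rank $\nu$ with $\ev = \epv B$ as in \eqref{eq:epsilon}, and set $\etav^T = B\hv^T$ as in \eqref{eq:eta}. Equation \eqref{eq:sdeh} holds because $\cv \in C_{(\sigma, \hv, \T \hv \sigma(\mathcal C))}$, and then the computation $S_d = \ev\sigma^d(\hv)^T = \epv B\sigma^d(\hv)^T = \epv\sigma^d(\etav)^T$ shows that $\epv, \etav$ solve \eqref{eq:syndromeeq} (equivalently \eqref{eq:syndromeeq2}). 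By construction $\epv$ is $F^\sigma$-linearly independent; the error locators $\eta_i$ lie in the $F^\sigma$-span of $\hv$ by definition of $\etav^T = B\hv^T$; and, as recalled in the text, the $F^\sigma$-linear independence of the $\nu$ rows of $B$ forces the $\eta_i$ to be $F^\sigma$-linearly independent. Hence $\epv, \etav$ is a relevant solution of length $\nu$.

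I do not expect a genuine obstacle here — both directions are essentially an unpacking of \eqref{eq:epsilon}, \eqref{eq:eta}, \eqref{eq:sdeh} and Remark \ref{remark:decompositionfromsolution}. The one point requiring a little care is the rank count: one must be explicit that $\mathrm{rk}(B) = \nu$ in \emph{both} directions (from the independence of the $\eta_i$ together with $\hv$ being a basis of an $n$-dimensional space, when going from a relevant solution to a decomposition; and from the construction $\ev = \epv B$ with $\epv$ independent, when going the other way), so that the length $\nu$ of the relevant solution matches the rank weight of $\ev$ on the nose rather than merely bounding it. Everything else is immediate, so the proof should be short.
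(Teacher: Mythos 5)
Your proposal is correct and follows essentially the same route as the paper: the paper gives no separate proof but derives the proposition from the construction of $\epv$, $\etav$, $B$ at the start of Section \ref{section:errorvalueslocators} (one direction) together with Remark \ref{remark:decompositionfromsolution} and the observation, stated immediately before the proposition, that for a relevant solution the matrix $B$ has rank $\nu$ and hence so does $\ev$ (the other direction). Your explicit attention to the rank count in both directions is exactly the point the paper compresses into that one sentence.
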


Consequently, the nearest-neighbor error-correcting problem (with respect to the code $C_{(\sigma, \hv, \T \hv \sigma(\mathcal C))} \supseteq \mathcal C$) and the problem of finding the only decomposition, if any, $\yv = \cv + \ev$ such that the weight of $\ev$ is not greater than a given known error-correcting capacity $\tau$ can analogously be stated as finding a solution $\epv, \etav$, where the error locators are in the $F^\sigma$-span of $\hv$, of, respectively, minimal length $\nu$ or the minimal if any length $\nu \le \tau$. By Lemma \ref{lemma:minrelevant}, the length minimality requirement implies that the solution will be relevant.

Furthermore, each one of \eqref{eq:syndromeeq} and the equivalent \eqref{eq:syndromeeq2} is a linear equation system with the entries of $\epv$ and, respectively, $\etav$ as the unknowns. This suggests the possibility of getting $\epv$ from $\etav$ or the other way around, once that one of them which is part of a relevant solution is known, by solving the corresponding system.

\begin{lemma}\label{lemma:epsilonoreta}
	Let $\epv, \etav \in F^\nu$ constitute a solution for \eqref{eq:syndromeeq} and the equivalent \eqref{eq:syndromeeq2}.
	If $\rkw(\etav) = \nu$ and the code $C_{(\sigma, \etav, \T \hv \sigma(\mathcal C))}$ has dimension zero, then the only solution for $\epv$ in \eqref{eq:syndromeeq} when choosing $\etav$ as in the solution above is $\epv$.
	If $\rkw(\epv) = \nu$ and the code $C_{(\sigma^{-1}, \epv, \T \hv \sigma(\mathcal C))}$ has dimension zero, then the only solution for $\etav$ in \eqref{eq:syndromeeq2} when choosing $\epv$ as in the solution above is $\etav$.
\end{lemma}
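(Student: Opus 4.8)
The plan is to treat each of the two claims symmetrically, working with \eqref{eq:syndromeeq} for the first and with \eqref{eq:syndromeeq2} for the second. For the first claim, suppose $\epv'$ is another solution for the unknowns in \eqref{eq:syndromeeq} when $\etav$ is held fixed as in the given solution; I would consider the difference $\epv - \epv'$. Since \eqref{eq:syndromeeq} is $F$-linear in the error-value unknowns, $\epv - \epv'$ satisfies $\sum_{k=1}^\nu (\ep_k - \ep_k')\sigma^d(\eta_k) = 0$ for all $d \in \T\hv\sigma(\mathcal C)$; rewriting this as $(\epv - \epv')\,\sigma^d(\etav)^T = 0$ for all $d \in \T\hv\sigma(\mathcal C)$ exhibits $\epv - \epv'$ as an element of the left kernel of $H_{(\sigma, \etav, \T\hv\sigma(\mathcal C))}$, which by Definition \ref{defn:ccode} is exactly the code $C_{(\sigma, \etav, \T\hv\sigma(\mathcal C))}$.

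The hypotheses are then exactly what is needed to conclude. The condition $\rkw(\etav) = \nu$ guarantees the entries of $\etav$ are $F^\sigma$-linearly independent, so $\etav$ is a legitimate choice of $\hv$-type vector in Definition \ref{defn:ccode} (with length $\nu \le n \le \sord$), and the code $C_{(\sigma, \etav, \T\hv\sigma(\mathcal C))}$ is well defined. The assumption that this code has dimension zero forces $\epv - \epv' = \zerov$, hence $\epv' = \epv$. I would remark that the length condition $\nu \le \sord$ holds because $\nu \le n \le \sord$, which is what makes $\etav$ admissible for Definition \ref{defn:ccode}.

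The second claim is the mirror image: fixing $\epv$ in \eqref{eq:syndromeeq2}, any two solutions $\etav, \etav'$ have a difference satisfying $\sum_{k=1}^\nu (\eta_k - \eta_k')\sigma^{-d}(\ep_k) = 0$ for all $d \in \T\hv\sigma(\mathcal C)$, i.e. $(\etav - \etav')\,(\sigma^{-1})^{d}(\epv)^T = 0$ for all $d \in \T\hv\sigma(\mathcal C)$. Writing $\sigma^{-1}$ as the relevant automorphism, this says $\etav - \etav'$ lies in the left kernel of $H_{(\sigma^{-1}, \epv, \T\hv\sigma(\mathcal C))}$, which is $C_{(\sigma^{-1}, \epv, \T\hv\sigma(\mathcal C))}$; the hypotheses $\rkw(\epv) = \nu$ and $\dim C_{(\sigma^{-1}, \epv, \T\hv\sigma(\mathcal C))} = 0$ then force $\etav = \etav'$. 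One small point worth checking explicitly is that $\sigma^{-1}$ has the same order $\sord$ as $\sigma$ and the same fixed field, so that Definition \ref{defn:ccode} applies verbatim with $\sigma^{-1}$ in place of $\sigma$; this is immediate.

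The only genuine subtlety — and hence the main thing to get right rather than the main obstacle — is bookkeeping: making sure the defining set $\T\hv\sigma(\mathcal C)$ (which is tied to $\hv$ and $\sigma$) is being used correctly as the index set $T$ for codes built from $\etav$ and from $\epv$ with automorphism $\sigma$ or $\sigma^{-1}$, and that the rank conditions on $\etav$ (respectively $\epv$) are what license viewing them as valid defining vectors. Everything else is a one-line linearity argument plus the definitional identification of a left kernel with a $C_{(\cdot,\cdot,\cdot)}$-code. No computation beyond rewriting a sum as a matrix-vector product is needed.
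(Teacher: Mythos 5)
Your proposal is correct and follows essentially the same argument as the paper: subtract two solutions, observe that the difference lies in $C_{(\sigma, \etav, \T\hv\sigma(\mathcal C))}$ (respectively $C_{(\sigma^{-1}, \epv, \T\hv\sigma(\mathcal C))}$) by linearity of \eqref{eq:syndromeeq} (respectively \eqref{eq:syndromeeq2}), and conclude from the zero-dimension hypothesis. Your extra bookkeeping about $\rkw(\etav)=\nu$ licensing the use of Definition \ref{defn:ccode} and about $\sigma^{-1}$ having the same order and fixed field is a harmless elaboration of what the paper leaves implicit.
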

\begin{proof}
	By hypothesis, there is at least one solution.
	If $\epv$ and $\epv'$ are a solution for some given $\etav$, then by \eqref{eq:syndromeeq} the result of subtracting each element in $\epv'$ to the corresponding element in $\epv$ is a codeword in $C_{(\sigma, \etav, \T \hv \sigma(\mathcal C))}$. If this code only has the zero codeword, then necessarily $\epv = \epv'$. The same argument applies to $\etav$ and $\etav'$ being solutions for some given $\epv$, their difference element by element being a codeword in $C_{(\sigma^{-1}, \epv, \T \hv \sigma(\mathcal C))}$ by \eqref{eq:syndromeeq2}.
\end{proof}

If there is a known bound as in Theorem \ref{thm:ht} or Theorem \ref{thm:roos} with respect to $\T \hv \sigma(\mathcal C)$ greater than $\nu$, which is to be expected if we want to correct an error of rank weight $\nu$, then the codes considered in Lemma \ref{lemma:epsilonoreta} are guaranteed to only have the zero codeword.
This gives the following result.

\begin{prop}\label{prop:epsilonoreta}
	Let $\epv, \etav \in F^\nu$ constitute a solution for \eqref{eq:syndromeeq} and the equivalent \eqref{eq:syndromeeq2}, and assume that either Theorem \ref{thm:ht} or Theorem \ref{thm:roos} directly gives a lower bound greater than $\nu$ for $\mathcal C$ (or equivalently, for any code with defining set $\T \hv \sigma(\mathcal C)$).
	If $\rkw(\etav) = \nu$, then the only solution for $\epv$ in \eqref{eq:syndromeeq} when choosing $\etav$ as in the solution above is $\epv$.
	If $\rkw(\epv) = \nu$, then the only solution for $\etav$ in \eqref{eq:syndromeeq2} when choosing $\epv$ as in the solution above is $\etav$.
	In particular, if the solution $\epv, \etav$ is a relevant solution, then both of the above consequences are true.
\end{prop}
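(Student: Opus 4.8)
The plan is to deduce the statement from Lemma~\ref{lemma:epsilonoreta}: it suffices to check that, under the stated bound hypothesis, the two codes appearing there, namely $C_{(\sigma, \etav, \T \hv \sigma(\mathcal C))}$ and $C_{(\sigma^{-1}, \epv, \T \hv \sigma(\mathcal C))}$, both have dimension zero. Write $T_0 \subseteq \T \hv \sigma(\mathcal C)$ for the subset of the shape prescribed by Theorem~\ref{thm:ht} or Theorem~\ref{thm:roos} that witnesses the assumed bound $\delta + r > \nu$.

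For the first code I would argue as follows. Since $\rkw(\etav) = \nu$, the $\nu$ entries of $\etav$ are $F^\sigma$-linearly independent, and $\nu \le n \le \sord$, so $\etav$ is an admissible defining vector in the sense of Definition~\ref{defn:ccode}. By the observation following Definition~\ref{defn:ds}, $\T \hv \sigma(\mathcal C)$, and hence $T_0$, is contained in the $\etav$-defining set (with respect to $\sigma$) of $C_{(\sigma, \etav, \T \hv \sigma(\mathcal C))}$. Therefore Theorem~\ref{thm:ht} or Theorem~\ref{thm:roos}, applied with $\hv$ replaced by $\etav$ and the same $\sigma$, yields $\rkd\bigl(C_{(\sigma, \etav, \T \hv \sigma(\mathcal C))}\bigr) \ge \delta + r > \nu$. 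But this code has length $\nu$, and any nonzero vector of $F^\nu$ has rank weight between $1$ and $\nu$, so a nonzero code of length $\nu$ has minimum rank distance at most $\nu$; hence $C_{(\sigma, \etav, \T \hv \sigma(\mathcal C))}$ must be the zero code.

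For the second code I would run the very same argument with $\sigma^{-1}$ in place of $\sigma$. Here one records that $|\sigma^{-1}| = \sord$ and $F^{\sigma^{-1}} = F^\sigma$, so $\rkw(\epv) = \nu$ provides exactly the $F^{\sigma^{-1}}$-linear independence required of a defining vector, while the side conditions in Theorems~\ref{thm:ht} and~\ref{thm:roos} --- that $(\sord, t_1) = 1$, that $(\sord, t_2) < \delta$, respectively that $(\sord, t_2) = 1$ and $k_r - k_0 \le \delta + r - 2$ --- only constrain the integers $\sord, t_1, t_2, \delta, r, k_0, \dots, k_r$, so the same pattern $T_0$ meets the hypotheses of the relevant theorem for the automorphism $\sigma^{-1}$ and the vector $\epv$. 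As before, this forces $C_{(\sigma^{-1}, \epv, \T \hv \sigma(\mathcal C))}$ to have dimension zero. Lemma~\ref{lemma:epsilonoreta} then gives the two uniqueness statements. For the ``in particular'', note that if $\epv, \etav$ is a relevant solution then, by Definition~\ref{defn:relevantsolution}, both $\epv$ and $\etav$ consist of $\nu$ $F^\sigma$-linearly independent entries, i.e.\ $\rkw(\epv) = \rkw(\etav) = \nu$, so both conclusions apply simultaneously.

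The only step that demands any care, rather than a genuine obstacle, is the transfer of the Hartmann--Tzeng/Roos bound: first from $\hv$ to the alternative defining vectors $\etav$ and $\epv$, which is immediate once one notes (as the parenthetical in the statement already does) that the bound depends only on the defining set and on $\sord$; and second, in the $\sigma^{-1}$ case, from $\sigma$ to its inverse, which works because nothing in the statements of Theorems~\ref{thm:ht} and~\ref{thm:roos} beyond the order and the fixed field of the automorphism feeds into the bound.
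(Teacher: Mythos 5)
Your proposal is correct and follows essentially the same route as the paper: the paper's justification is precisely the observation (stated in the paragraph preceding the proposition) that a Hartmann--Tzeng or Roos bound greater than $\nu$ forces the length-$\nu$ codes $C_{(\sigma, \etav, \T \hv \sigma(\mathcal C))}$ and $C_{(\sigma^{-1}, \epv, \T \hv \sigma(\mathcal C))}$ from Lemma \ref{lemma:epsilonoreta} to be zero, since no nonzero vector of length $\nu$ can have rank weight exceeding $\nu$. Your write-up merely makes explicit the two transfer steps (from $\hv$ to $\etav$ or $\epv$ as defining vector, and from $\sigma$ to $\sigma^{-1}$) that the paper leaves implicit, and both are handled correctly.
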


\section{The implicit SFSR synthesis problems in the error-correcting problem}
\label{section:sfsrsynthesisdecoding}

From Proposition \ref{prop:solutionequivalence} and Proposition \ref{prop:epsilonoreta}, the remaining step in order to decode is to find either some $\epv$ or some $\etav$ belonging to a solution of \eqref{eq:syndromeeq} or the equivalent \eqref{eq:syndromeeq2}. Since our goal is nearest-neighbor decoding, we may take advantage of the emergence of a pattern in the syndromes $S_d$ as a consequence of the structure of the defining set $\T \hv \sigma(\mathcal C)$ when the rank weight for the error $\ev = \yv - \cv$ is small enough for some codeword $\cv$. Hence, for this section, we will assume that the defining set $\T \hv \sigma(\mathcal C)$ of the considered code $\mathcal C \subseteq F^n$ has a subset of the form
\begin{equation}\label{eq:T}
	T = b + t_1 \{ 0, 1, \dots, \delta - 2 \} + t_2\{ k_0, \dots, k_r \} \quad \text{where} \quad \delta \ge 2 \text{ and } (\sord, t_1) = 1
\end{equation}
for some integers $b, t_1, t_2, \delta, r, k_0 < \dots < k_r$.
Equivalently, $\mathcal C \subseteq C_{(\sigma, \hv, \T \hv \sigma(\mathcal C))} \subseteq C_{(\sigma, \hv, T)}$. We are still not requiring this set $T$ to have the exact structure required by either Theorem \ref{thm:ht} or Theorem \ref{thm:roos}.

\begin{remark}\label{remark:sigmat1}
	Since $(\sord, t_1) = 1$, the fixed fields of $\sigma$, $\sigma^{t_1}$, $\sigma^{-1}$ and $\sigma^{-t_1}$ are all the same; thus, the concepts of $F^\sigma$-, $F^{\sigma^{-1}}$-, $F^{\sigma^{t_1}}$- and $F^{\sigma^{-t_1}}$-linear independence (or dependence) over $F$ are all equivalent. The error values and the error locators are linearly independent over any of these four equal fixed fields, and so are the entries of $\sigma^k(\epv)$ and $\sigma^k(\etav)$ for any $k \in \ZZ$, as any $F^\sigma$-linear combination is kept after applying the $F^\sigma$-linear map $\sigma$.
\end{remark}

As noted in \cite[Section VI.D]{SKK08}, which works in the context of Gabidulin codes, there are two dual approaches to rank-metric decoding, depending on how the error values and the error locators are used, as their roles in \eqref{eq:syndromeeq2} are reversed with respect to \eqref{eq:syndromeeq}.
As a result of the pattern in $T$, $(\delta-1)(r+1)$ identities are given by either \eqref{eq:syndromeeq} or \eqref{eq:syndromeeq2}, so the syndromes can be arranged into $r+1$ sequences of length $\delta - 1$, where the $i$-th element in the $j$-th sequence is
\newcommand{\forallij}{\text{ for all } 0 \le i \le \delta - 2,\,0 \le j \le r}
\begin{equation}\label{eq:s}
	\s i j = S_{b + t_1 i + t_2 k_j} \qquad \forallij
\end{equation}
or, in order to apply \eqref{eq:syndromeeq2} instead of \eqref{eq:syndromeeq},
\begin{equation}\label{eq:st}
	\st i j = \sigma^{-b - t_1 i - t_2 k_j}\left(S_{b + t_1 i + t_2 k_j}\right) \quad \forallij.
\end{equation}
Using this notation, which has been taken from \cite[Section VI]{FT91}, \eqref{eq:syndromeeq}, whose solutions $\epv, \etav$ lead to decoding as shown in the previous section, is written as
\begin{equation}\label{eq:see}
	\s i j = \sum_{k=1}^\nu \ep_k \sigma^{b + t_1 i + t_2 k_j}(\eta_k) \qquad \forallij,
\end{equation}
while the equivalent \eqref{eq:syndromeeq2} is
\begin{equation}\label{eq:stee}
	\st i j = \sum_{k=1}^\nu \eta_k \sigma^{- b - t_1 i - t_2 k_j}(\ep_k) \qquad \forallij.
\end{equation}
For the analogous case in non-skew codes, it was shown in \cite[Section VI]{FT91} that the corresponding sequences $\s i j$ satisfy a recurrence given by an LFSR related to the error values.
We will show that, for any $\epv, \etav$ which are a solution for the equations above, the sequences given by $\s i j$ satisfy a recurrence given by a $\sigma^{t_1}$-SFSR related to $\epv$, and the sequences given by $\st i j$ satisfy a recurrence from a $\sigma^{-t_1}$-SFSR related to $\etav$. The first recurrence is a general case from the approach in \cite[Section VI.A3]{SKK08} and \cite[Section VII]{SJB11} for Gabidulin codes, while the second one extends \cite[Section VI.D]{SKK08} as well as the work for skew codes in \cite{GLN17s,GLN17pgz}, being also analogous to the work in \cite{GNS21}.
The terms for these recurrences result from two relevant skew polynomials.

\newcommand{\sh}{\varsigma}
\newcommand{\ld}{\lambda}
\newcommand{\shv}{\boldsymbol \varsigma}
\newcommand{\ldv}{\boldsymbol \uplambda}
\begin{defn}[Error span polynomial]\label{defn:span}
	Consider the skew polynomial ring $R' = F[z;\sigma^{t_1}]$.
	The \emph{error span (skew) polynomial} $\sh = \sh_0 + \sh_1 z + \dots + \sh_\nu z^\nu \in R'$ associated to $\epv = (\ep_1, \dots, \ep_\nu)$ is $\lclm{z - \ep_k^{-1} \sigma^{t_1}(\ep_k) ~|~ 1 \le k \le \nu}$.

	Equivalently (see Lemma \ref{lemma:lrdiv}), $\sh$ is the only element in $R'$ of degree $\nu$ such that $\sh_\nu = 1$ and
	\begin{equation}\label{eq:span}
		\sum_{l=0}^\nu \sh_l \sigma^{t_1 l}(\ep_k) = 0 \qquad \text{for all } 1 \le k \le \nu.
	\end{equation}
	
	The \emph{error span vector} is $\shv = (\sh_0, \dots, \sh_\nu) \in F^{\nu + 1}$.
\end{defn}

\begin{defn}[Error locator polynomial]\label{defn:locator}
	Consider the skew polynomial ring $\bar R' = F[\bar z;\sigma^{-t_1}]$.
	The \emph{error locator (skew) polynomial} $\ld = \ld_0 + \ld_1 \bar z + \dots + \ld_\nu \bar z^\nu \in \bar R'$ associated to $\etav = (\eta_1, \dots, \eta_\nu)$ is $\lclm{\bar z - \eta_k^{-1}\sigma^{-t_1}(\eta_k) ~|~ 1 \le k \le \nu}$.

	Equivalently (see Lemma \ref{lemma:lrdiv}), $\ld$ is the only element in $\bar R'$ of degree $\nu$ such that $\ld_\nu = 1$ and
	\begin{equation}\label{eq:locator}
		\sum_{l=0}^\nu \ld_l \sigma^{-t_1 l}(\eta_k) = 0 \qquad \text{for all } 1 \le k \le \nu.
	\end{equation}
	
	The \emph{error locator vector} is $\ldv = (\ld_0, \dots, \ld_\nu) \in F^{\nu + 1}$.
\end{defn}

These polynomials can be defined as their analogous linearized polynomials in the context of rank codes over finite fields, see e.g. \cite{SKK08,SJB11}, while they are taken as commutative polynomials for cyclic codes, see e.g. \cite{FT91}.
Since the Hamming metric is the one considered in \cite{GLN17s, GLN17pgz, GNS21}, their error locator skew polynomial is the equivalent in an opposite ring (that is, in the ring with the same elements and the same sum operation where the product is the one where the arguments are swapped) from a left multiple from the one in Definition \ref{defn:locator}; in fact, in those works the analogous (in an opposite ring) for the error locator skew polynomial as defined here is computed first, being then completed into one suitable for decoding with respect to the Hamming metric.

While defining the error span and locator vectors $\shv, \ldv$ might be enough, and in fact only these vectors will be considered in the algorithms, working with skew polynomials (or linearized polynomials) allows us to readily apply many properties that would otherwise have to be explicitly described in case of working with vectors.

The following result shows that $\shv$ and $\ldv$ might be an SFSR of length $\nu$ for some sequences, as their first entry is nonzero, as required in Definition \ref{defn:SFSR}.

\begin{lemma}\label{lemma:degreeandconstantcoefficient}
	$\sh_\nu = \ld_\nu = 1$, and $\sh_0$ and $\ld_0$ are nonzero. That is, both the error span polynomial $\sh$ and the error locator polynomial $\ld$ have degree $\nu$ and a nonzero constant coefficient.
\end{lemma}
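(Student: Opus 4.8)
The plan is to read off both claims directly from Definitions \ref{defn:span} and \ref{defn:locator}, where $\sh$ and $\ld$ are constructed as least common left multiples of degree-one skew polynomials. Since $\sh = \lclm{z - \ep_k^{-1}\sigma^{t_1}(\ep_k) \mid 1 \le k \le \nu}$, it is by definition the monic generator of the left ideal $\bigcap_{k=1}^\nu R'\bigl(z - \ep_k^{-1}\sigma^{t_1}(\ep_k)\bigr)$, so $\sh_\nu = 1$ is immediate; the only substantive point is that its degree is exactly $\nu$ rather than something smaller. The same applies verbatim to $\ld$ with $\sigma^{-t_1}$ in place of $\sigma^{t_1}$ and $\etav$ in place of $\epv$.

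For the degree claim, I would use the $F^{\sigma^{t_1}}$-linear independence of $\ep_1, \dots, \ep_\nu$ (which holds because the error values are $F^\sigma$-linearly independent and $F^{\sigma^{t_1}} = F^\sigma$ by $(\sord, t_1) = 1$, as recorded in Remark \ref{remark:sigmat1}). First I would note that the degree of $\sh$ is at most $\nu$, since a common left multiple of degree $\nu$ always exists (for instance the product of the $\nu$ degree-one factors, taken in any order, lies in the intersection). For the lower bound: if $\sh$ had degree $\nu' < \nu$, then the equivalent characterization \eqref{eq:span}, which by Lemma \ref{lemma:lrdiv} says precisely that $z - \ep_k^{-1}\sigma^{t_1}(\ep_k)$ right divides $\sh$ for each $k$, would give $\sum_{l=0}^{\nu'} \sh_l \sigma^{t_1 l}(\ep_k) = 0$ for all $1 \le k \le \nu$. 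Viewing this as the statement that the nonzero vector $(\sh_0, \dots, \sh_{\nu'})$ annihilates the $\nu \times (\nu'+1)$ matrix with $(k,l)$ entry $\sigma^{t_1 l}(\ep_k)$, a Vandermonde-type (Moore-matrix) nonsingularity argument — the same one underlying \cite[Corollary 4.13]{LL88} already cited in the paper, applied to the automorphism $\sigma^{t_1}$ of order $\sord \ge n \ge \nu$ — forces the $\ep_k$ to be $F^{\sigma^{t_1}}$-linearly dependent, a contradiction. Hence $\deg \sh = \nu$, and the identical argument with $\sigma^{-t_1}$ gives $\deg \ld = \nu$.

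For the nonzero constant coefficient, I would argue that $\sh_0 = 0$ would force $\sh = z \cdot \sh'$ for some $\sh' \in R'$ of degree $\nu - 1$; since $z$ is a unit-times-nonzerodivisor in the skew polynomial ring and each $z - \ep_k^{-1}\sigma^{t_1}(\ep_k)$ is coprime to $z$ (as $\ep_k \ne 0$), right divisibility of $z\sh'$ by $z - \ep_k^{-1}\sigma^{t_1}(\ep_k)$ would pass to $\sh'$, making $\sh'$ a common left multiple of all the factors of degree only $\nu - 1 < \nu$, contradicting the degree computation just established. Equivalently and more concretely, plugging $l$-index $0$ alone into \eqref{eq:span} shows $\sh_0 \ne 0$ is equivalent to no $\ep_k$ vanishing, which holds since $\epv$ is a basis; I would phrase it via the coprimality-with-$z$ observation to keep it uniform with the $\ld$ case. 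The same reasoning with $\ld$, $\bar z$, $\sigma^{-t_1}$, and $\etav$ yields $\ld_0 \ne 0$. The main obstacle is purely the exact-degree lower bound: everything hinges on the Moore/Vandermonde nonsingularity over the twisted ring $F[z;\sigma^{t_1}]$, and care is needed only in checking that the order of $\sigma^{t_1}$ is still $\sord \ge \nu$ so that the relevant submatrix is genuinely nonsingular.
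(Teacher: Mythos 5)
Your argument reaches the right conclusions by a genuinely different, self-contained route. The paper's proof is essentially a citation: $\deg\sh = \deg\ld = \nu$ follows from \cite[Theorem 5.3]{DL07} together with the $F^\sigma$-linear independence of the entries of $\epv$ and $\etav$ (via Remark \ref{remark:sigmat1}), and $\sh_0, \ld_0 \ne 0$ follows from the description in \cite[Section 2]{DL07} of the constant coefficient of such an lclm as a product of nonzero elements of $F$. Your replacement of the first citation by the nonsingularity of the Moore-type matrix $\bigl(\sigma^{t_1 l}(\ep_k)\bigr)_{k,l}$ via \cite[Corollary 4.13]{LL88} is sound and arguably preferable, since that reference is already used in the paper for exactly this kind of statement; you are right that the only point needing care is $\nu \le n \le \sord$, which is the order of $\sigma^{t_1}$. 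Your reduction of $\sh_0 \ne 0$ to the existence of a common left multiple of degree $\nu - 1$, contradicting the degree computation, is likewise a valid elementary substitute for the second citation.

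Two of your justifications do need repair, although neither affects the conclusion. First, the bound $\deg\sh \le \nu$ cannot be obtained from the claim that ``the product of the $\nu$ degree-one factors, taken in any order, lies in the intersection'': in $R' = F[z;\sigma^{t_1}]$ the product $f_1 f_2 \cdots f_\nu$ lies in $R' f_\nu$ but in general not in $R' f_1$, so it need not be a common left multiple. The correct argument is the degree formula $\deg\lclm{f,g} = \deg f + \deg g - \deg\gcrd{f,g} \le \deg f + \deg g$, which by induction bounds the degree of the lclm by the sum of the degrees of its arguments. Second, the assertion that right divisibility of $z\sh'$ by $z - \ep_k^{-1}\sigma^{t_1}(\ep_k)$ ``passes to $\sh'$'' by coprimality with $z$ is not a standard transfer in a noncommutative ring (divisibility of a product only transfers up to similarity), and the $l=0$ term of \eqref{eq:span} alone does not show that $\sh_0 \ne 0$ is equivalent to no $\ep_k$ vanishing. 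The clean fix uses the evaluation identity you already invoke: writing $\sh = z\sh'$ with $\sh' = \sum_{l=0}^{\nu-1}\sh'_l z^l$, one computes $\sum_{m=0}^{\nu}\sh_m\sigma^{t_1 m}(\ep_k) = \sigma^{t_1}\bigl(\sum_{l=0}^{\nu-1}\sh'_l\sigma^{t_1 l}(\ep_k)\bigr)$, so by Lemma \ref{lemma:lrdiv} each $z - \ep_k^{-1}\sigma^{t_1}(\ep_k)$ right divides $\sh'$, a nonzero common left multiple of degree $\nu - 1$, which is the desired contradiction. With these two repairs your proof is complete, and the $\ld$ case is identical with $\bar z$, $\sigma^{-t_1}$ and $\etav$ in place of $z$, $\sigma^{t_1}$ and $\epv$.
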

\begin{proof}
	The degree of both $\sh$ and $\ld$, which are monic, is $\nu$ by \cite[Theorem 5.3]{DL07} and the fact that both $\epv$ and $\etav$ are vectors with $F^\sigma$-linearly independent entries (recall Remark \ref{remark:sigmat1}).
	Both $\sh_0$ and $\ld_0$ are the product of nonzero elements in $F$ by \cite[Section 2]{DL07}.
\end{proof}

In general, the constant coefficients might not be equal to $1$.

\begin{remark}\label{remark:shld2epeta}
	By \cite[Corollary 5.4]{DL07}, the $\nu$-dimensional $F^\sigma$-span (note Remark \ref{remark:sigmat1}) of the error values is $\ker(\sh(\sigma^{t_1}))$, where $\sh(\sigma^{t_1}) : F \to F$ is the map sending any $\gamma \in F$ into $\sh_0 \gamma + \sh_1 \sigma^{t_1}(\gamma) + \sh_2 \sigma^{2 t_1}(\gamma) + \dots + \sh_\nu \sigma^{\nu t_1}(\gamma)$.
	Analogously, the $\nu$-dimensional $F^\sigma$-span of the error locators is $\ker(\ld(\sigma^{-t_1}))$.
	Thus, $\epv$ or $\etav$ can be computed from $\sh$ and $\ld$ respectively as an $F^\sigma$-basis of $\ker(\sh(\sigma^{t_1}))$ or, respectively, $\ker(\ld(\sigma^{-t_1}))$. If $F$ is a finite field, this is equivalent to finding an $F^\sigma$-basis for the roots of the corresponding linearized polynomial, which is the approach taken in \cite{SJB11}.

	Recall that, by Lemma \ref{lemma:epetalc}, any basis of the space will work as long as Equation \eqref{eq:syndromeeq} or the equivalent Equation \eqref{eq:syndromeeq2} holds. Thus, as described in Proposition \ref{prop:epsilonoreta}, one has to choose to compute either $\epv$ from the error span polynomial or $\etav$ from the error locator polynomial, and then the other one is computed through solving Equation \eqref{eq:syndromeeq} or Equation \eqref{eq:syndromeeq2}. That is, $\epv$ and $\etav$ should not be independently computed from $\sh$ and $\ld$. Hence, exactly one of those polynomials has to be found.
\end{remark}

One of the main properties of these skew polynomials that we must observe is that they satisfy the following \emph{key equations} with respect to the syndromes given by $T$:

\begin{prop}[Key equations with an error span polynomial]\label{prop:keyspan}
	Let $\epv, \etav$ be a solution of \eqref{eq:see} where $\s i j$ is as in \eqref{eq:s} and let $\sh_0 + \sh_1 z + \dots + \sh_\nu z^\nu$ be the error span polynomial associated to $\epv$ as in Definition \ref{defn:span}.
	For every $i \in \{ \nu, \dots, \delta-2 \}$ and for every $j \in \{ 0, \dots, r\}$,
	\begin{equation}\label{eq:keyspan}
		\sum_{l=0}^\nu \sh_l \sigma^{t_1 l}\left(\s {i-l} j\right) = 0.
	\end{equation}
	Stated otherwise, $\shv = (\sh_0, \dots, \sh_\nu)$ is a $\sigma^{t_1}$-skew-feedback shift register for the $r+1$ sequences of length $\delta - 1$ given by $\s i j$ (as noted in Lemma \ref{lemma:degreeandconstantcoefficient}, $\sh_0 \ne 0$).
\end{prop}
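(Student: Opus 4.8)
The plan is a direct substitution followed by a collapse of exponents. First I would expand the left-hand side of \eqref{eq:keyspan} by inserting the value of $\s {i-l} j$ furnished by the hypothesis \eqref{eq:see}, namely $\s {i-l} j = \sum_{k=1}^\nu \ep_k\, \sigma^{b + t_1(i-l) + t_2 k_j}(\eta_k)$. This substitution is legitimate precisely because the restriction $i \in \{\nu, \dots, \delta-2\}$ forces $0 \le i-l \le \delta-2$ for every $0 \le l \le \nu$, so each syndrome $\s{i-l}j$ that appears is genuinely one of those provided by the pattern $T$ of \eqref{eq:T}.

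Next, using that each $\sigma^{t_1 l}$ is a field automorphism of $F$ (so it commutes with the finite sum and distributes over the product), the left-hand side of \eqref{eq:keyspan} becomes
\[
	\sum_{l=0}^\nu \sum_{k=1}^\nu \sh_l\, \sigma^{t_1 l}(\ep_k)\, \sigma^{t_1 l + b + t_1(i-l) + t_2 k_j}(\eta_k).
\]
The key observation is that $t_1 l + b + t_1(i-l) + t_2 k_j = b + t_1 i + t_2 k_j$, which is independent of $l$; hence, after exchanging the two (finite) summations, the factor $\sigma^{b + t_1 i + t_2 k_j}(\eta_k)$ can be pulled out of the inner sum, leaving
\[
	\sum_{k=1}^\nu \sigma^{b + t_1 i + t_2 k_j}(\eta_k)\left( \sum_{l=0}^\nu \sh_l\, \sigma^{t_1 l}(\ep_k) \right).
\]
By the defining property \eqref{eq:span} of the error span polynomial (equivalently, by Lemma \ref{lemma:lrdiv} applied to $\sh = \lclm{z - \ep_k^{-1}\sigma^{t_1}(\ep_k) ~|~ 1 \le k \le \nu}$), the inner parenthesis vanishes for every $k$, so the whole expression is $0$, which is exactly \eqref{eq:keyspan}.

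For the reformulation I would simply match the conclusion with Definition \ref{defn:SFSR}: taking $\theta = \sigma^{t_1}$, length $\ell = \nu$, the vector $\shv = (\sh_0, \dots, \sh_\nu)$, and, for each fixed $j$, the length-$(\delta-1)$ sequence $n \mapsto \s n j$, Equation \eqref{eq:keyspan} with $i$ ranging over $\{\nu, \dots, \delta-2\}$ is literally \eqref{eq:SFSR}, while $\sh_0 \ne 0$ is Lemma \ref{lemma:degreeandconstantcoefficient}.

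There is no serious obstacle here: all the content lies in the telescoping identity $t_1 l + t_1(i-l) = t_1 i$ and in the fact that, by construction, the coefficients $\sh_l$ annihilate each error value $\ep_k$ under the substitution $z \mapsto \sigma^{t_1}$. The only point requiring a moment's attention is the bookkeeping of indices — checking that every shifted index $i-l$ stays in $\{0, \dots, \delta-2\}$ — which is precisely why the statement restricts to $i \ge \nu$.
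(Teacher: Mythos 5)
Your proof is correct and follows essentially the same route as the paper's: substitute \eqref{eq:see} into the left-hand side of \eqref{eq:keyspan}, use that $\sigma^{t_1 l}$ is an automorphism so the exponents telescope to $b + t_1 i + t_2 k_j$ independently of $l$, swap the sums, and invoke \eqref{eq:span} to kill the inner sum. Your additional remarks on index bookkeeping and the explicit match with Definition \ref{defn:SFSR} are fine but add nothing beyond what the paper's two-line computation already contains.
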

\begin{proof}
	By applying \eqref{eq:see} and \eqref{eq:span},
	\begin{align*}
		\sum_{l=0}^\nu \sh_l \sigma^{t_1 l}\left(\s {i-l} j\right)
		& = \sum_{l=0}^\nu \sh_l \sigma^{t_1 l}\left( \sum_{k=1}^\nu \ep_k \sigma^{b + (i-l) t_1 + k_j t_2}(\eta_k) \right) \\
		& = \sum_{k=1}^\nu \sigma^{b + i t_1 + k_j t_2}(\eta_k) \sum_{l=0}^\nu \sh_l \sigma^{t_1 l}( \ep_k ) \stepcounter{equation}\tag{\theequation}\label{eqline:eespan}  \\
		& = 0
	\end{align*}
	for each considered $i,j$.
\end{proof}

\begin{prop}[Key equations with an error locator polynomial]\label{prop:keylocator}
	Let $\epv, \etav$ be a solution of \eqref{eq:stee} where $\st i j$ is as in \eqref{eq:st} and let $\ld_0 + \ld_1 \bar z + \dots + \ld_\nu \bar z^\nu$ be the error locator polynomial associated to $\etav$ as in Definition \ref{defn:locator}.
	For every $i \in \{ \nu, \dots, \delta-2 \}$ and for every $j \in \{ 0, \dots, r\}$,
	\begin{equation}\label{eq:keylocator}
		\sum_{l=0}^\nu \ld_l \sigma^{-t_1 l}\left( \st {i-l} j \right) = 0.
	\end{equation}
	Stated otherwise, $\ldv = (\ld_0, \dots, \ld_\nu)$ is a $\sigma^{-t_1}$-skew-feedback shift register for the $r+1$ sequences of length $\delta - 1$ given by $\st i j$ (as noted in Lemma \ref{lemma:degreeandconstantcoefficient}, $\ld_0 \ne 0$).
\end{prop}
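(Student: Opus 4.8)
The plan is to reproduce, in dual form, the argument used for Proposition \ref{prop:keyspan}: where that proof substituted \eqref{eq:see} and then applied the defining relation \eqref{eq:span} of the error span polynomial, here I would substitute \eqref{eq:stee} into the left-hand side of \eqref{eq:keylocator} and then apply the defining relation \eqref{eq:locator} of the error locator polynomial. The roles of $\epv$ and $\etav$ are interchanged precisely because \eqref{eq:syndromeeq2} interchanges them relative to \eqref{eq:syndromeeq}, and the twist $\sigma^{t_1}$ is replaced throughout by $\sigma^{-t_1}$.

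Concretely, fix $i \in \{\nu, \dots, \delta-2\}$ and $j \in \{0, \dots, r\}$. Using \eqref{eq:stee} for each $\st {i-l} j$ and then swapping the order of the two summations, I expect
\begin{align*}
	\sum_{l=0}^\nu \ld_l \sigma^{-t_1 l}\left( \st {i-l} j \right)
	& = \sum_{l=0}^\nu \ld_l \sigma^{-t_1 l}\left( \sum_{k=1}^\nu \eta_k \sigma^{-b - t_1(i-l) - t_2 k_j}(\ep_k) \right) \\
	& = \sum_{k=1}^\nu \sigma^{-b - t_1 i - t_2 k_j}(\ep_k) \sum_{l=0}^\nu \ld_l \sigma^{-t_1 l}(\eta_k) \\
	& = 0,
\end{align*}
the last equality being \eqref{eq:locator}. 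The one computation that makes this work is the exponent cancellation $\sigma^{-t_1 l}\bigl(\sigma^{-b - t_1(i-l) - t_2 k_j}(\ep_k)\bigr) = \sigma^{-b - t_1 i - t_2 k_j}(\ep_k)$, whose right side no longer depends on $l$; this is exactly what lets the factor $\sigma^{-b - t_1 i - t_2 k_j}(\ep_k)$ come out of the $l$-sum, leaving $\sum_l \ld_l \sigma^{-t_1 l}(\eta_k)$. It mirrors line \eqref{eqline:eespan} of the previous proof.

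I do not anticipate a genuine obstacle, only two items of bookkeeping. First, one must check that each $\st {i-l} j$ appearing is indeed among the syndromes of \eqref{eq:st}: since $0 \le l \le \nu$ and $\nu \le i \le \delta-2$, the shifted index satisfies $0 \le i - l \le \delta-2$, which is exactly where the hypothesis $i \ge \nu$ is needed. Second, to phrase the conclusion as ``$\ldv$ is a $\sigma^{-t_1}$-SFSR'' in the sense of Definition \ref{defn:SFSR} — matching \eqref{eq:keylocator} against \eqref{eq:SFSR} with $\theta = \sigma^{-t_1}$, $v_l = \ld_l$, $\ell = \nu$, $N = \delta-1$, the $j$-th sequence given by $m \mapsto \st m j$, and $n = i$ — one must know that the first entry $\ld_0$ is nonzero, which is precisely Lemma \ref{lemma:degreeandconstantcoefficient}. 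The only thing requiring care is keeping the sign of the twist and the direction of the syndrome normalization in \eqref{eq:st} consistent everywhere; once that is done the argument is essentially a line-by-line transcription of the proof of Proposition \ref{prop:keyspan}.
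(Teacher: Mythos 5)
Your computation is exactly the paper's proof: substitute \eqref{eq:stee}, use the exponent cancellation $\sigma^{-t_1 l}\bigl(\sigma^{-b - t_1(i-l) - t_2 k_j}(\ep_k)\bigr) = \sigma^{-b - t_1 i - t_2 k_j}(\ep_k)$ to pull the factor out of the $l$-sum, and conclude by \eqref{eq:locator}. The bookkeeping remarks on the index range and on $\ld_0 \ne 0$ are correct and consistent with what the paper relies on.
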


\begin{proof}
	By applying \eqref{eq:stee} and \eqref{eq:locator},
	\begin{align*}
		\sum_{l=0}^\nu \ld_l \sigma^{-t_1 l}\left( \st {i-l} j \right)
		& = \sum_{l=0}^\nu \ld_l \sigma^{-t_1 l}\left( \sum_{k=1}^\nu \eta_k \sigma^{-b -t_1(i-l) - t_2 k_j}(\ep_k) \right) \\
		& = \sum_{k=1}^\nu \sigma^{-b - i t_1 - k_j t_2}(\ep_k) \sum_{l=0}^\nu \ld_l \sigma^{-t_1 l}( \eta_k ) \stepcounter{equation}\tag{\theequation}\label{eqline:eelocator} \\
		& = 0
	\end{align*}
	for each considered $i,j$.
\end{proof}

\begin{remark}
	Equation \eqref{eq:keyspan} and Equation \eqref{eq:keylocator} can equivalently be stated using skew polynomials in $R'$ and $\bar R'$ respectively. Equation \eqref{eq:keyspan} is equivalent to the $i$-th coefficient of $\sh S^{(j)}$, for $\nu \le i \le \delta - 2$, being equal to zero, where $S^{(j)} = \s 0 j + \s 1 j z +  \dots + \s {\delta - 2} j z^{\delta - 2} \in R'$. Equation \eqref{eq:keylocator} is equivalent to the $i$-th coefficient of $\ld \tilde S^{(j)}$, for $\nu \le i \le \delta - 2$, being equal to zero, where $\tilde S^{(j)} = \st 0 j + \st 1 j \bar z +  \dots + \st {\delta - 2} j \bar z^{\delta - 2} \in \bar R'$. Equivalently, those products modulo $z^{\delta - 1}$ or, respectively, $\bar z^{\delta - 1}$, have degree less than $\nu$. Since, for a single sequence, it is immediate to recognize a congruence identity from this ($\sh S \equiv \omega \mod z^{\delta - 1}$ or $\ld \tilde S \equiv \omega \mod \bar z^{\delta - 1}$, where the degree of $\omega$ is less than $\nu$), this is how the key equations are stated in \cite{GLN17s}, as well as, using linearized polynomials, in \cite[Section VI.A3]{SKK08} and \cite[Theorem 4]{KG05}.
\end{remark}

If the error span polynomial $\sh$ associated to some $\epv$ from some decomposition $\yv = \cv + \ev$ where $\cv \in \mathcal C$ and $\rkw(\ev) = \nu$ is the shortest (up to nonzero $F$-multiples) $\sigma^{t_1}$-SFSR solving \eqref{eq:keyspan}, then $\sh$ can be retrieved by any algorithm solving Problem \ref{problem:SFSRsynthesis}. The same applies to the error locator polynomial $\ld$ associated to $\etav$: if it is the shortest $\sigma^{-t_1}$-SFSR solving \eqref{eq:keylocator}, then it can be found by solving the SFSR synthesis problem.
In order to have uniqueness, up to $F$-multiples, for the SFSR synthesis problem, $\nu$ should be at most $\delta - 2$ so that both \eqref{eq:keyspan} and \eqref{eq:keylocator} result in $\delta - 1 - \nu > 0$ identities, as otherwise they give no information on the error span and locator polynomials, since any skew polynomial of the same degree would then satisfy both sets of zero key equations.
Furthermore, if $\nu \le \delta - 2$, we may define the $(\delta - 1 - \nu)(r+1) \times \nu$ matrices

\begin{equation}\label{eq:HE}
	\def\arraystretch{1.6}
	H_\nu = \left[
		\begin{array}{c}
			\sigma^{k_0 t_2}(\overline H_\nu) \\ \hline
			\sigma^{k_1 t_2}(\overline H_\nu) \\ \hline
			\vdots \\ \hline
			\sigma^{k_r t_2}(\overline H_\nu)
		\end{array}
	\right],
	\qquad
	E_\nu = \left[
		\begin{array}{c}
			\sigma^{-k_0 t_2}(\overline E_\nu) \\ \hline
			\sigma^{-k_1 t_2}(\overline E_\nu) \\ \hline
			\vdots \\ \hline
			\sigma^{-k_r t_2}(\overline E_\nu)
		\end{array}
	\right],
\end{equation}
where
\[
	\overline H_\nu = \left( \sigma^{b + i t_1}(\eta_k) \right)_{\substack{\nu \le i \le \delta - 2 \\ 1 \le k \le \nu}},
	\qquad
	\overline E_\nu = \left( \sigma^{-b - i t_1}(\ep_k) \right)_{\substack{\nu \le i \le \delta - 2 \\ 1 \le k \le \nu}}.
\]

\begin{lemma}\label{lemma:nurank}
	Consider any decomposition $\yv = \cv + \ev$ where $\cv \in \mathcal C$ and $\rkw(\ev) = \nu$, with some associated error values and locators $\epv, \etav$ satisfying \eqref{eq:syndromeeq} and their corresponding error span and locator vectors $\shv, \ldv$.
	If $\nu \le \delta - 2$ and the rank of $H_\nu$ is $\nu$, then the shortest $\sigma^{t_1}$-skew-feedback shift registers generating the $r+1$ sequences of length $\delta - 1$ given by $\s i j$ are the nonzero $F$-multiples of $\shv$.
	If $\nu \le \delta - 2$ and the rank of $E_\nu$ is $\nu$, then the shortest $\sigma^{-t_1}$-skew-feedback shift registers generating the $r+1$ sequences of length $\delta - 1$ given by $\st i j$ are the nonzero $F$-multiples of $\ldv$.
\end{lemma}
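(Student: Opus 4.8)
The plan is to prove the first assertion; the second follows verbatim with \eqref{eq:stee}, $\st i j$, $\sigma^{-t_1}$, $\bar R'$ and the error locators in the roles of \eqref{eq:see}, $\s i j$, $\sigma^{t_1}$, $R'$ and the error values respectively, and with $E_\nu$, $\ldv$ in place of $H_\nu$, $\shv$. By Proposition \ref{prop:keyspan}, $\shv$ is a $\sigma^{t_1}$-SFSR of length $\nu$ for the sequences $\s i j$ (its first entry $\sh_0$ is nonzero by Lemma \ref{lemma:degreeandconstantcoefficient}), and since the defining relation \eqref{eq:SFSR} is $F$-linear in the register, every nonzero $F$-multiple $c\shv$ is again such an SFSR. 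So it remains to show that no $\sigma^{t_1}$-SFSR of length $<\nu$ exists and that every $\sigma^{t_1}$-SFSR of length $\nu$ is a nonzero $F$-multiple of $\shv$.

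First I would analyse a register $\vv = (v_0, \dots, v_\nu)$ of length exactly $\nu$, writing $v = \sum_{l=0}^\nu v_l z^l \in R'$. Substituting \eqref{eq:see} into the relations $\sum_{l=0}^\nu v_l \sigma^{t_1 l}(\s {i-l} j) = 0$ (for $\nu \le i \le \delta-2$, $0 \le j \le r$) and cancelling the exponent $t_1 l$ against $-t_1 l$ inside $\sigma^{b+t_1(i-l)+t_2 k_j}(\eta_k)$, exactly as in the proof of Proposition \ref{prop:keyspan}, each relation becomes $\sum_{k=1}^\nu w_k \sigma^{b + t_1 i + t_2 k_j}(\eta_k) = 0$ with $w_k = \sum_{l=0}^\nu v_l \sigma^{t_1 l}(\ep_k)$. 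By the definition of $H_\nu$ in \eqref{eq:HE}, this says precisely that $(w_1, \dots, w_\nu)^T$ lies in the right kernel of $H_\nu$, which is trivial since $\rk(H_\nu) = \nu$. Hence $w_k = 0$ for all $k$, i.e. the $F^\sigma$-linear map $v(\sigma^{t_1})$ annihilates every $\ep_k$ and therefore the $\nu$-dimensional $F^\sigma$-span of $\epv$, which by Remark \ref{remark:shld2epeta} is $\ker(\sh(\sigma^{t_1}))$.

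It then remains to compare $v$ with $\sh$. Since $v_0 \ne 0$, $v$ is a nonzero skew polynomial of $z$-degree at most $\nu$, and by the standard theory of skew polynomials over $F/F^\sigma$ (e.g.\ \cite{DL07}) one has $\dim_{F^\sigma}\ker(v(\sigma^{t_1})) \le \deg_z v \le \nu$; as this kernel already contains the $\nu$-dimensional span of $\epv$, we conclude $\deg_z v = \nu$ and $\ker(v(\sigma^{t_1}))$ equals that span. Now $\sh$ is, by Definition \ref{defn:span} and Lemma \ref{lemma:lrdiv}, the monic skew polynomial of least degree (namely $\nu$) vanishing on this subspace, so $v$ is a left multiple of $\sh$ in $R'$; comparing $z$-degrees, $v = c\sh$ for a nonzero constant $c \in F$, whence $\vv = c\shv$. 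Finally, a register $\vv = (v_0, \dots, v_\mu)$ of length $\mu < \nu$ becomes, after appending $\nu - \mu$ zeros, a $\sigma^{t_1}$-SFSR of length $\nu$ for the same sequences — the additional relations only involve indices $i \ge \mu$ — hence a nonzero $F$-multiple $c\shv$; but its $\nu$-th entry is $0$ while $c\sh_\nu = c \ne 0$, a contradiction. Therefore no SFSR of length $< \nu$ exists, and the shortest ones are precisely the nonzero $F$-multiples of $\shv$.

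The step I expect to be the main obstacle is the bookkeeping around skew-polynomial divisibility: carefully distinguishing left from right division in $R'$ (and, for the dual statement, in $\bar R'$), and justifying that the least common left multiple of the linear factors $z - \ep_k^{-1}\sigma^{t_1}(\ep_k)$ is exactly the minimal-degree skew polynomial vanishing on the $F^\sigma$-span of $\epv$, of degree equal to that span's dimension, so that every skew polynomial vanishing there is a left multiple of it. Everything else — the cancellation of the $t_1 l$ exponents, the reduction to the trivial kernel of $H_\nu$, and the zero-padding argument ruling out shorter registers — is routine once Proposition \ref{prop:keyspan} and Remark \ref{remark:shld2epeta} are in hand.
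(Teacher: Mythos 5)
Your proof is correct and follows essentially the same route as the paper: substitute the syndrome expression \eqref{eq:see} into the SFSR relations, factor to recognize the result as $H_\nu$ applied to the vector with entries $\sum_l v_l \sigma^{t_1 l}(\ep_k)$, use $\rk(H_\nu)=\nu$ to force these to vanish, and conclude via the lclm characterization of $\sh$ that the register is a constant multiple of $\shv$ (with the dual argument for $E_\nu$ and $\ldv$). Your explicit zero-padding argument ruling out shorter registers and the left-divisibility bookkeeping are just slightly more detailed versions of what the paper leaves implicit.
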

\begin{proof}
	The term in line \eqref{eqline:eespan} can be seen as each one of the entries, for each $i$ and each $j$, of the result of right multiplying $H_\nu$ by the column vector whose $\nu$ entries are the left-hand side of \eqref{eq:span} for each $k$. Any $\sigma^{t_1}$-SFSR $\shv'$ of length at most $\nu$ generating $\s i j$ will also satisfy \eqref{eq:keyspan} when substituting $\shv$ with it, and therefore line \eqref{eqline:eespan} will be zero, for all $\nu \le i \le \delta - 2$ and all $0 \le j \le r$, so either $\shv'$ satisfies \eqref{eq:span} and is therefore a multiple of $\shv$, or the rank of $H_\nu$ (respectively, $E_\nu$) is less than $\nu$.
From the point of view of the error locator polynomial, analogous remarks apply to line \eqref{eqline:eelocator}, so any $\sigma^{-t_1}$-SFSR $\ldv'$ of length at most $\nu$ generating $\st i j$ either satisfies \eqref{eq:locator} and hence is a multiple of $\ldv$, or the rank of $E_\nu$ is less than $\nu$.
\end{proof}

\begin{remark}\label{remark:epsilonoreta2}
	The codes $C_{(\sigma, \etav, \T \hv \sigma(\mathcal C))}$ and $C_{(\sigma^{-1}, \epv, \T \hv \sigma(\mathcal C))}$ described in Lemma \ref{lemma:epsilonoreta} are in the left kernel of, respectively, $H_\nu$ and $E_\nu$, as every row in those matrices gives a condition which has to be satisfied by the corresponding code.
	As a result, if the rank of any of those matrices is $\nu$, then the corresponding code has dimension zero and, by the lemma, the step of solving $\epv$ from $\etav$ or vice versa will yield at most one solution.
\end{remark}

Under some conditions on $\nu$ and the parameters $\delta, r, t_2, k_0, \dots, k_r$ that define the set $T$ as given in \eqref{eq:T}, we can guarantee that the rank of both $H_\nu$ and $E_\nu$ as defined in \eqref{eq:HE} will be $\nu$. In order to do so, consider $T_\nu$ for each $0 \le \nu \le \delta - 2$ defined as
\begin{equation}\label{eq:Tnu}
	T_\nu = b + t_1 \nu + t_1 \{ 0, \dots, \delta - 2 - \nu \} + t_2 \{ k_0, \dots, k_r \} \subseteq T.
\end{equation}
$T_\nu$ is a subset of $T$ such that
\[
	\begin{aligned}
		T_\nu - t_1 \{ 0, \dots, \nu \}
		& = \bigcup_{i=0}^\nu \left( b + t_1 i + t_1 \{ 0, \dots, \delta - 2 - \nu \} + t_2 \{ k_0, \dots, k_r \} \right) \\
		& = \bigcup_{i=0}^\nu \left( b + t_1 \{ i, \dots, \delta - 2 - \nu + i \} + t_2 \{ k_0, \dots, k_r \} \right) \\
		& = T,
	\end{aligned}
\]
where $A - B$ is $A + (-1)B$, that is, for two sets $A, B \subset \ZZ$, $A - B = \{ a - b ~|~ a \in A, b \in B \}$.

\begin{lemma}\label{lemma:zerocodes}
	The rank of $H_\nu$ (respectively, $E_\nu$) equals $\nu$ if and only if the code $C_{(\sigma, \etav, T_\nu)}$ (respectively, $C_{(\sigma^{-1}, \epv, T_\nu)}$) only contains the zero codeword.
\end{lemma}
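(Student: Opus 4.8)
The plan is to exhibit $C_{(\sigma, \etav, T_\nu)}$ as the null space of the matrix $H_\nu$ and $C_{(\sigma^{-1}, \epv, T_\nu)}$ as the null space of $E_\nu$; the rank–nullity theorem, applied to these matrices which have exactly $\nu$ columns, then converts ``the code is zero'' into ``the rank is $\nu$''. Only the statement about $H_\nu$ needs to be argued in detail, since the one about $E_\nu$ is obtained by the same reasoning with $\sigma$ replaced by $\sigma^{-1}$ and $\etav$ by $\epv$.

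First I would read off the rows of $H_\nu$ from \eqref{eq:HE}. The row occupying the $i$-th position ($\nu \le i \le \delta - 2$) of the $j$-th block ($0 \le j \le r$) is $\sigma^{k_j t_2}$ applied to the $i$-th row $\left( \sigma^{b + i t_1}(\eta_1), \dots, \sigma^{b + i t_1}(\eta_\nu) \right)$ of $\overline H_\nu$, which equals $\sigma^{b + i t_1 + k_j t_2}(\etav)$. Consequently, for $\vv \in F^\nu$ one has $H_\nu \vv^T = 0$ precisely when $\vv \sigma^{e}(\etav)^T = 0$ for every $e$ of the form $b + i t_1 + k_j t_2$ with $\nu \le i \le \delta - 2$ and $0 \le j \le r$; the set of such $e$ is $b + t_1\{\nu, \dots, \delta - 2\} + t_2\{k_0, \dots, k_r\}$, which is exactly $T_\nu$ (cf.\ \eqref{eq:Tnu}). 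Since $\sigma^{e}$ depends only on $e$ modulo $\sord$, exponents that are congruent modulo $\sord$ produce identical rows of $H_\nu$ and impose the same condition on $\vv$; hence $H_\nu \vv^T = 0$ is equivalent to $\vv \sigma^{i}(\etav)^T = 0$ for all $i \in T_\nu$, i.e.\ to $\vv \in C_{(\sigma, \etav, T_\nu)}$. (As $\etav$ has $\nu$ $F^\sigma$-linearly independent entries, $\nu \le \sord$, so $C_{(\sigma, \etav, T_\nu)}$ is a code in the sense of Definition \ref{defn:ccode}; for $\nu = 0$ both sides of the asserted equivalence hold vacuously, so we may assume $\nu \ge 1$.)

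Then I would invoke rank–nullity: since $H_\nu$ has $\nu$ columns and its null space is $\{ \vv^T \mid \vv \in C_{(\sigma, \etav, T_\nu)} \}$, we get $\rk(H_\nu) = \nu - \dim_F C_{(\sigma, \etav, T_\nu)}$, so $\rk(H_\nu) = \nu$ if and only if $\dim_F C_{(\sigma, \etav, T_\nu)} = 0$, i.e.\ $C_{(\sigma, \etav, T_\nu)} = \{ \zerov \}$. For $E_\nu$ the same computation shows its row in the $i$-th position of the $j$-th block is $\sigma^{-b - i t_1 - k_j t_2}(\epv) = (\sigma^{-1})^{b + i t_1 + k_j t_2}(\epv)$, where $\epv$ has $F^{\sigma^{-1}}$-linearly independent entries by Remark \ref{remark:sigmat1}; thus $E_\nu \vv^T = 0$ iff $\vv \in C_{(\sigma^{-1}, \epv, T_\nu)}$, and the same rank–nullity step yields the second equivalence.

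The whole argument is essentially bookkeeping, so no step is a genuine obstacle; the only place requiring a little care is matching the index set of the stacked blocks of $H_\nu$ (resp.\ $E_\nu$) with $T_\nu$ and noting that collisions of exponents modulo $\sord$ merely repeat rows and therefore affect neither the rank nor the null space.
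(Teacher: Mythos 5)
Your argument is correct and is essentially the paper's own proof, which simply observes that $H_\nu$ and $E_\nu$ are the transposes of $H_{(\sigma, \etav, T_\nu)}$ and $H_{(\sigma^{-1}, \epv, T_\nu)}$ and that the codes are their left kernels; you have merely unpacked that one-line observation, row by row, and made the rank--nullity step and the handling of repeated exponents modulo $\sord$ explicit. No gap.
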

\begin{proof}
	This is immediate from Definition \ref{defn:ccode} observing that $H_\nu$ and $E_\nu$ are, respectively, the transpose matrices of $H_{(\sigma, \etav, T_\nu)}$ and $H_{(\sigma^{-1}, \epv, T_\nu)}$ as defined in \eqref{eq:H}.
\end{proof}

Since $\nu$ is the length of the codes $C_{(\sigma, \etav, T_\nu)}$ and $C_{(\sigma^{-1}, \epv, T_\nu)}$, if the minimum Hamming distances of these codes can be shown to be at least $\nu + 1$, then these codes are trivial, in the sense of only containing the zero codeword.
Any subset in $T_\nu$ implying, through either Theorem \ref{thm:ht} or Theorem \ref{thm:roos}, a minimum distance of at least $\nu + 1$ will result in both codes being trivial.
Hence, the following assumption guarantees that we will be able to correct up to $\nu$ errors.

\begin{assumption}[Error-correcting capacity]\label{assumption:Tnusubset}
	The rank weight $\nu$ of the error vector $\ev$ is such that $\nu \le \delta - 2$ and some subset of $T_\nu$ as defined in \eqref{eq:Tnu} is equal, modulo $\sord$, to a set of one of the following forms:
	\begin{enumerate}
		\item \emph{(Hartmann--Tzeng-like subset}). $b' + t_1' \{ 0, \dots, \delta' - 2 \} + t_2' \{ 0, \dots, r' \}$ where $\delta' \ge 2, r' \ge 0$, $(\sord, t_1') = 1$, $(\sord, t_2') < \delta'$ and $\delta' + r' > \nu$.
		\item \emph{(Roos-like subset}). $b' + t_1' \{ 0, \dots, \delta' - 2 \} + t_2' \{ k_0', \dots, k_{r'}' \}$ where $\delta' \ge 2, r' \ge 0$, $(\sord, t_1') = 1 = (\sord, t_2')$, $k_0' < \dots < k_{r'}'$, $k_{r'}' - k_0' \le \delta' + r' - 2$ and $\delta' + r' > \nu$.
	\end{enumerate}
\end{assumption}

\begin{remark}
	Since $T_\nu \subseteq T_{\nu - 1} \subseteq \dots \subseteq T_1$, if the conditions in Assumption \ref{assumption:Tnusubset} hold for some $\nu > 0$, then they also hold for every integer from $1$ to $\nu$.
\end{remark}

We will review some details on this assumption, including simple citeria that guarantee that it is satisfied for a given $\nu$ (including whether $\nu$ can be $\tau = \lfloor \frac{\delta + r - 1} 2 \rfloor$) whenever the set $T$ already has a Hartmann--Tzeng-like, or Roos-like, structure, in Section \ref{section:ontheassumption}.

The next results follow from Lemma \ref{lemma:nurank} and Lemma \ref{lemma:zerocodes}.

\begin{theorem}\label{thm:keyspan}
	Consider any decomposition $\yv = \cv + \ev$ where $\cv \in \mathcal C$ and $\rkw(\ev) = \nu$, with its associated $\shv$ as in Definition \ref{defn:span}.
	If $\nu$ satisfies Assumption \ref{assumption:Tnusubset}, the shortest $\sigma^{t_1}$-skew-feedback shift registers generating the $r+1$ sequences of length $\delta - 1$ given by $\s i j$ are the nonzero $F$-multiples of $\shv$.
\end{theorem}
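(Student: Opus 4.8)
The plan is to reduce the statement, via Lemma \ref{lemma:nurank} and Lemma \ref{lemma:zerocodes}, to showing that the code $C_{(\sigma, \etav, T_\nu)}$ of length $\nu$ contains only the zero codeword, and then to deduce this triviality from the designed-distance bounds of Theorem \ref{thm:ht} and Theorem \ref{thm:roos}. Note first that $\etav$ has $\nu$ $F^\sigma$-linearly independent entries (established in Section \ref{section:errorvalueslocators}) and that $\nu = \rkw(\ev) \le n \le \sord$, so $\etav$ meets the standing hypotheses of Definition \ref{defn:ccode} and of those two theorems with $\hv$ replaced by $\etav$.

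Concretely, I would proceed as follows. Since $\nu$ satisfies Assumption \ref{assumption:Tnusubset}, we have $\nu \le \delta - 2$ — needed both for $T_\nu$ and the matrix $H_\nu$ of \eqref{eq:HE} to be defined and for Lemma \ref{lemma:nurank} to apply — and there is a subset $T' \subseteq T_\nu$ which, modulo $\sord$, is a Hartmann--Tzeng-like or Roos-like set with parameters $\delta', r'$ satisfying $\delta' + r' > \nu$. By the characterization of $C_{(\sigma, \etav, T_\nu)}$ as the largest code in $F^\nu$ whose $\etav$-defining set contains $T_\nu$, we get $T' \subseteq \T{\etav}{\sigma}(C_{(\sigma, \etav, T_\nu)})$; and since a defining set is closed under adding multiples of $\sord$, the Hartmann--Tzeng-like (resp. Roos-like) set to which $T'$ is congruent modulo $\sord$ is likewise contained in $\T{\etav}{\sigma}(C_{(\sigma, \etav, T_\nu)})$. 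Applying Theorem \ref{thm:ht} (resp. Theorem \ref{thm:roos}) to $\mathcal C = C_{(\sigma, \etav, T_\nu)}$ with $\hv = \etav$ then yields $\rkd\bigl(C_{(\sigma, \etav, T_\nu)}\bigr) \ge \delta' + r' > \nu$.

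Since $C_{(\sigma, \etav, T_\nu)}$ is an $F$-linear code of length $\nu$, a nonzero codeword would have rank weight at most $\nu < \rkd\bigl(C_{(\sigma, \etav, T_\nu)}\bigr)$, a contradiction; hence $C_{(\sigma, \etav, T_\nu)} = \{\zerov\}$. By Lemma \ref{lemma:zerocodes} this forces $\rk(H_\nu) = \nu$, and then Lemma \ref{lemma:nurank} (applicable as $\nu \le \delta - 2$) gives exactly the assertion: the shortest $\sigma^{t_1}$-skew-feedback shift registers generating the $r+1$ length-$(\delta-1)$ sequences $\s i j$ are the nonzero $F$-multiples of $\shv$. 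The only delicate bookkeeping is verifying that $\etav$ genuinely satisfies the hypotheses of Theorems \ref{thm:ht}/\ref{thm:roos} and that restricting to $T' \subseteq T_\nu$ and reducing modulo $\sord$ preserves inclusion in the relevant defining set; the conceptual point — that a length-$\nu$ code with designed distance exceeding $\nu$ must be trivial — is immediate, so I do not expect a genuine obstacle here.
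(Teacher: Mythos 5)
Your proposal is correct and follows exactly the route the paper intends: Assumption \ref{assumption:Tnusubset} supplies a Hartmann--Tzeng-like or Roos-like subset of $T_\nu$ with $\delta' + r' > \nu$, so Theorem \ref{thm:ht} or Theorem \ref{thm:roos} forces the length-$\nu$ code $C_{(\sigma, \etav, T_\nu)}$ to be trivial, whence Lemma \ref{lemma:zerocodes} gives $\rk(H_\nu) = \nu$ and Lemma \ref{lemma:nurank} concludes. Your extra bookkeeping (that $\etav$ has $\nu$ $F^\sigma$-linearly independent entries and that $\nu \le \sord$) is a welcome explicit check of what the paper leaves implicit.
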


\begin{theorem}\label{thm:keylocator}
	Consider any decomposition $\yv = \cv + \ev$ where $\cv \in \mathcal C$ and $\rkw(\ev) = \nu$, with its associated $\ldv$ as in Definition \ref{defn:locator}.
	If $\nu$ satisfies Assumption \ref{assumption:Tnusubset}, the shortest $\sigma^{-t_1}$-skew-feedback shift registers generating the $r+1$ sequences of length $\delta - 1$ given by $\st i j$ are the nonzero $F$-multiples of $\ldv$.
\end{theorem}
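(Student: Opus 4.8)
The plan is to follow exactly the reduction announced just before the statement, so that the proof becomes a short composition of Lemma~\ref{lemma:nurank}, Lemma~\ref{lemma:zerocodes}, and the two distance bounds. First I would note that Assumption~\ref{assumption:Tnusubset} already includes the hypothesis $\nu \le \delta - 2$ needed by Lemma~\ref{lemma:nurank}; that lemma then reduces the claim to showing that the matrix $E_\nu$ from~\eqref{eq:HE} has rank $\nu$. Next, Lemma~\ref{lemma:zerocodes} turns this into the statement that the length-$\nu$ code $C_{(\sigma^{-1}, \epv, T_\nu)}$ contains only the zero codeword, so everything hinges on that triviality.

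To obtain it, I would use the remaining content of Assumption~\ref{assumption:Tnusubset}: some subset $T'$ of $T_\nu$ coincides, modulo $\sord$, with a set $\tilde T$ of Hartmann--Tzeng-like shape (case~1) or Roos-like shape (case~2), and in either case $\delta' + r' > \nu$. Since $\sigma^{-i}$ depends only on $i$ modulo $\sord$, reducing the index set modulo $\sord$ does not change a code, so $C_{(\sigma^{-1}, \epv, T_\nu)} \subseteq C_{(\sigma^{-1}, \epv, T')} = C_{(\sigma^{-1}, \epv, \tilde T)}$ (a larger index set gives a smaller code). I would then check that Theorem~\ref{thm:ht} (case~1) or Theorem~\ref{thm:roos} (case~2) applies to $C_{(\sigma^{-1}, \epv, \tilde T)}$ \emph{with the automorphism $\sigma^{-1}$}: its order is again $\sord$; the vector $\epv$ lies in $F^\nu$ with $\nu \le [F : F^\sigma] = \sord$; its entries are $F^\sigma$-linearly independent, hence $F^{\sigma^{-1}}$-linearly independent because $F^{\sigma^{-1}} = F^\sigma$ by Remark~\ref{remark:sigmat1}; and the coprimality hypotheses $(\sord, t_1') = 1$, together with $(\sord, t_2') < \delta'$ in case~1 or $(\sord, t_2') = 1$ in case~2, are exactly the ones demanded for $\sigma^{-1}$. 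The bound then yields $\rkd(C_{(\sigma^{-1}, \epv, \tilde T)}) \ge \delta' + r' \ge \nu + 1$.

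Finally, any nonzero vector of $F^\nu$ has rank weight at most $\nu$, so a code of length $\nu$ whose minimum rank distance exceeds $\nu$ must be the zero code; hence $C_{(\sigma^{-1}, \epv, \tilde T)}$, and a fortiori its subcode $C_{(\sigma^{-1}, \epv, T_\nu)}$, is trivial. Running this back through Lemma~\ref{lemma:zerocodes} gives $\rk(E_\nu) = \nu$, and Lemma~\ref{lemma:nurank} then yields the assertion: the shortest $\sigma^{-t_1}$-skew-feedback shift registers for the sequences $\st i j$ are the nonzero $F$-multiples of $\ldv$. The companion result, Theorem~\ref{thm:keyspan}, would be proved by the identical chain, with $H_\nu$, $\overline H_\nu$, and $C_{(\sigma, \etav, T_\nu)}$ replacing $E_\nu$, $\overline E_\nu$, and $C_{(\sigma^{-1}, \epv, T_\nu)}$, and with Theorem~\ref{thm:ht}/\ref{thm:roos} applied to $\sigma$ itself.

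I do not expect a genuine obstacle; the step that needs the most care is the bookkeeping around the inverse automorphism, namely verifying that passing from $\sigma$ to $\sigma^{-1}$ leaves both the order $\sord$ and the fixed field $F^\sigma$ unchanged, since this is what makes the linear independence of $\epv$ and the gcd conditions of Theorem~\ref{thm:ht} and Theorem~\ref{thm:roos} transfer verbatim. Both facts are already recorded in Remark~\ref{remark:sigmat1}, so once they are cited the remainder is a purely formal assembly of the two lemmas and the two bounds.
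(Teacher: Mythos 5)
Your proposal is correct and follows exactly the route the paper intends: Assumption~\ref{assumption:Tnusubset} supplies a Hartmann--Tzeng-like or Roos-like subset of $T_\nu$ to which Theorem~\ref{thm:ht} or Theorem~\ref{thm:roos} applies (with $\sigma^{-1}$ in place of $\sigma$, which is harmless since the order and fixed field are unchanged), forcing the length-$\nu$ code $C_{(\sigma^{-1}, \epv, T_\nu)}$ to be trivial, whence Lemma~\ref{lemma:zerocodes} gives $\rk(E_\nu)=\nu$ and Lemma~\ref{lemma:nurank} concludes. The paper leaves this chain implicit (stating only that the result ``follows from Lemma~\ref{lemma:nurank} and Lemma~\ref{lemma:zerocodes}''), and your write-up supplies precisely the intended details, including the correct inclusion $C_{(\sigma^{-1}, \epv, T_\nu)} \subseteq C_{(\sigma^{-1}, \epv, T')}$ for $T' \subseteq T_\nu$.
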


Thus, if there is some $\cv \in \mathcal C$ such that its distance to the received $\yv$ satisfies Assumption \ref{assumption:Tnusubset}, both $\shv$ and $\ldv$ can be retrieved by solving a skew-feedback shift-register synthesis problem. Then, as discussed in the previous section, one can retrieve $\epv$ and $\etav$, and finally $\ev$ and $\cv$. Note that, as a consequence of Lemma \ref{lemma:epsilonoreta}, Remark \ref{remark:epsilonoreta2} and Assumption \ref{assumption:Tnusubset}, when solving for $\etav$ in Equation \eqref{eq:stee} or for $\epv$ in Equation \eqref{eq:see}, the only solution will lead to the only codeword at minimum rank distance to $\yv$; in particular, the resulting error locators after solving $\etav$ from \eqref{eq:stee} will be in the $F^\sigma$-span of $\hv$.

Hence, Algorithm \ref{alg:span} and Algorithm \ref{alg:locator} are able to correct errors of rank weight $\nu$ for any $\nu$ satisfying Assumption \ref{assumption:Tnusubset}.

\begin{algorithm}
	\caption{Decoding algorithm through the error span polynomial}\label{alg:span}
	\begin{algorithmic}[1]
		\Input A word $\yv \in F^n$ and parameters $\hv, \sigma, b, \delta \ge 2, t_1, t_2, {k_0, \dots, k_r}$.
		\Require $b + t_1 \{ 0, \dots, \delta - 2 \} + t_2 \{ k_0, \dots, k_r \} \subseteq \T \hv \sigma(\mathcal C)$ and $(\sord, t_1) = 1$.
		\Require There exists $\cv \in \mathcal C$ such that $\rkd(\cv, \yv) = \nu$ satisfies Assumption \ref{assumption:Tnusubset}.
		\Output $\cv \in F^n$.
		\Ensure $\cv$ is the nearest codeword in $\mathcal C$ to $\yv$.
		\Statex 

		\State $\s i j \leftarrow \yv \sigma^{b + t_1 i + t_2 k_j}\left( \hv \right)^T$ for $0 \le i \le \delta - 2, 0 \le j \le r$ \Comment{see \eqref{eq:s}, \eqref{eq:sdeh}}
		\If{$\s i j = 0$ for all computed $i, j$}
			\State \Return $\yv$
		\EndIf
		\State $\shv = (\sh_0, \dots, \sh_\nu) \leftarrow$ the shortest $\sigma^{t_1}$-SFSR generating $\s i j$ \Comment{see Theorem \ref{thm:keyspan}}
		\State Find $\epv = (\ep_1, \dots, \ep_\nu)$ from $\shv$ \Comment{see Remark \ref{remark:shld2epeta}}
		\State $\st i j \leftarrow \sigma^{-b - t_1 i - t_2 k_j}\left(\s i j\right)$ for $0 \le i \le \delta - 2, 0 \le j \le r$ \Comment{see \eqref{eq:st}}
		\State Find $\etav$ from $\epv$ and $\st i j$ by solving \eqref{eq:stee}
		\State Find $B$ from $\etav$ and $\hv$ by solving \eqref{eq:eta}
		\State $\ev \leftarrow \epv B$ \Comment{see \eqref{eq:epsilon}}
		\State \Return $\yv - \ev$
	\end{algorithmic}
\end{algorithm}

\begin{algorithm}
	\caption{Decoding algorithm through the error locator polynomial}\label{alg:locator}
	\begin{algorithmic}[1]
		\Input A word $\yv \in F^n$ and parameters $\hv, \sigma, b, \delta \ge 2, t_1, t_2, {k_0, \dots, k_r}$.
		\Require $b + t_1 \{ 0, \dots, \delta - 2 \} + t_2 \{ k_0, \dots, k_r \} \subseteq \T \hv \sigma(\mathcal C)$ and $(\sord, t_1) = 1$.
		\Require There exists $\cv \in \mathcal C$ such that $\rkd(\cv, \yv) = \nu$ satisfies Assumption \ref{assumption:Tnusubset}.
		\Output $\cv \in F^n$.
		\Ensure $\cv$ is the nearest codeword in $\mathcal C$ to $\yv$.
		\Statex 

		\State $\s i j \leftarrow \yv \sigma^{b + t_1 i + t_2 k_j}\left( \hv \right)^T$ for $0 \le i \le \delta - 2, 0 \le j \le r$ \Comment{see \eqref{eq:s}, \eqref{eq:sdeh}}
		\If{$\s i j = 0$ for all computed $i, j$}
			\State \Return $\yv$
		\EndIf
		\State $\st i j \leftarrow \sigma^{-b - t_1 i - t_2 k_j}\left(\s i j\right)$ for $0 \le i \le \delta - 2, 0 \le j \le r$  \Comment{see \eqref{eq:st}}
		\State $\ldv = (\ld_0, \dots, \ld_\nu) \leftarrow$ the shortest $\sigma^{-t_1}$-SFSR generating $\st i j$ \Comment{see Theorem \ref{thm:keylocator}}
		\State Find $\etav = (\eta_1, \dots, \eta_\nu)$ from $\ldv$ \Comment{see Remark \ref{remark:shld2epeta}}
		\State Find $\epv$ from $\etav$ and $\s i j$ by solving \eqref{eq:see} \label{algstep:epeta}
		\State Find $B$ from $\etav$ and $\hv$ by solving \eqref{eq:eta} \label{algstep:Blocator}
		\State $\ev \leftarrow \epv B$ \Comment{see \eqref{eq:epsilon}}
		\State \Return $\yv - \ev$
	\end{algorithmic}
\end{algorithm}

\begin{remark}\label{remark:Bbeforevalues}
	In Algorithm \ref{alg:locator}, the order of step \ref{algstep:Blocator} and step \ref{algstep:epeta} can be swapped.
\end{remark}

\begin{remark}\label{remark:precomputeH}
	The step of computing $\s i j$ amounts to left multiplying the $n \times (\delta - 1)(r+1)$ matrix $H_{(\sigma, \hv, T)}$ (which has the code $C_{(\sigma,\hv,T)}$, which contains $\mathcal C$, as its left kernel; remember Equation \eqref{eq:H}) by the vector $\yv$. The matrix $H_{(\sigma, \hv, T)}$ might be computed once and then used every time this algorithm is run with the same parameters.
\end{remark}

\begin{remark}\label{remark:algspangabidulin}
	For $b = r = 0$ and $\sigma^{t_1}$ the Frobenius endomorphism of $F$ a finite field (that is, when considering a Gabidulin code or a subcode thereof), Algorithm \ref{alg:span} is equivalent to \cite[Algorithm 3]{SJB11}.
	In fact, under these conditions, if the syndromes are computed using the matrix as described in Remark \ref{remark:precomputeH}, the check of whether the syndromes are zero in Algorithm \ref{alg:span} is dropped, the SFSR synthesis algorithm is the same, a basis of the kernel of the map is computed as roots of a linearized polynomial, and \eqref{eq:stee} is solved as described later in Section \ref{section:gabidulinsalgorithm} (as opposed to solving it as a general linear equation system), then they are the same algorithm up to decoding failure handling.
	Assumption \ref{assumption:Tnusubset} is always satisfied in this case, as we shall note later, shortly after Lemma \ref{lemma:assumptionTnuimpliesassumptionTnusubset}.
\end{remark}

\begin{remark}\label{remark:sfsrredundancy}
	$\shv$ and $\ldv$ are SFSRs generating respectively $\s i j$ and $\st i j$, and therefore they generate the right-hand sides of \eqref{eq:see} and \eqref{eq:stee} respectively. As a result, if $\epv, \etav$ are such that they satisfy \eqref{eq:see} or, equivalently, \eqref{eq:stee}, for all $0 \le i < \nu$ and all $0 \le j \le r$, then they do so for the remaining $i$ and all $j$.
	For example, if we got $\shv$ as a $\sigma^{t_1}$-SFSR of length $\nu$ for $\s i j$ and $\etav$ is such that \eqref{eq:stee}, and therefore the equivalent \eqref{eq:see}, holds for some $j$ and for all $0 \le i < \nu$, then
	\[
		\begin{aligned}
			\s \nu j & = -\sh_0^{-1}\sum_{l=1}^\nu \sh_l \sigma^{t_1 l}\left( \s {\nu-l} j \right) \\
			         & = -\sh_0^{-1}\sum_{l=1}^\nu \sh_l \sigma^{t_1 l}\left( \sum_{k=1}^\nu \ep_k \sigma^{b + t_1 (\nu - l) + t_2 k_j} (\eta_k) \right) \\
			         & = -\sh_0^{-1} \sum_{k=1}^\nu \sigma^{b + t_1 \nu + t_2 k_j}(\eta_k) \left( \sum_{l=1}^\nu \sh_l \sigma^{t_1 l}( \ep_k ) \right) \\
			         & = -\sh_0^{-1} \sum_{k=1}^\nu \sigma^{b + t_1 \nu + t_2 k_j}(\eta_k) (- \sh_0 \ep_k) \\
			         & =  \sum_{k=1}^\nu \ep_k \sigma^{b + t_1 \nu + t_2 k_j}(\eta_k),
		\end{aligned}
	\]
	which is the expression to be satisfied for $\s \nu j$, equivalent to the one for $\st \nu j$. This means that, in the step of solving a linear system, the equations for $\nu \le i \le \delta - 2$, $0 \le j \le r$ are guaranteed to be redundant. Furthermore, if $\nu \le \delta - 1$, which is implied by Assumption \ref{assumption:Tnusubset}, any set of $\nu$ consecutive equations for some fixed $0 \le j \le r$ from \eqref{eq:see} or \eqref{eq:stee} will have a unique solution since \cite[Corollary 4.13]{LL88} applies to the $\nu \times \nu$ matrix associated to those equations; hence, if $\nu \le \delta - 1$ and there exists a solution $\epv, \etav$, the first $\nu$ equations from \eqref{eq:see} or \eqref{eq:stee} will yield that solution.
\end{remark}

These algorithms may fail, either being unable to complete some step or returning an element not in the code $\mathcal C$, if no codeword $\cv$ is at a distance $\nu$ from $\yv$ such that $\nu$ satisfies Assumption \ref{assumption:Tnusubset}. However, in Section \ref{section:toomanyerrors} we will show that, if any of these algorithms finishes and returns a valid codeword, it is guaranteed to be the one at minimum rank distance to the input $\yv$. Stated otherwise, any decoding failure either results in some step of the algorithm not being completed, or can be detected by checking whether the output is a codeword: the algorithms will not return a valid codeword which is not the nearest one to $\yv$. Section \ref{section:toomanyerrors} explores this in further detail.

\section{On the nature of Assumption \ref{assumption:Tnusubset}}\label{section:ontheassumption}

We have shown that we can correct up to $\nu$ errors providing that Assumption \ref{assumption:Tnusubset} holds for $\nu$. This requires finding some suitable subset of $T_\nu$, as defined in the assumption.
If $T$ is already defined as some set that guarantees a Hartmann--Tzeng-like or Roos-like bound for the minimum distance of the code through Theorem \ref{thm:ht} or Theorem \ref{thm:roos} respectively, then the most immediate subset of $T_\nu$ that can be considered in Assumption \ref{assumption:Tnusubset} is the very set $T_\nu$ with  the same structure as $T$ except replacing $b + t_1 \nu$ for $b$ and $\delta - \nu$ for $\delta$.
If $\nu$ is low enough, we may also directly find a BCH-like subset of $T_\nu$.
This results in the following conditions:

\begin{assumption}\label{assumption:Tnu}
	The rank weight $\nu$ of the error vector $\ev$ is such that at least one of the following statements is true:
	\begin{enumerate}
		\item \emph{(BCH-like set}). $\nu \le \lfloor \frac{\delta - 1}2 \rfloor$ (equivalently, $2\nu \le \delta - 1$).
		\item \emph{(Hartmann--Tzeng-like set}). $\nu \le \tau = \lfloor \frac{\delta + r - 1} 2 \rfloor$, $(\sord, t_2) < \delta - \nu$ (equivalently, $\nu \le \delta - (\sord, t_2) - 1$) and $\{ k_0, \dots, k_r \} = \{ 0, \dots, r\}$.
		\item \emph{(Roos-like set}). $\nu \le \tau = \lfloor \frac{\delta + r - 1} 2 \rfloor$, $(\sord, t_2) = 1$ and $k_r - k_0 \le \delta + r - \nu - 2$.
	\end{enumerate}
\end{assumption}

\begin{remark}\label{remark:uptotau}
	Assumption \ref{assumption:Tnu} implies that $\nu \le \tau = \lfloor \frac{\delta + r - 1} 2 \rfloor$.
	It also implies that $\nu \le \delta - 2$, as $\delta \ge 2$ (hence $\lfloor \frac{\delta-1}2\rfloor \le \lfloor \frac{2\delta - 3} 2 \rfloor = \delta - 2$), $(\sord, t_2) \ge 1$, and $r \le k_r - k_0 \le \delta + r - \nu - 2$.
	Combining this two observations, $\nu \le \nu + 2(\tau - \nu) = 2 \tau - \nu \le (\delta - 1 - \nu) + r \le (\delta - 1 - \nu)(r+1)$.
\end{remark}

\begin{remark}
	The first statement in Assumption \ref{assumption:Tnu} is not redundant. It does not impose conditions on $r$ or $t_2$: when it is satisfied, then only one block is actually necessary for decoding. Thus, this first statement describes a BCH-like scenario, as if $r = 0$. It is not hard to find circumstances where this first statement allows $\nu$ to take greater values than the ones covered by the second and the third statements: for example, for any case where $(\sord, t_2) = \delta - 1 > 1$, as well as any case such that $k_r - k_0 = \delta + r - 2$ and $\delta > 2$, only this first statement will make $\nu = 1$ work.
\end{remark}

\begin{lemma}\label{lemma:assumptionTnuimpliesassumptionTnusubset}
	Assumption \ref{assumption:Tnu} implies Assumption \ref{assumption:Tnusubset}. In addition, each statement in Assumption \ref{assumption:Tnu} implies that the minimum distance of the code is at least, respectively, $\delta$, $\delta + r$ and $\delta + r$, and therefore $\nu$ is within the error-correcting capacity of the code.
\end{lemma}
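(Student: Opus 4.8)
The plan is to treat the three statements of Assumption \ref{assumption:Tnu} one by one, in each case exhibiting the subset of $T_\nu$ required by Assumption \ref{assumption:Tnusubset} and, separately, a subset of $T$ itself on which Theorem \ref{thm:ht} or Theorem \ref{thm:roos} yields the claimed lower bound on $\rkd(\mathcal C)$. The clause $\nu \le \delta - 2$ that also appears in Assumption \ref{assumption:Tnusubset} is already recorded in Remark \ref{remark:uptotau}, so no extra argument is needed there; and since both assumptions only care about the sets modulo $\sord$, I can work with the integer sets directly.

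For the BCH-like statement, I would fix the index $j = 0$ and pass to the slice $b + t_1\nu + t_2 k_0 + t_1\{0, \dots, \delta - 2 - \nu\} \subseteq T_\nu$. This is of the Hartmann--Tzeng-like form of Assumption \ref{assumption:Tnusubset} with $b' = b + t_1\nu + t_2 k_0$, $t_1' = t_1$, $\delta' = \delta - \nu$, $r' = 0$, and $t_2'$ free, for which I would simply take $t_2' = t_1$ so that $(\sord, t_2') = 1 < \delta'$; the remaining conditions amount to $(\sord, t_1) = 1$ (which holds by \eqref{eq:T}), $\delta - \nu \ge 2$ (Remark \ref{remark:uptotau}), and $\delta' + r' = \delta - \nu > \nu$, which for integers is exactly $2\nu \le \delta - 1$. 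The analogous slice of $T$, namely $(b + t_2 k_0) + t_1\{0, \dots, \delta - 2\} \subseteq \T\hv\sigma(\mathcal C)$, lets Theorem \ref{thm:ht} apply with $r = 0$, giving $\rkd(\mathcal C) \ge \delta$, hence $\nu \le \lfloor(\delta-1)/2\rfloor \le \lfloor(\rkd(\mathcal C)-1)/2\rfloor$.

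For the Hartmann--Tzeng-like and Roos-like statements, the set $T_\nu$ already has the shape demanded by Assumption \ref{assumption:Tnusubset}: taking $b' = b + t_1\nu$, $t_1' = t_1$, $\delta' = \delta - \nu$, $t_2' = t_2$, $r' = r$ and the same $\{k_0, \dots, k_r\}$ (which is $\{0, \dots, r\}$ in the Hartmann--Tzeng-like case), the hypotheses of Assumption \ref{assumption:Tnu} translate term by term into $(\sord, t_2) < \delta - \nu = \delta'$ (statement 2), $k_r - k_0 \le \delta + r - \nu - 2 = \delta' + r' - 2$ (statement 3), and, in both cases, $\delta' + r' = \delta + r - \nu > \nu$, i.e. $\nu \le \tau$. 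For the distance bound, $T$ itself is already a Hartmann--Tzeng-like (resp. Roos-like) set for Theorem \ref{thm:ht} (resp. Theorem \ref{thm:roos}): in statement 2 one has $(\sord, t_2) < \delta - \nu \le \delta$, and in statement 3 one has $k_r - k_0 \le \delta + r - \nu - 2 \le \delta + r - 2$, both using $\nu \ge 0$. Those theorems then give $\rkd(\mathcal C) \ge \delta + r$, and with $\nu \le \tau = \lfloor(\delta + r - 1)/2\rfloor \le \lfloor(\rkd(\mathcal C)-1)/2\rfloor$ the error weight lies within the code's error-correcting capacity.

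There is no genuine obstacle here: the whole proof is the bookkeeping of checking that each inequality in Assumption \ref{assumption:Tnu} is precisely the one needed for the matching clause of Assumption \ref{assumption:Tnusubset} after the substitutions $b \mapsto b + t_1\nu$ and $\delta \mapsto \delta - \nu$, together with the harmless choice of a value for the free parameter $t_2'$ in the BCH-like case. The only points I would state with care are the integer equivalences between the ``$\delta' + r' > \nu$'' conditions and the floor bounds $\nu \le \lfloor(\delta-1)/2\rfloor$ and $\nu \le \tau$, and the fact that passing from $T_\nu$ (or from $T$) to one of its $j$-slices keeps it inside $\T\hv\sigma(\mathcal C)$.
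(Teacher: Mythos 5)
Your proposal is correct and follows essentially the same route as the paper: exhibit the $(\delta-\nu)$-shifted set $T_\nu$ (or, in the BCH-like case, a single $j$-slice of it) as the subset required by Assumption \ref{assumption:Tnusubset}, check that each inequality of Assumption \ref{assumption:Tnu} translates into the corresponding condition after the substitutions $b \mapsto b + t_1\nu$, $\delta \mapsto \delta - \nu$, and apply Theorem \ref{thm:ht} or Theorem \ref{thm:roos} to $T$ itself for the distance bound. Your version is in fact marginally more careful than the paper's in the BCH-like case, where you shift by $t_2 k_0$ so that the slice is literally contained in $T_\nu$.
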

\begin{proof}
	Remark \ref{remark:uptotau} has shown that Assumption \ref{assumption:Tnu} implies that $\nu \le \delta - 2$.
	If the first statement in Assumption \ref{assumption:Tnu} is true, then $b + t_1 \nu + t_1 \{ 0, \dots, \delta - \nu - 2 \} + 1\{ 0 \} \subseteq T_\nu$ fits both of the conditions in Assumption \ref{assumption:Tnusubset}, since $\delta' = \delta - \nu \ge \delta - \lfloor \frac{\delta - 1} 2 \rfloor > \lfloor \frac{\delta - 1} 2 \rfloor \ge \nu$, while $b + t_1 \{ 0, \dots, \delta - 2 \} + 1\{ 0 \} \subseteq T$ implies a lower bound of $\delta$ by both Theorem \ref{thm:ht} and Theorem \ref{thm:roos}.
	For the second and third statements, the set $T_\nu$ sastisfies respectively the first and the second condition in Assumption \ref{assumption:Tnusubset}, since $\nu$ being at most $\tau$ implies that $\delta' + r' = \delta - \nu + r \ge \delta + r - \lfloor \frac{\delta + r - 1} 2 \rfloor > \lfloor \frac{\delta + r - 1} 2 \rfloor \ge \nu$. Theorem \ref{thm:ht} and Theorem \ref{thm:roos} apply to $T$ respectively, for each statement, returning a lower bound of $\delta + r$.
\end{proof}

When the entries of $T$ are consecutive (and therefore $r = 0$), as it is the case for Gabidulin codes (\cite{SKK08,SJB11}) as well as the skew codes considered in \cite{GLN17s,GLN17pgz}, Assumption \ref{assumption:Tnu} is simply equivalent to $\nu$ being at most the error-correcting capacity $\tau$, which is usually already assumed when considering nearest-neighbor error correction. Hence, no assumption has to be considered in those works.
In general, Assumption \ref{assumption:Tnu} might not allow $\nu$ to reach the error-correcting capacity $\tau$ implied by those bounds. Firstly, $\tau$ might be above $\delta - 2$ if $r \ge \delta - 1$. Additionally, it might be the case that $(\sord, t_2)$ is less than $\delta$ but not less than $\delta - \tau$, or $k_r - k_0$ is at most $\delta + r - 2$ but greater than $\delta + r - \tau - 2$. In these situations, the two last statements in Assumption \ref{assumption:Tnu} will be false for $\nu = \tau$, even if a Hartmann--Tzeng, or Roos, bound applies to $T$ guaranteeing an error-correcting capacity of $\tau$.

We shall now characterize the situations where $\nu = \tau$ satisfies Assumption \ref{assumption:Tnu} and therefore we can reach the error-correcting capacity given by Theorem \ref{thm:ht} or Theorem \ref{thm:roos}.

\begin{assumption}\label{assumption:sigmat2dr}
	At least one of the following statements holds:
	\begin{enumerate}
		\item $r = 0$ (that is, a BCH-like case).
		\item $\delta$ is odd and $r = 1$.
		\item $(\sord, t_2) < \delta - \tau$ (i.e. $(\sord, t_2) \le \lfloor \frac{\delta - r} 2 \rfloor$) and $\{ k_0, \dots, k_r \} = \{ 0, \dots, r \}$.
		\item $(\sord, t_2) = 1$ and $k_r - k_0 \le \delta + r - \tau - 2$ (equivalently, $k_r - k_0 < \lfloor \frac {\delta + r} 2 \rfloor$; also equivalent to the number of integers missing in the range from $k_0$ to $k_r$, which is $k_r - k_0 - r$, being less than $\delta - \tau - 1 = \lfloor \frac{\delta - r} 2 \rfloor$).
	\end{enumerate}
\end{assumption}

The equivalences are a consequence of the following identity:
\[
	1 + \left \lfloor \frac{\delta - r} 2 \right \rfloor
	= \left \lfloor 1 + \delta - \frac{\delta+r}2 \right \rfloor
	= \delta + \left \lfloor - \frac{\delta + r - 1}{2} + \frac 1 2 \right \rfloor
	= \delta + \left \lceil - \frac{\delta + r - 1}{2} \right \rceil
	= \delta - \tau.
\]

\begin{lemma}\label{lemma:assumptionsigmat2drimpliesuptotau}
	Assumption \ref{assumption:Tnu} holds for $\nu = \tau$ if and only if Assumption \ref{assumption:sigmat2dr} holds.
\end{lemma}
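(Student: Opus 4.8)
The plan is to reduce the stated equivalence to a short arithmetic fact about the parities of $\delta$ and $r$, using the observation that, once $\nu$ is specialized to $\tau$, the last two alternatives of Assumption \ref{assumption:Tnu} literally coincide with the last two alternatives of Assumption \ref{assumption:sigmat2dr}. First I would spell out what each alternative of Assumption \ref{assumption:Tnu} says at $\nu = \tau$. In alternative (2) the clause $\nu \le \tau$ is vacuous; invoking the identity displayed before the lemma, $\delta - \tau = 1 + \lfloor \tfrac{\delta-r}{2}\rfloor$, the clause $(\sord, t_2) < \delta - \nu$ becomes $(\sord, t_2) \le \lfloor \tfrac{\delta-r}{2}\rfloor$, so alternative (2) at $\nu = \tau$ is word-for-word alternative (3) of Assumption \ref{assumption:sigmat2dr}. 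Likewise, in alternative (3) of Assumption \ref{assumption:Tnu}, the bound $k_r - k_0 \le \delta + r - \nu - 2$ at $\nu = \tau$ rewrites, via the same identity, as $k_r - k_0 \le \lfloor \tfrac{\delta+r}{2}\rfloor - 1$, i.e. alternative (4) of Assumption \ref{assumption:sigmat2dr}. Finally, alternative (1) at $\nu = \tau$ reads $\tau \le \lfloor \tfrac{\delta-1}{2}\rfloor$, which (since $\tau$ is an integer) is equivalent to $2\tau \le \delta - 1$.

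Consequently both assumptions are disjunctions that share their two final disjuncts, so the lemma follows once I show that $2\tau \le \delta - 1$ is equivalent to the disjunction of the first two alternatives of Assumption \ref{assumption:sigmat2dr}, namely ``$r = 0$, or ($\delta$ odd and $r = 1$)''. I would prove this directly from $\tau = \lfloor \tfrac{\delta + r - 1}{2}\rfloor$: when $\delta + r$ is odd one has $2\tau = \delta + r - 1$, and the inequality $2\tau \le \delta - 1$ amounts to $r \le 0$, hence $r = 0$ (and then $\delta$ is odd); when $\delta + r$ is even one has $2\tau = \delta + r - 2$, and $2\tau \le \delta - 1$ amounts to $r \le 1$, hence $r \in \{0,1\}$, giving $(r = 0,\ \delta$ even$)$ or $(r = 1,\ \delta$ odd$)$. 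Taking the union of these three surviving cases yields exactly $r = 0$ (with $\delta$ of either parity) together with $r = 1$ and $\delta$ odd, which is the required disjunction. Replacing the first disjunct of Assumption \ref{assumption:Tnu} at $\nu = \tau$ by this equivalent statement turns it verbatim into Assumption \ref{assumption:sigmat2dr}, completing the proof.

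This is a bookkeeping argument and I expect no genuine obstacle; the only thing demanding care is checking that the parenthetical reformulations built into Assumption \ref{assumption:sigmat2dr} — in particular $\delta - \tau - 1 = \lfloor \tfrac{\delta-r}{2}\rfloor$ and $\delta + r - \tau - 2 = \lfloor \tfrac{\delta+r}{2}\rfloor - 1$ — are consistently deduced from the displayed identity, so that alternatives (2) and (3) of Assumption \ref{assumption:Tnu} at $\nu = \tau$ really are the same conditions as alternatives (3) and (4) of Assumption \ref{assumption:sigmat2dr}, leaving only the elementary parity case analysis of $2\tau \le \delta - 1$ to be carried out by hand.
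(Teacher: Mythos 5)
Your proposal is correct and follows essentially the same route as the paper: both arguments identify the second and third alternatives of Assumption \ref{assumption:Tnu} at $\nu=\tau$ with the third and fourth alternatives of Assumption \ref{assumption:sigmat2dr} via the displayed identity $\delta-\tau = 1+\lfloor\frac{\delta-r}{2}\rfloor$, and reduce the rest to checking that $\tau\le\lfloor\frac{\delta-1}{2}\rfloor$ holds exactly when $r=0$ or ($\delta$ odd and $r=1$). The only cosmetic difference is that you organize this as a disjunct-by-disjunct equivalence with an explicit parity case analysis, whereas the paper proves one implication directly and the other by contrapositive; the underlying arithmetic is identical.
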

\begin{proof}
	When $r = 0$, the first statement in Assumption \ref{assumption:Tnu} is simply $\nu \le \tau$, as in this case $\tau = \lfloor \frac{\delta + r - 1} 2 \rfloor = \lfloor \frac{\delta - 1} 2 \rfloor$. For the same reason, this first statement also reduces to $\nu \le \tau$ when $\delta$ is odd and $r = 1$. Thus, in these cases, Assumption \ref{assumption:Tnu} is satisfied for $\nu = \tau$.
	It is immediate that the third and fourth statement imply the second and third statement, respectively, of Assumption \ref{assumption:Tnu}.

	Now assume that none of these four statements holds. Then, either $r \ge 2$ or $\delta$ is even, so $\tau = \lfloor \frac{\delta + r - 1}2 \rfloor > \lfloor \frac{\delta - 1}{2} \rfloor$, therefore the first statement in Assumption \ref{assumption:Tnu} is false. It is straightforward that the remaining statements in Assumption \ref{assumption:Tnu} are also false.
\end{proof}

\begin{remark}\label{remark:rdelta}
	Any of the statements in Assumption \ref{assumption:sigmat2dr} implies that $r \le \delta - 2$. Therefore, by the previous result, Assumption \ref{assumption:Tnu} will not hold for $\nu = \tau$ if $r > \delta - 2$. In general, an increase of $1$ in either $(\sord, t_2)$ or $k_r - k_0$ increases in $1$ the maximum value for $r$ in the third and, respectively, fourth statement in Assumption \ref{assumption:sigmat2dr}.
\end{remark}
	
\begin{remark}
	In \cite[Section VI.B]{FT91}, which deals with decoding up to the Roos bound for cyclic codes, it is assumed (in addition to a fact analogous to $(\sord, t_2)$ being $1$) that $r \le \delta - 2$ and $k_r - k_0 < \tau$; however, since $\tau \le \lfloor \frac {\delta + r} 2 \rfloor$, the latter condition implies that $k_r - k_0 < \lfloor \frac {\delta + r} 2 \rfloor$, which itself implies that $r \le \delta - 2$, making this condition redundant. As a result, the assumptions required in \cite[Section VI.B]{FT91} are the cyclic analogous ones from a particular case of the fourth statement in Assumption \ref{assumption:sigmat2dr}. The proof in \cite[Section VI.B]{FT91} can still be made to work for $k_r - k_0 = \tau < \lfloor \frac {\delta + r} 2 \rfloor$: the condition $k_r - k_0 < \tau$ is used in order to show twice that $\tau + k_r - k_0 < \delta + r - 1$ (written firstly as $\delta - 2 - \tau + \tau - (k_r - r - k_0) > \tau$ and then as $\delta - 2 - (k_r - r - k_0) + 1 > \tau$); this is achieved from the fact that $\tau + k_r - k_0 < 2\lfloor \frac{\delta + r - 1} 2 \rfloor \le \delta + r - 1$, but $\lfloor \frac{\delta + r - 1} 2 \rfloor + \lfloor \frac{\delta + r} 2 \rfloor$ is also at most (in fact, equals) $\delta + r - 1$.
\end{remark}

One can also consider the original Hartmann--Tzeng bound, as the analogue of the one described in \cite{HT72}, which is the one given in Theorem \ref{thm:ht} requiring, additionally, that $(\sord, t_2) = 1$. This also results in a case of the bound in Theorem \ref{thm:roos}. The interest of this case rests in the fact that, once $(\sord, t_1) = 1 = (\sord, t_2)$, if $r > \delta - 2$, which is required to be false as noted in Remark \ref{remark:rdelta}, one can rearrange the parameters $\delta, r, t_1, t_2$ so that the set $T$ is seen as $b + t_2 \{ 0, \dots, \delta' - 2 = r \} + t_1 \{ 0, \dots, r' = \delta - 2 \}$ and then $r' < \delta' - 2$. Furthermore, in this case, Assumption \ref{assumption:sigmat2dr} is reduced to $r$ being at most $\delta - 2$, which can be achieved by rearranging the parameters if necessary:

\begin{assumption}\label{assumption:1rd}
	$(\sord, t_2) = 1$, $\{ k_0, \dots, k_r \} = \{ 0, \dots, r \}$ and $r \le \delta - 2$.
\end{assumption}

\begin{lemma}
	Assume $(\sord, t_2) = 1$ and $\{ k_0, \dots, k_r \} = \{ 0, \dots, r \}$. Then, Assumption \ref{assumption:sigmat2dr} holds if and only if $r \le \delta - 2$. In particular, if Assumption \ref{assumption:1rd} holds, then Assumption \ref{assumption:Tnu} holds for $\nu = \tau$.
\end{lemma}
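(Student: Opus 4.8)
The plan is to prove the stated equivalence by treating each implication on its own, and then to read off the final assertion from Lemma~\ref{lemma:assumptionsigmat2drimpliesuptotau}. The forward implication is immediate: Remark~\ref{remark:rdelta} already records that every one of the four statements comprising Assumption~\ref{assumption:sigmat2dr} forces $r \le \delta - 2$, so as soon as Assumption~\ref{assumption:sigmat2dr} holds we obtain $r \le \delta - 2$, without even invoking the standing hypotheses of this lemma.

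For the reverse implication I would work under the hypotheses $(\sord, t_2) = 1$ and $\{ k_0, \dots, k_r \} = \{ 0, \dots, r \}$, the second of which gives $k_r - k_0 = r$, together with the assumption $r \le \delta - 2$. The only computation needed is the floor estimate $\tau = \lfloor \frac{\delta + r - 1}{2} \rfloor \le \lfloor \frac{2\delta - 3}{2} \rfloor = \delta - 2$, valid precisely because $\delta + r - 1 \le 2\delta - 3$ when $r \le \delta - 2$. Then $k_r - k_0 = r$ and $\tau \le \delta - 2$ give $k_r - k_0 = r \le \delta + r - \tau - 2$, so the fourth statement of Assumption~\ref{assumption:sigmat2dr} (which also demands $(\sord, t_2) = 1$, part of our hypotheses) is satisfied. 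This closes the equivalence. One should also note that this estimate dovetails with the displayed identity $\delta - \tau = 1 + \lfloor \frac{\delta - r}{2} \rfloor$ recorded just after Assumption~\ref{assumption:sigmat2dr}, which is what makes the inequality $k_r - k_0 \le \delta + r - \tau - 2$ the same as $k_r - k_0 < \lfloor \frac{\delta + r}{2} \rfloor$.

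For the ``in particular'' clause, I observe that Assumption~\ref{assumption:1rd} is exactly the conjunction of $(\sord, t_2) = 1$, $\{ k_0, \dots, k_r \} = \{ 0, \dots, r \}$ and $r \le \delta - 2$; by the equivalence just established it therefore implies Assumption~\ref{assumption:sigmat2dr}, and Lemma~\ref{lemma:assumptionsigmat2drimpliesuptotau} then yields that Assumption~\ref{assumption:Tnu} holds for $\nu = \tau$. There is no genuine obstacle here: the only thing to watch is the floor-function arithmetic, and even that is already essentially packaged in the equivalences stated after Assumption~\ref{assumption:sigmat2dr} and in Remark~\ref{remark:uptotau}, so the whole argument is short bookkeeping rather than anything delicate.
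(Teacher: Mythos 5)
Your proof is correct and follows essentially the same route as the paper's: the forward implication is read off from Remark \ref{remark:rdelta}, and the reverse implication is settled by verifying the fourth statement of Assumption \ref{assumption:sigmat2dr}, the only cosmetic difference being that you check the inequality in the form $k_r - k_0 = r \le \delta + r - \tau - 2$ via $\tau \le \delta - 2$, whereas the paper uses the equivalent ``missing integers'' form $k_r - k_0 - r = 0 < 1 \le \lfloor \frac{\delta - r}{2} \rfloor$. The final clause is handled exactly as the paper intends, via Lemma \ref{lemma:assumptionsigmat2drimpliesuptotau}.
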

\begin{proof}
	Remark \ref{remark:rdelta} has shown that Assumption \ref{assumption:sigmat2dr} implies that $r \le \delta - 2$. Under Assumption \ref{assumption:1rd}, the fourth statement in Assumption \ref{assumption:sigmat2dr} is true since $k_r - k_0 = r$ and therefore $k_r - k_0 - r = 0 < 1 \le \lfloor \frac{\delta - r} 2 \rfloor$.
\end{proof}

\begin{remark}
	As
	$\lceil \frac{\delta + r - 2}2 \rceil \le \tau = \lfloor \frac{\delta + r - 1}2 \rfloor$,
	if $r \le \delta - 2$ we get that $\tau \le \lfloor \frac{2\delta - 3}2 \rfloor = \delta - 2$, while $r \ge \delta - 1$ would imply that $\tau \ge \lceil \frac{2 \delta - 3}2 \rceil = \delta - 1$. This means that $r$ might be replaced with $\tau$ in the last condition of Assumption \ref{assumption:1rd}.
\end{remark}

\begin{remark}
	An analogous version of Assumption \ref{assumption:1rd} is taken in \cite[Section VI.A]{FT91}, which discusses decoding up to the original Hartmann--Tzeng bound in the cyclic case, and also notes the possibility of rearranging the set $T$ when $r > \delta - 2$ since both $t_1$ and $t_2$ are assumed to be coprime with the analogous of $\sord$.
\end{remark}

Another relevant question is whether the error correction capacity allowed by Assumption \ref{assumption:Tnu} is larger than the one covered by the equivalent BCH scenario (that is, replacing $r$ by $0$ and leaving all the other parameters the same).
In the Hartmann--Tzeng case, where $k_0, \dots, k_r$ are consecutive, this happens exactly when $\lfloor \frac{\delta - 1}2 \rfloor$ is less than both $\tau = \lfloor \frac{\delta + r - 1} 2 \rfloor$ and $\delta - (\sord, t_2) - 1$, as this means that the second statement of Assumption \ref{assumption:Tnu} allows for a greater $\nu$ than the first one. The condition $\lfloor \frac{\delta - 1}2 \rfloor < \tau$ is satisfied when either $r \ge 2$, or $r = 1$ and $\delta$ is even, while the other one is satisfied when $(\sord, t_2)$ is less than $\delta - 1 - \lfloor \frac{\delta - 1}2 \rfloor = - \lfloor \frac {1 - \delta} 2 \rfloor = \lceil \frac {\delta - 1} 2 \rceil = \lfloor \frac \delta 2 \rfloor$.
In the Roos case, where $(\sord, t_2) = 1$, this requires that $\lfloor \frac{\delta - 1}2 \rfloor$ is less than both $\tau$ and $\delta + r - 2 - k_r + k_0$. Again, the first condition on $\lfloor \frac{\delta - 1}2 \rfloor$ occurs when either $r \ge 2$, or $r = 1$ and $\delta$ is even, while the second one is fulfilled if and only if $k_r - k_0$ is less than $\delta + r - 2 - \lfloor \frac{\delta - 1} 2 \rfloor = \lfloor \frac \delta 2 \rfloor + r - 1$.

\section{Identifying and handling decoding failures} \label{section:toomanyerrors}

	Algorithm \ref{alg:span} and Algorithm \ref{alg:locator} consider as a precondition that there exists a codeword $\cv$ such that $\rkd(\cv, \yv) = \nu$ satisfies Assumption \ref{assumption:Tnusubset}. Hence, they may fail if the rank distance from $\yv$ to any codeword, including the one that was probably meant to be received, is above the error-correcting capacity implied by Assumption \ref{assumption:Tnusubset}. In this section, we shall describe the errors that might arise in such a situation.
	
	For this purpose, we now must distinguish between $\nu$ the minimum distance between $\yv$ and any codeword in $\mathcal C$, and $\bar \nu$ the length of the SFSR that is returned in the algorithm, that is, of the shortest $\sigma^{t_1}$-SFSR generating $\s i j$ in the case of Algorithm \ref{alg:span} or the shortest $\sigma^{-t_1}$-SFSR generating $\st i j$ for Algorithm \ref{alg:locator}. Since there is at least one codeword $\cv$ such that $\rkd(\cv, \yv) = \nu$, and therefore some SFSR of length $\nu$ generating the sequences (see Proposition \ref{prop:keyspan} and Proposition \ref{prop:keylocator}), $\bar \nu \le \nu$. This prevents the algorithms from returning a valid codeword that is not at the minimum possible distance to $\yv$, since the resulting error vector $\ev$ is computed as $\epv B$ and, as a consequence, its rank weight is at most the length of $\epv$, which is $\bar \nu$. Thus, if the algorithm returns a valid codeword $\cv \in \mathcal C$, then $\rkd(\yv, \cv)$ is the minimum possible, even if $\nu$ does not satisfy Assumption \ref{assumption:Tnusubset}. Hence, there is a decoding failure (that is, the algorithm cannot retrieve a codeword at minimum distance to $\yv$) if and only if either the algorithm is unable to complete some step, or it returns an invalid codeword, in the sense that its output is not in $\mathcal C$.

	Assumption \ref{assumption:Tnusubset} is required in order to guarantee that the shortest SFSR contains the coefficients for an error span polynomial in Algorithm \ref{alg:span} or an error locator polynomial in Algorithm \ref{alg:locator}. If the SFSR synthesis algorithm returns a correct error span polynomial $\shv$ or error locator polynomial $\ldv$ for some codeword (whose distance to $\yv$, as noted above, will be minimal), then the remaining steps will succeed in returning a codeword, even if the observed $\bar \nu$ does not satisfy Assumption \ref{assumption:Tnusubset}. There is, therefore, a decoding failure if and only if the obtained $\shv$ is not associated to an error span polynomial (in Algorithm \ref{alg:span}) or $\ldv$ is not associated to an error locator polynomial (in Algorithm \ref{alg:locator}).
	This will be the case if there exists some SFSR generating the sequences which is shorter than the ones associated to correct error span or locator polynomials, and can also be expected if there is such an SFSR of the same length, since the SFSR synthesis algorithms might not select the correct SFSR over the others.
	The condition $\nu \le \delta - 2$ required by Assumption \ref{assumption:Tnusubset} prevents the shortest SFSR from having to satisfy zero equations, while the rest of the assumption prevents, through Lemmas \ref{lemma:nurank} and \ref{lemma:zerocodes}, the emergence of other solutions of equal, or shorter, length through exploiting linear dependencies between the rows of $H_{\nu}$ or $E_{\nu}$.
	Hence, such a situation results if $\nu > \delta - 2$ or if the matrices $H_{\nu}$ and $E_{\nu}$ insert solutions into \eqref{eqline:eespan} and, respectively, \eqref{eqline:eelocator}. This could result in the following errors while running the algorithm.
	
	Once $\shv$ or $\ldv$ are not related to an error span or, respectively, locator polynomial, the most immediate error that could be found is the $F^\sigma$-dimension of, respectively, $\ker(\sh(\sigma^{t_1}))$ and $\ker(\ld(\sigma^{-t_1}))$, not reaching the observed $\bar \nu$. This can be the case, for example, if the SFSR is $(1, 0, \dots, 0)$, which might be a minimum length SFSR if $\nu > \delta - 2$ as noted above. This would prevent the algorithm from getting $\epv$ from $\shv$ or $\etav$ from $\ldv$.

	If $r > 0$ and $\epv$ or $\etav$ is invalid (in the sense that the algorithm is able to construct it from $\shv$ or $\ldv$ respectively, but it is not part of a valid solution), the linear system \eqref{eq:see} or, respectively, \eqref{eq:stee} might have no solutions, so the algorithm might fail at the step of computing such a solution in order to have both $\etav$ and $\epv$. If $r = 0$, this step will not fail:
	by Remark \ref{remark:sfsrredundancy}, which applies even if $\etav$, $\epv$ are not from a valid solution, the linear system \eqref{eq:see} or \eqref{eq:stee} is equivalent to the one formed from its first $\min(\bar \nu, \delta - 1)$ equations, whose associated matrix, by \cite[Corollary 4.13]{LL88} and the fact that the error values and the error locators are computed as the basis of an $F^\sigma$-space, has $\min(\bar \nu, \min(\bar \nu, \delta - 1)) = \min(\bar \nu, \delta - 1)$ as its rank. Since this rank matches the number of equations, the linear system has at least one solution.	

	The step of computing $B$ from $\etav$ and $\hv$ amounts to finding the coordinates of the error locators with respect to $\hv$, which is a basis for a $F^\sigma$-subspace of $F$. This step fails if and only if some entry in the computed $\etav$ is not in the $F^\sigma$-span of the elements in $\hv$. In particular, if $n = \sord$, this step will not fail.

	The output $\cv$ is computed as $\yv - \ev = \yv - \epv B$, where $B$ is chosen so that \eqref{eq:eta} is satisfied, and $\epv$ and $\etav$ are such that \eqref{eq:syndromeeq} holds (for $d \in T \subset \T \hv \sigma(\mathcal C)$, where $T$ is as in \eqref{eq:T}). Hence, the $d$-th syndrome, for $d \in T$, of $\ev$ equals the one of $\yv$, and therefore $\cv$ is a codeword in $C_{(\sigma,\hv,T)}$. If the considered code $\mathcal C$ equals $C_{(\sigma,\hv,T)}$, instead of a proper subcode thereof, any output for the algorithm will be correct, so in that case any decoding failure will consist in the algorithm not being able to finish.
	
	As a result, when no codeword exists at a distance $\nu$ from $\yv$ satisfying Assumption \ref{assumption:Tnusubset}, the following errors might (or might not) occur:

	\begin{enumerate}
		\item The dimension of $\ker(\sh(\sigma^{t_1}))$ or $\ker(\ld(\sigma^{-t_1}))$ might be less than the length $\bar \nu$ of the obtained SFSR. This would prevent $\epv$ or $\etav$ from being constructed from $\shv$ or, respectively, $\ldv$.
		\item \label{failure:nosolution} If $r > 0$, there might be no solutions for \eqref{eq:see} or \eqref{eq:stee}. This causes the step of obtaining $\etav$ from $\epv$ or vice versa to fail.
		\item \label{failure:notinhspan} If $n < \sord$, some error locators might not be in the $F^\sigma$-span of $\hv$, so $B$ cannot be computed.
		\item \label{failure:notincode} If $\mathcal C \neq C_{(\sigma,\hv,T)}$, the retrieved codeword might not be in $\mathcal C$. Hence, the probable last step of getting the message from the codeword would fail.
	\end{enumerate}
	Handling a decoding failure is equivalent to checking for the possible failures above in Algorithm \ref{alg:span} or Algorithm \ref{alg:locator}. By handling these errors, returning `decoding failure' once any of them is found, said precondition can be omitted; instead, we may add the following postcondition: if `decoding failure' is returned, then no codeword is at a distance satisfying Assumption \ref{assumption:Tnusubset}.

	If $\bar \nu \le \delta - 1$ and, as suggested in Remark \ref{remark:sfsrredundancy}, only $\bar \nu$ equations are considered when solving the linear system that relates $\etav$ and $\epv$, then this linear system will be guaranteed to have a single solution, skipping failure \ref{failure:nosolution}. This is equivalent to finding a solution as if the set $T$ was reduced to $b + t_1 \{ 0, \dots, \delta - 2 \}$, so if the solution is not a solution for the full system \eqref{eq:see} or \eqref{eq:stee}, the failure \ref{failure:notincode} is bound to happen even if $\mathcal C$ matches $C_{(\sigma,\hv,T)}$, unless the failure \ref{failure:notinhspan} is triggered beforehand.

	Note that observing some $\bar \nu$ satisfying Assumption \ref{assumption:Tnusubset} does not guarantee a successful decoding, since it might be the case that $\bar \nu < \nu$, which would in fact lead to a failure since no codewords are at distance $\bar \nu$ to $\yv$. Similarly, a failure is guaranteed if $\rkw(\ev)$ is less than $\bar \nu$ and therefore less than $\nu$, which can happen if either $\rkw(\epv) < \bar \nu$ or $\rk(B) < \bar \nu$; however, these need not be checked, since a decoding failure will be raised when attempting to retrieve the message from $\cv$.

\section{Gabidulin's algorithm}
\label{section:gabidulinsalgorithm}

One step of the decoding algorithms involves solving either \eqref{eq:see} for $\epv$, when $\etav$ has been obtained from an error locator vector, or the equivalent \eqref{eq:stee} for $\etav$, if an error span vector led to $\epv$. By Remark \ref{remark:sfsrredundancy}, if $\nu \le \delta - 1$, in order to solve, these equations only need to be considered for the range $0 \le i \le \nu - 1$ and for any fixed $j$ in $\{ 0, \dots, r\}$, for example $j = 0$.

Given the structure of the coefficient matrices associated to these equation systems, there is an algorithm for computing the solution which is more efficient than a general linear equation system solving algorithm, described in \cite[Section 6]{Gabidulin85} in the context of a finite field, $t_1 = 1$ and $b = r = 0$ which is considered in Gabidulin codes. In addition, by using this approach, one may skip the computation of $\st i j$ from $\s i j$ in Algorithm \ref{alg:span} or $\s i j$ from $\st i j$ in Algorithm \ref{alg:locator}, since only the coefficients $\s i j$ are needed for the former algorithm, while the latter only requires $\st i j$, which can directly be computed as $\sigma^{-b - t_1 i - t_2 k_j}(\yv) \hv^T$ (although this requires applying a power of $\sigma$ $n$ times, which under most circumstances is less than the required $(\delta - 1)(r+1)$ times it has to be done at the beginning of Algorithm \ref{alg:locator}, note that by Remark \ref{remark:precomputeH} the $n$ times it is applied for computing $\s i j$ might not be repeated after the first execution). We will now describe this procedure in our context by applying the ideas in \cite[Section 6]{Gabidulin85}.

Provided that $\nu \le \delta - 1$, the first $\nu$ equations of both \eqref{eq:see} and \eqref{eq:stee} have the form
\begin{equation}\label{eq:gabidulinsystem}
	b_i = \sum_{k=1}^\nu a_k \theta^{\bar b + t_1 i} \left( X_k \right) \qquad \text{ for all } 0 \le i \le \nu - 1
\end{equation}
(cf. \cite[Equation (42)]{Gabidulin85}) where $\bar b = b + t_2 k_0$ and the unknowns are $X_1, \dots, X_\nu$. This is \eqref{eq:see} for $a_k = \ep_k$, $b_i = \s i 0$, $X_k = \eta_k$ and $\theta = \sigma$, while in the case of \eqref{eq:stee}, $a_k = \eta_k$, $b_i = \st i 0$, $X_k = \ep_k$ and $\theta = \sigma^{-1}$. Note that the coefficients $a_k$ are $F^{\theta^{t_1}}$-linearly independent (remember Remark \ref{remark:sigmat1}); in particular, they are nonzero.

If $\nu = 1$, this is readily solved as $X_k = \theta^{- \bar b}\left( a_1^{-1} b_0 \right)$. Otherwise, let $E_i$ be the $(i+1)$-th equation in \eqref{eq:gabidulinsystem}, i.e. the left-hand side of $E_i$ is $b_i$ for each $0 \le i \le \nu - 1$.
Then, for any $0 \le i \le \nu - 2$ and any $1 \le k \le \nu$, $\theta^{\bar b + t_1 i} \left( X_k \right)$ is present in both $E_i$ and $\theta^{-t_1}\left( E_{i+1} \right)$, multiplied respectively by $a_k$ and $\theta^{-t_1}(a_k)$. Hence, the equation $E_i - \frac{a_k}{\theta^{-t_1}(a_k)} \theta^{-t_1}\left( E_{i+1} \right)$ does not depend on $X_k$. Therefore,
one might replace the system \eqref{eq:gabidulinsystem} with the one given by the union of $\{ E_i - \frac{a_1}{\theta^{-t_1}(a_1)} \theta^{-t_1}\left( E_{i+1} \right) ~|~ 0 \le i \le \nu - 2 \}$, which has $\nu - 1$ equations and $\nu - 1$ unknowns since it does not depend on $X_1$, and $\{ E_0 \}$. The first $\nu - 1$ equations have the form
\begin{equation}\label{eq:gabidulinneweq}
	b_i - \frac{a_1 \theta^{-t_1}(b_{i+1})}{\theta^{-t_1}(a_1)} = \sum_{k=2}^\nu \left( a_k - \frac{a_1 \theta^{-t_1}(a_k )}{\theta^{-t_1}(a_1)}  \right) \theta^{\bar b + t_1 i} \left( X_k \right) \quad \text{ for all } 0 \le i \le \nu - 2,
\end{equation}
which matches the structure in \eqref{eq:gabidulinsystem} by taking $b_i - a_1 \theta^{-t_1}\left( a_1^{-1} b_{i+1} \right)$ as $b_i$, $a_k - a_1 \theta^{-t_1}\left( a_1^{-1} a_k \right)$ as $a_k$, and $\nu - 1$ as $\nu$.
The new values for $a_k$ are also $F^{\theta^{t_1}}$-linearly independent, hence nonzero: if $\sum_{k=2}^\nu c_k \left( a_k - a_1 \theta^{-t_1}\left( a_1^{-1} a_k \right) \right) = 0$ for some $c_2, \dots, c_\nu \in F^{\sigma^{t_1}}$, then for $C = \sum_{k=2}^\nu c_k a_k$ we get $C = a_1 \theta^{-t_1}\left( a_1^{-1} C \right)$ and therefore $a_1^{-1} C = \theta^{-t_1}\left( a_1^{-1} C \right)$, so $a_1^{-1}C \in F^{\theta^{t_1}}$ and there is some $c_1 \in F^{\theta^{t_1}}$ such that $0 = c_1 a_1 + C = \sum_{k=1}^\nu c_k a_k$, which by the $F^{\theta^{t_1}}$-linear independence of $a_1, \dots, a_\nu$ means that $c_2, \dots, c_\nu$ are all zero.
Therefore, the same can be done for these $\nu - 1$ equations. This leads to recursively solving a system of the form \eqref{eq:gabidulinsystem} for any size $\nu > 1$ by constructing and solving a system of size $\nu - 1$ and then solving $X_1$ as $\theta^{- \bar b}\left( a_1^{-1} \left( b_0 - \sum_{k=2}^\nu a_k \theta^{\bar b}\left( X_k \right) \right) \right)$. 
Each recursion step consists of constructing the coefficients for the new system of equations (which requires $\bigO(\nu)$ operations in the field $F$ including subtractions, products, divisions and applications of powers of the map $\theta$), solving this system of equations, and then solving for the first variable (which again requires $\bigO(\nu)$ operations in $F$). This means that this procedure is done in $\bigO(\nu^2)$ operations in $F$.

We may describe this recursion explicitly, as in \cite[Eqs. (46)--(48)]{Gabidulin85}, by defining
\begin{align}
A_k^{(1)} & = a_k & \text{ for each } & 1 \le k \le \nu, \\
B_i^{(1)} & = b_i & \text{ for each } & 0 \le i \le \nu - 1, \\
\label{eq:Gabidulinakj}
A_k^{(j)} & = A_k^{(j-1)} - A_{j-1}^{(j-1)} \theta^{-t_1}\left( \frac { A_k^{(j-1)} }{ A_{j-1}^{(j-1)} } \right) & \text{ for each } & 2 \le j \le k \le \nu, \\
\label{eq:Gabidulinbij}
B_i^{(j)} & = B_i^{(j-1)} - A_{j-1}^{(j-1)} \theta^{-t_1}\left( \frac { B_{i+1}^{(j-1)} }{ A_{j-1}^{(j-1)} } \right) & \text{ for each } & 2 \le j \le \nu, 0 \le i \le \nu - j.
\end{align}

Then, $X_\nu, X_{\nu - 1}, \dots, X_2, X_1$ are solved in this order as
\begin{equation}\label{eq:GabidulinXk}
	X_k = \theta^{-\bar b}\left( \frac{ B_0^{(k)} - \sum_{l=k+1}^{\nu} A_{l}^{(k)} \theta^{\bar b}\left( X_l \right) }{ A_k^{(k)} } \right).
\end{equation}

\section{An example}\label{section:example}

In this section, the usage of Algorithm \ref{alg:span} and Algorithm \ref{alg:locator} will be illustrated in the context of an example.
First, Theorem \ref{thm:ht} and Theorem \ref{thm:roos} will be applied in order to get lower bounds for the minimum rank distance of a given code.
This will be followed by an overview on how to use Assumption \ref{assumption:Tnusubset} and the other possible assumptions discussed in Section \ref{section:ontheassumption} to guarantee that decoding will be successful up to some error-correcting capacity.
Once that an error-correcting capacity has been established, the steps for running Algorithm \ref{alg:span} and Algorithm \ref{alg:locator} with a corrupted codeword will be summarized.
By tweaking a successful decoding example, the possible decoding failures discussed in Section \ref{section:toomanyerrors} will be encountered.

\subsection{Applying the bounds}

Let $F$ be the finite field with $2^{14}$ elements $\FF_{2^{14}} = \FF_2(a)$ where $a \in F$ is such that $a^{14} + a^7 + a^5 + a^3 + 1 = 0$, and take as $\sigma : F \to F$ the Frobenius endomorphism, $\gamma \mapsto \gamma^2$ for all $\gamma \in F$, whose order is $\sord = 14$ and whose fixed field $F^\sigma$ has two elements.
The nonzero elements in $F$ will be denoted throughout this section as powers of $a$ in order to reduce the length of the expressions involving elements in $F$. They will be shown as polynomials on $a$ of degree less than $14$ whenever the sum of some such elements has to be considered, as the sum is immediate with this representation.

For $\alpha = a^7$, $\hv = \left( \alpha, \sigma(\alpha), \dots, \sigma^{13}(\alpha) \right)$, which was denoted by $\av$ in Proposition \ref{prop:skewcyclicdefiningsets}, is an $F^\sigma$-basis of $F$. Therefore, we may consider $F$-linear codes of length $14$ of the form $C_{(\sigma, \hv, T)}$ by choosing some set $T$.
By Proposition \ref{prop:skewcyclicdefiningsets}, such codes will be skew cyclic codes; furthermore, every possible option for $\hv$ is $\av M$ for $M$ some $n$-rank $14 \times n$ matrix with entries in $F^\sigma$, for some $n \le 14$. These codes will not be skew cyclic if $n < 14$ and might not if $n = 14$. As noted in Remark \ref{remark:equivalentorshortenedfromskewcyclic}, any such code will be a $(14 - n)$-times shortened code from a code equivalent to the corresponding skew cyclic code with $\hv = \av$ and the same $T$.
For our example, we will consider $\hv = \av$, getting a skew cyclic code.

If $T$ is chosen as $\{ 0,1,2,3,4,\allowbreak 8,9,10,11,12 \} = \{ 0,1,2,3,4 \} + \{ 0, 8 \}$, then the code $\mathcal C = C_{(\sigma, \hv, T)}$, which is the left kernel of
\[
	H_{(\sigma, \hv, T)}
	=
	\begin{pmatrix}
		\alpha & \alpha^{2} & \alpha^{2^2} & \alpha^{2^3} & \alpha^{2^4} & \alpha^{2^8} & \alpha^{2^9} & \alpha^{2^{10}} & \alpha^{2^{11}} & \alpha^{2^{12}} \\
		\alpha^{2} & \alpha^{2^2} & \alpha^{2^3} & \alpha^{2^4} & \alpha^{2^5} & \alpha^{2^9} & \alpha^{2^{10}} & \alpha^{2^{11}} & \alpha^{2^{12}} & \alpha^{2^{13}} \\
		\alpha^{2^2} & \alpha^{2^3} & \alpha^{2^4} & \alpha^{2^5} & \alpha^{2^6} & \alpha^{2^{10}} & \alpha^{2^{11}} & \alpha^{2^{12}} & \alpha^{2^{13}} & \alpha \\
		\alpha^{2^3} & \alpha^{2^4} & \alpha^{2^5} & \alpha^{2^6} & \alpha^{2^7} & \alpha^{2^{11}} & \alpha^{2^{12}} & \alpha^{2^{13}} & \alpha & \alpha^{2} \\
		\alpha^{2^4} & \alpha^{2^5} & \alpha^{2^6} & \alpha^{2^7} & \alpha^{2^8} & \alpha^{2^{12}} & \alpha^{2^{13}} & \alpha & \alpha^{2} & \alpha^{2^2} \\
		\alpha^{2^5} & \alpha^{2^6} & \alpha^{2^7} & \alpha^{2^8} & \alpha^{2^9} & \alpha^{2^{13}} & \alpha & \alpha^{2} & \alpha^{2^2} & \alpha^{2^3} \\
		\alpha^{2^6} & \alpha^{2^7} & \alpha^{2^8} & \alpha^{2^9} & \alpha^{2^{10}} & \alpha & \alpha^{2} & \alpha^{2^2} & \alpha^{2^3} & \alpha^{2^4} \\
		\alpha^{2^7} & \alpha^{2^8} & \alpha^{2^9} & \alpha^{2^{10}} & \alpha^{2^{11}} & \alpha^{2} & \alpha^{2^2} & \alpha^{2^3} & \alpha^{2^4} & \alpha^{2^5} \\
		\alpha^{2^8} & \alpha^{2^9} & \alpha^{2^{10}} & \alpha^{2^{11}} & \alpha^{2^{12}} & \alpha^{2^2} & \alpha^{2^3} & \alpha^{2^4} & \alpha^{2^5} & \alpha^{2^6} \\
		\alpha^{2^9} & \alpha^{2^{10}} & \alpha^{2^{11}} & \alpha^{2^{12}} & \alpha^{2^{13}} & \alpha^{2^3} & \alpha^{2^4} & \alpha^{2^5} & \alpha^{2^6} & \alpha^{2^7} \\
		\alpha^{2^{10}} & \alpha^{2^{11}} & \alpha^{2^{12}} & \alpha^{2^{13}} & \alpha & \alpha^{2^4} & \alpha^{2^5} & \alpha^{2^6} & \alpha^{2^7} & \alpha^{2^8} \\
		\alpha^{2^{11}} & \alpha^{2^{12}} & \alpha^{2^{13}} & \alpha & \alpha^{2} & \alpha^{2^5} & \alpha^{2^6} & \alpha^{2^7} & \alpha^{2^8} & \alpha^{2^9} \\
		\alpha^{2^{12}} & \alpha^{2^{13}} & \alpha & \alpha^{2} & \alpha^{2^2} & \alpha^{2^6} & \alpha^{2^7} & \alpha^{2^8} & \alpha^{2^9} & \alpha^{2^{10}} \\
		\alpha^{2^{13}} & \alpha & \alpha^{2} & \alpha^{2^2} & \alpha^{2^3} & \alpha^{2^7} & \alpha^{2^8} & \alpha^{2^9} & \alpha^{2^{10}} & \alpha^{2^{11}}
	\end{pmatrix},
\]
has a minimum rank distance of at least $7$, and therefore an error-correcting capacity of at least $3$, by virtue of Theorem \ref{thm:ht}, applied for $b = 0$, $t_1 = 1$, $t_2 = 8$, $\delta = 6$ and $r = 1$. Note that $(t_2, 14) = 2$, which, as required by our theorem, is less than $\delta$.
The set $T$ is also equal, modulo $\sord = 14$, to $8 + \{ 0, 1, 2, 3, 4 \} + 6 \{ 0, 1 \}$, so a bound of $7$ also follows from Theorem \ref{thm:ht} for $b = 8$, $t_2 = 6$ and the other parameters as above; again, $(t_2, 14) = 2$.
This set can also be described as $8 + \{ 0, 1, 2, 3, 4 \} + 3 \{ 0, 2 \}$, which corresponds, by Theorem \ref{thm:roos}, to a lower bound of $7$ for the minimum rank distance of $\mathcal C$, where $b = 8$, $t_1 = 1$, $t_2 = 3$, $\delta = 6$, $r = 1$, $k_0 = 0$ and $k_1 = 2$. Observe that $k_r - k_0 = 2 \le 5 = \delta + r - 2$, as needed for the theorem.
These lower bounds for the minimum rank distance also apply if $\hv$ is not chosen as $\av$. 
Since the length of $\mathcal C$, $n$, is equal to $\sord = 14$, the dimension of $\mathcal C$ can be determined to be $14-10 = 4$, as discussed after Definition \ref{defn:ccode}, and therefore $H_{(\sigma, \hv, T)}$ is a parity check matrix for $\mathcal C$.

There are other options for showing that the minimum rank distance of $\mathcal C$ (and even of some codes of greater dimension, having $\mathcal C$ as a subcode) is at least $7$ through Theorem \ref{thm:roos}.
For example, the parameters $b = 8$, $t_1 = 1$, $t_2 = 5$, $\delta = 6$, $r = 1$ and $\{ k_0, k_1 \} = \{ 0, 4 \}$ also give a lower bound of $7$.
This is also the case for the parameters $b = 0$, $t_1 = 1$, $t_2 = 3$, $\delta = 5$, $r = 2$ and $\{ k_0, k_1, k_2 \} = \{ 0, 3, 5 \}$, which give the set $\{ 0, 1, 2, 3 \} \cup \{ 9, 10, 11, 12 \} \cup \{ 15, 16, 17, 18 \}$, which modulo $\sord = 14$ is the set $T \setminus \{ 8 \}$. This means that the $5$-dimensional code $\mathcal C' = C_{(\sigma, \hv, \{ 0, 1, 2, 3, 4, 9, 10, 11, 12 \})}$ has $\mathcal C$ as a subcode and also has a minimum rank distance of at least $7$.
Using $b = 9$, $t_1 = 1$, $t_2 = 5$, $\delta = 4$, $r = 3$ and $\{ k_0, k_1, k_2, k_3 \} = \{ 0, 1, 3, 4 \}$, the corresponding set is $\{ 9, 10, 11,\allowbreak 14, 15, 16,\allowbreak 24, 25, 26,\allowbreak 29, 30, 31 \}$, which modulo $14$ is $T \setminus \{ 4, 8 \}$. As a result, the $6$-dimensional code $\mathcal C'' = C_{(\sigma, \hv, \{ 0, 1, 2, 3, 9, 10, 11, 12 \})}$ has $\mathcal C'$, and therefore $\mathcal C$, as a subcode and has a minimum rank distance of at least $7$.
The suitability of these parameters for decoding will be discussed later.

A lower bound of at least $7$ cannot be proven for the code through Theorem \ref{thm:ht} or Theorem \ref{thm:roos} if $r$ is forced to be $0$. That is, there is no BCH-like bound of $7$ that applies to the code (see \cite[Corollary 3.4]{GLNN18} for a definition for this bound for skew cyclic codes, such as the one currently being considered). Through BCH-like bounds, it is only possible to get a minimum distance of at least $6$ through the straightforward subset $\{ 0, \dots, 4 \}$ (or the also straightforward $\{ 8, \dots, 12 \}$) of $T$.
In fact, a lower bound of at least $7$ cannot be shown through any bound analogous to the original Hartmann--Tzeng bound as described in \cite{HT72}, that is, through Theorem \ref{thm:ht} if $t_2$ is forced to be relatively prime with $14$, or, equivalently, through Theorem \ref{thm:roos} if $k_0, \dots, k_r$ are restricted to be $0, \dots, r$.

The code $C_{(\sigma, \hv, T)}$ is an MDS code, since its minimum Hamming distance equals $11$, the maximum possible for a $[14, 4]$ code. This follows from the absence of singular $10 \times 10$ submatrices in the $14 \times 10$ matrix $H_{(\sigma, \hv, T)}$, so the minimum Hamming distance is greater than $10$, and the Singleton bound, which prevents its minimum distance from being greater than $11$.
In contrast, the code is not a maximum rank distance (MRD) code for the rank metric with respect to the field extension $F/F^\sigma$. An example of a codeword in $\mathcal C$ of rank weight $10$ is
\begin{equation}\label{eq:examplec}
	\cv = \left(1, a^{4851}, a^{13201}, a^{10}, a^{11714}, a^{5336}, a^{15691}, 0, a^{6387}, a^{7195}, a^{5026}, 0, a^{14643}, 0\right).
\end{equation}
It turns out that $(0,0,0,0,1,0,1,0,0,1,1,0,1,0)\cv^T = 0$. That is, the sum of the fifth, seventh, tenth, eleventh and thirteenth entries of $\cv$ is zero. These entries are
\[
	\begin{aligned}
		a^{11714} & = a^{12} + a^{11} + a^{10} + a^6 + a^3 + 1, \\
		a^{15691} & = a^{13} + a^{12} + a^9 + a^6 + a^5 + a^4 + a^2 + a, \\
		a^{7195} & = a^{12} + a^{10} + a^9 + a^8 + a^7 + a^6 + a^5 + a^3 + 1, \\
		a^{5026} & = a^{13} + a^9 + a^6 + a^4 + 1, \\
		a^{14643} & = a^{12} + a^{11} + a^9 + a^8 + a^7 + a^2 + a + 1.
	\end{aligned}
\]
This, together with the observation that there are $11$ nonzero entries in $\cv$, implies that the rank weight of $\cv$ is at most $10$. It can be computed to be $10$ by checking that this is the dimension of the $F^\sigma$-vector space spanned by its entries.
The minimum rank distance of $\mathcal C$ is, therefore, at least $7$ by the bounds given by the structure of $T$ and either Theorem \ref{thm:ht} or Theorem \ref{thm:roos}, and at most $10$ since $\cv \in \mathcal C$ and $\rkw(\cv) = 10$. This also proves that $\mathcal C$ is not rank equivalent to a generalized Gabidulin code, as these are MRD codes by \cite[Theorem 1]{KG05}.

It will be shown later, in Example \ref{example:subfieldsubcodeasinterleaved}, that $7$ is in fact the minimum rank distance of $\mathcal C$, since it has nonzero codewords whose entries are all in the subfield of $F$ with $2^7$ elements, and such codewords cannot have a rank weight beyond $7$. An example of such a codeword is
\[
	\left(0, 1, a^{3612}, a^{1290}, a^{2709}, a^{3612}, a^{4773}, a^{15222}, a^{16125}, a^{1419}, a^{2193}, a^{4644}, a^{8256}, a^{4902}\right).
\]

\subsection{Applying the assumptions}

As the first step to be able to correct errors of rank weight up to $\tau = \lfloor \frac{7-1}2 \rfloor = 3$ with respect to $\mathcal C = C_{(\sigma, \hv, T)}$, some parameters $b, \delta, t_1, t_2, k_0, \dots, k_r$ such that $\nu = 3$ satisfies Assumption \ref{assumption:Tnusubset}, or any assumption that implies this one as studied in Section \ref{section:ontheassumption}, have to be chosen.
The most straightforward option is the one that resulted in a lower bound of $7$ for the minimum rank distance by Theorem \ref{thm:ht}: $b = 0$, $t_1 = 1$, $t_2 = 8$, $\delta = 6$, $r = 1$, $k_0 = 0$ and $k_1 = 1$. We may also consider the parameters for a lower bound of $7$ by Theorem \ref{thm:roos}: $b = 8$, $t_1 = 1$, $t_2 = 3$, $\delta = 6$, $r = 1$, $k_0 = 0$ and $k_1 = 2$.
Assumption \ref{assumption:sigmat2dr} is true for both sets of parameters, as they respectively satisfy the third and the fourth statement.
Hence, by Lemma \ref{lemma:assumptionsigmat2drimpliesuptotau} and Lemma \ref{lemma:assumptionTnuimpliesassumptionTnusubset}, Assumption \ref{assumption:Tnusubset} is satisfied for $\nu = \tau = 3$ for both sets of parameters.

In other cases, it might not be possible to use Assumption \ref{assumption:sigmat2dr}, so we will briefly comment the possibility of using other assumptions.
For the considered parameters, Assumption \ref{assumption:1rd}, which would otherwise be easier to check than Assumption \ref{assumption:sigmat2dr}, is not satisfied.
For the considered parameters, it is straightforward that the second and, respectively, third statements of Assumption \ref{assumption:Tnu} are satisfied for $\nu = 3$.
If this was not the case, or (equivalently, by Lemma \ref{lemma:assumptionsigmat2drimpliesuptotau}) if Assumption \ref{assumption:sigmat2dr} did not apply, we would have to check if Assumption \ref{assumption:Tnu} applies for some $\nu < \tau$; in such a case, we would get a lower guaranteed error-correcting capacity.
If the capacity given by Assumption \ref{assumption:Tnu} is unsatisfactory, there is the option of directly testing Assumption \ref{assumption:Tnusubset}.
In this example, it is evaluated as follows: firstly, as required, $\nu = 3$ is not greater than than $\delta - 2 = 4$; and the set $T_3$ as defined in \eqref{eq:Tnu}, which is $\{ 3, 4, 11, 12 \}$ for both sets of parameters, can be described, modulo $\sord = 14$, as both the Hartmann--Tzeng-like set $3 + \{ 0, 1 \} + 8\{ 0, 1 \}$ and the Roos-like set $8 + \{ 0, 1 \} + 3\{ 0, 2 \}$.
Following the notation in Assumption \ref{assumption:Tnusubset} explicitly, $b' = 3$, $t_1' = 1$, $t_2' = 8$, $\delta' = 3$, $r' = 1$ gives the Hartmann--Tzeng-like subset, and $b' = 8$, $t_1' = 1$, $t_2' = 3$, $\delta' = 3$, $r' = 1$, $\{ k'_0, k'_1 \} = \{ 0, 2 \}$ results in the Roos-like subset. In both cases, $\delta' + r' = 4 > \nu = 3$ as required.
However, note that designing the code such that Assumption \ref{assumption:sigmat2dr} will be satisfied is more straightforward than designing a code such that both a lower bound applies and Assumption \ref{assumption:Tnusubset} is satisfied for the capacity implied by the bound.

Other parameters for proving a lower bound of $7$ were mentioned earlier.
For the parameters $b = 8$, $t_1 = 1$, $t_2 = 5$, $\delta = 6$, $r = 1$ and $\{ k_0, k_1 \} = \{ 0, 4 \}$, as well as for the parameters for $\mathcal C'$ ($b = 0$, $t_1 = 1$, $t_2 = 3$, $\delta = 5$, $r = 2$, $\{ k_0, k_1, k_2 \} = \{ 0, 3, 5 \}$), Assumption \ref{assumption:sigmat2dr} is not satisfied (equivalently, Assumption \ref{assumption:Tnu} is not satisfied for $\nu = 3$), as $k_r - k_0 = 4$ is not less than $\lfloor \frac{\delta + r} 2 \rfloor = 3$.
However, for the former parameters, it is possible to show that Assumption \ref{assumption:Tnusubset} holds for $\nu = 3$. In fact, the set $T_3 = \{ 3, 4, 11, 12 \}$ (recall \eqref{eq:Tnu}) is the same for these parameters as it was for the parameters considered above, so the same argument can be used to prove that Assumption \ref{assumption:Tnusubset} is satisfied (or, if no subset of $T_3$ proving the Assumption was known, it could be stated that the Assumption is satisfied for these parameters if and only if it is satisfied for any of the former parameters).
For the parameters for $\mathcal C'$, the set $T_3$ takes the form $\{ 3, 4, 12 \}$. This set does not guarantee a minimum distance of at least $4$; for example, the code $C_{(\sigma, (1, a^{15777}, a^{5023}), \{ 3, 4, 12 \})}$ has dimension greater than zero. Therefore, there is no need to attempt to use Assumption \ref{assumption:Tnusubset} in order to show that an error of rank weight $3$ can be corrected with those parameters.
It is more straightforward to show that it is not possible to correct errors of rank weight $\nu = 3$ with respect to the parameters considered for $\mathcal C''$, as in that case $\delta = 4$, so $\nu > \delta - 2$ and the first requirement in Assumption \ref{assumption:Tnusubset} is unsatisfied.

\subsection{Running the algorithms}
\setcounter{MaxMatrixCols}{14}

Consider $\ev = \epv B$ for
\begin{equation}\label{eq:exampleepvB}
	\epv = \left( 1, a, a^{11} \right),
	\quad
	B =
	\begin{pmatrix}
		1 & 1 & 1 & 1 & 1 & 1 & 1 & 1 & 1 & 1 & 1 & 1 & 1 & 1 \\
        0 & 0 & 1 & 1 & 1 & 1 & 1 & 1 & 1 & 1 & 1 & 1 & 1 & 1 \\
        0 & 1 & 1 & 1 & 1 & 1 & 0 & 0 & 0 & 0 & 0 & 0 & 0 & 0
	\end{pmatrix},
\end{equation}
and suppose that $\yv = \cv + \ev$, where $\cv$ is as in \eqref{eq:examplec}, is received. That is, an error of rank weight $\nu = 3$ was added to the codeword by adding $1$ to each coordinate, $a$ to each coordinate except the first two ones, and $a^{11}$ to the coordinates from the second to the sixth. This results in
\begin{multline}\label{eq:exampley}
	\yv = \left(0, a^{4170}, a^{4896}, a^{1514}, a^{2234}, a^{8214}, a^{11585},
	\right. \\ \left.
	a^{12897}, a^{1004}, a^{7930}, a^{15557}, a^{12897}, a^{13045}, a^{12897}\right).
\end{multline}

We will run both Algorithm \ref{alg:span} and Algorithm \ref{alg:locator} for $\yv$ as above, for $\hv, \sigma$ as considered for $\mathcal C$, and for the first parameters that gave a lower bound of $7$ for the minimum distance of $\mathcal C$ through Theorem \ref{thm:roos}: $b = 8$, $t_1 = 1$, $t_2 = 3$, $\delta = 6$, $r = 1$, $k_0 = 0$ and $k_1 = 2$. As discussed earlier, Assumption \ref{assumption:Tnusubset} is satisfied for $\nu = 3$ and these parameters, so both algorithms should be successful.

Since the error vector was designed as $\ev = \epv B$, it is already possible to determine the error span vector $\shv$ and the error locator vector $\ldv$ that the algorithms should compute. By Definition \ref{defn:span}, the error span polynomial is
\[
	\sh = \lclm{z - 1, z - a^{-1} \sigma(a), z - a^{-11} \sigma(a^{11})} = z^3 + a^{1871}z^2 + a^{6165}z + a^{14247} \in F[z;\sigma],
\]
so the error span vector can be expected to be any nonzero $F$-multiple of
\begin{equation}\label{eq:exampleshv}
	\shv = \left( a^{14247}, a^{6165}, a^{1871}, 1 \right).
\end{equation}
The error locator vector $\ldv$ also follows from
\begin{equation}\label{eq:exampleetav}
	\etav = B \hv^T = \left(1, a^5 + a^3, a^{10} + a^9 + a^6 + a^5 + a^3 + a \right) = \left(1, a^{9414}, a^{12430} \right).
\end{equation}
The $1$ as the first entry is a consequence of every entry in the first row of $B$ being $1$ and the sum of the entries of $\hv$ (which is the trace of $\alpha = a^7$ with respect to $F/F^\sigma$) being $1$.
This gives, by Definition \ref{defn:locator}, the error locator polynomial
\[
	\begin{aligned}
		\ld
		& = \lclm{\bar z - 1, \bar z - a^{-9414} \sigma(a^{9414}), \bar z - a^{-12430} \sigma(a^{12430})} \\
		& = \bar z^3 + a^{4634} \bar z^2 + a^{13397} \bar z + a^{11909} \in F[\bar z;\sigma^{-1}]
	\end{aligned}
\]
and the error locator vector
\begin{equation}\label{eq:exampleldv}
	\ldv = \left(a^{11909}, a^{13397}, a^{4634}, 1\right).
\end{equation}
The algorithms, if successful, should get a nonzero $F$-multiple of, respectively, $\shv$ and $\ldv$ at the step of solving the SFSR synthesis problem.
Note that there is no guarantee that they will get exactly $\epv$ and $\etav$ as above, recall that these may be modified as described in Lemma \ref{lemma:epetalc}.

The first step in both algorithms is the computation of the syndromes $\s i j$ for $0 \le i \le \delta - 2, 0 \le j \le r$. This results in
\begin{equation}\label{eq:sijexample}
	\left( \s i j \right)_{\substack{0 \le j \le 1 \\ 0 \le i \le 4}} =
	\begin{pmatrix}
		a^{3109}  & a^{9463} & a^{641}  & a^{14960} & a^{4892} \\
		a^{13234} & a^{4437} & a^{6745} & a^{12053} & a^{16376}
	\end{pmatrix}.
\end{equation}
That is, the $i$-th column and the $j$-th row, starting at zero, of the matrix above has $\s i j$. As they are not all zero, $\yv$ is not a codeword and a nonzero error has been inserted.

In Algorithm \ref{alg:span}, the next step is solving the $\sigma$-SFSR synthesis problem (Problem \ref{problem:SFSRsynthesis} for $\theta = \sigma$) for the two sequences given by the rows above. Theorem \ref{thm:keyspan} guarantees that the shortest $\sigma$-SFSR for those sequences is $\shv$ as in \eqref{eq:exampleshv}, with uniqueness up to nonzero $F$-multiples. Indeed, the result of normalizing the output of, for example, \cite[Algorithm 2]{SJB11} on its last entry (that is, dividing the vector by its last entry so that the last entry is $1$) is $\shv$; note that \cite[Algorithm 2]{SJB11} normalizes its output on its first entry, so it returns $a^{-14247}\shv$, but it does not matter whether the SFSR is normalized in any coordinate or not.
Now we compute the error values as any basis of the kernel of the $F^\sigma$-linear map $\sh(\sigma) = \sum_{k=0}^\nu \sh_k \sigma^k = a^{14247} + a^{6165} \sigma + a^{1871} \sigma^2 + \sigma^3$, as described in Remark \ref{remark:shld2epeta}.
Since $F$ is a finite field and $F^\sigma$ has two elements, this kernel is equal to the roots of the $2$-polynomial, or linearized polynomial, $a^{14247} x + a^{6165} x^2 + a^{1871} x^4 + x^8 \in F[x]$, so this step is equivalent to the computation of a basis for these roots.
Depending on the algorithm used for finding a basis for this space, the result may or may not be $\epv$ as in \eqref{eq:exampleepvB}. For this example, assume that we do get $\epv = \left(1, a, a^{11} \right)$.

Then, we have to compute the error locators. This can be done either by computing
\begin{equation}\label{eq:stijexample}
	\left( \st i j \right)_{\substack{0 \le j \le 1 \\ 0 \le i \le 4}} =
	\begin{pmatrix}
		a^{2380} & a^{7922} & a^{10256} & a^{4999} & a^{3185} \\
		a^{13234} & a^{10410} & a^{5782} & a^{11746} & a^{9215}
	\end{pmatrix}
\end{equation}
following \eqref{eq:st} and then solving for the entries of $\etav$ in the linear system given by Equation \eqref{eq:stee} (or a suitable subsystem with three equations, which would only require the computation of any three consecutive entries in one of the rows in \eqref{eq:stijexample}; see Remark \ref{remark:sfsrredundancy}), or by running the variant of Gabidulin's algorithm described in Section \ref{section:gabidulinsalgorithm} with input $a_k = \ep_k$ and $X_k = \eta_k$ for $1 \le k \le 3$, $b_i = \s i 0$ for $0 \le i \le 2$, $\theta = \sigma$ and $\bar b = 8$.
The latter consists of computing the three non-blank elements below the first row in each one of the following matrices (where the blank entries are undefined and not computed) according to \eqref{eq:Gabidulinakj} and \eqref{eq:Gabidulinbij}:
\[
	\left( A_k^{(j)} \right)_{\substack{1 \le j \le 3 \\ j \le k \le 3}} =
	\begin{pmatrix}
		1 & a & a^{11} \\
		  & a^{6449} & a^{675} \\
		  &  & a^{4433}
	\end{pmatrix},
	\ 
	\left( B_i^{(j)} \right)_{\substack{1 \le j \le 3 \\ 0 \le i \le 3 - j}} =
	\begin{pmatrix}
		a^{3109} & a^{9463} & a^{641} \\
		a^{15998} & a^{15308} & \\
		a^{8211} &  &
	\end{pmatrix},
\]
and then $\eta_3, \eta_2, \eta_1$ as $X_3, X_2, X_1$ in \eqref{eq:GabidulinXk}. Through both paths, the resulting $\etav$ is as in \eqref{eq:exampleetav}.
Now, $B$ is computed as the solution of \eqref{eq:eta}; that is, the matrix of the coordinates of the entries of $\etav$ with respect to $\hv$ and the field $F^\sigma$. This gives $B$ as in \eqref{eq:exampleepvB}. Then, $\ev = \epv B$ is the error that was designed, and $\yv - \ev$ is the codeword $\cv$ described in \eqref{eq:examplec}. Hence, Algorithm \ref{alg:span} has been able to reconstruct the sent codeword in spite of the inserted error of rank weight $\nu = 3$.

For Algorithm \ref{alg:locator}, after computing the syndromes $\s i j$ and checking that they are not all zero and therefore there was an error, the syndromes $\st i j$ are computed, getting the ones in \eqref{eq:stijexample} (or, alternatively, it is possible to compute directly $\st i j$ if Gabidulin's algorithm is to be used, as then $\s i j$ are not needed).
The next step is to solve the $\sigma^{-1}$-SFSR synthesis problem, or Problem \ref{problem:SFSRsynthesis} for $\theta = \sigma^{-1}$, for the sequences given by the two rows in \eqref{eq:stijexample}.
By Theorem \ref{thm:keyspan}, the shortest $\sigma^{-1}$-SFSR for those sequences is $\ldv$ as in \eqref{eq:exampleldv} with uniqueness up to nonzero $F$-multiples, so a nonzero $F$-multiple of $\ldv$ will be the result of running any suitable SFSR synthesis algorithm, such as \cite[Algorithm 2]{SJB11}.
The error locators are now, by Remark \ref{remark:shld2epeta}, any basis of the kernel of the $F^\sigma$-linear map $\ld(\sigma^{-1}) = \sum_{k=0}^\nu \ld_k \sigma^{-k} = a^{11909} + a^{13397} \sigma^{-1} + a^{4634} \sigma^{-2} + \sigma^{-3}$ or, equivalently, any basis of the roots of the $2$-polynomial $a^{11909} + a^{13397} x^{2^{13}} + a^{4634} x^{2^{12}} + x^{2^{11}}$. Note that the kernel of $\ld(\sigma^{-1})$ is the same as the kernel of $\sigma^3 \circ \ld(\sigma^{-1}) = \sigma^3(a^{11909})\sigma^3 + \sigma^3(a^{13397}) \sigma^2 + \sigma^3(a^{4634}) \sigma + \sigma^0 = a^{13357}\sigma^3 + a^{8878} \sigma^2 + a^{4306}\sigma + \sigma^0$, which also has the same kernel as the result of dividing it by $a^{13357}$, which is $\sigma^3 + a^{11904}\sigma^2 + a^{7332}\sigma + a^{3026}$. In particular, the basis can be computed as a basis for the roots of the $2$-polynomial $x^8 + a^{11904} x^4 + a^{7332}x^2 + a^{3026}x$, whose degree is $8$ instead of $2^{13}$ as in the previous $2$-polynomial.
Independently of the approach, the resulting $\etav$ might not necessarily be the one in \eqref{eq:exampleetav}. Assume that the computed basis turns out to be
\begin{equation}\label{eq:exampleetav2}
	\etav = \left(1, a^{10} + a^9 + a^6 + a, a^5 + a^3 \right) = \left(1, a^{16301}, a^{9414} \right).
\end{equation}
That is, instead of $\etav = (\eta_1, \eta_2, \eta_3)$ as in \eqref{eq:exampleetav}, the computed error locators are $\etav = (\eta_1, \eta_3, \eta_2 + \eta_3)$.
It is possible to compute at this moment, as noted in Remark \ref{remark:Bbeforevalues}, the matrix $B$ such that $\etav = B \hv^T$, getting
\[
	B = 
	\begin{pmatrix}
		1 & 1 & 1 & 1 & 1 & 1 & 1 & 1 & 1 & 1 & 1 & 1 & 1 & 1 \\
		0 & 1 & 0 & 0 & 0 & 0 & 1 & 1 & 1 & 1 & 1 & 1 & 1 & 1 \\
		0 & 0 & 1 & 1 & 1 & 1 & 1 & 1 & 1 & 1 & 1 & 1 & 1 & 1
	\end{pmatrix},
\]
whose rows are the first, the third, and the sum of the second and the third of the ones in \eqref{eq:exampleepvB}, as we got exactly that linear combination of $\eta_1, \eta_2, \eta_3$ for the error locators.

The next, or previous, step with respect to computing $B$ is solving for the error values, either by solving the linear system in Equation \eqref{eq:see} (or a subset of these equations, recall Remark \ref{remark:sfsrredundancy}), or by using the variant of Gabidulin's algorithm described in Section \ref{section:gabidulinsalgorithm} with input $a_k = \eta_k$ and $X_k = \ep_k$ for $1 \le k \le 3$, $b_i = \st i 0$ for $0 \le i \le 2$, $\theta = \sigma^{-1}$ and $\bar b = 8$.
We get
\[
	\left( A_k^{(j)} \right)_{\substack{1 \le j \le 3 \\ j \le k \le 3}} =
	\begin{pmatrix}
		1 & a^{16301} & a^{9414} \\
		  & a^{10332} & a^{11189} \\
		  &  & a^{15128}
	\end{pmatrix},
	\left( B_i^{(j)} \right)_{\substack{1 \le j \le 3 \\ 0 \le i \le 3 - j}} =
	\begin{pmatrix}
		a^{2380} & a^{7922} & a^{10256} \\
		a^{554} & a^{12832} &  \\
		a^{7532} &  &
	\end{pmatrix}
\]
when running this algorithm. Both approaches lead to
\begin{equation}\label{eq:exampleepv2}
	\epv = \left( 1, a^{11}, a^{11} + a \right) = \left( 1, a^{11}, a^{5001} \right),
\end{equation}
which does not match the one computed in \eqref{eq:exampleepvB} for the same reason that $\etav$ is not the one in \eqref{eq:exampleetav}. Anyway, the designed error $\ev$ is $\epv B$, and $\yv - \ev$ is the codeword $\cv$ defined in \eqref{eq:examplec}. Algorithm \ref{alg:locator}, therefore, also retrieves the sent codeword after adding an error of rank weight $\nu = 3$.

Note that, if the same error vector $\ev$ is added to any other codeword $\cv \in \mathcal C$, including the zero codeword, the computed syndromes $\s i j$ are the same. As a consequence, both algorithms would perform the exact same operations after computing the syndromes until the final step of returning $\yv - \ev$.

Additionally, replacing any of $k_0, \dots, k_r$ with other number which is the same modulo $\sord = 14$, or multiplying all of them by any $s$ relatively prime with $14$ and $t_2$ by the inverse of $s$ modulo $14$, has no effect on the computations done by the algorithms. Hence, the algorithms perform the exact same computations for the alternative parameters that gave a lower bound of $7$ through Theorem \ref{thm:ht} ($t_2 = 6$, $k_1 = 1$ and the other parameters as in the example).
Furthermore, permutating the order of the entries $k_0, \dots, k_r$ of the algorithm only has the effect of applying the same permutation to the order of the sequences and the order of the equations to be solved in \eqref{eq:see} or \eqref{eq:stee}, and therefore the algorithm performs essentially the same computations after such a rearrangement.
As a result, if the algorithms are run with the first parameters that gave a lower bound of $7$ through Theorem \ref{thm:ht} ($b = 0$, $t_1 = 1$, $t_2 = 8$, $\delta = 6$, $r = 1$, $k_0 = 0$ and $k_1 = 1$) or with the alternative parameters that gave a bound of $7$ through Theorem \ref{thm:roos} which also lead to satisfying Assumption \ref{assumption:Tnusubset} ($b = 8$, $t_1 = 1$, $t_2 = 5$, $\delta = 6$, $r = 1$, $k_0 = 0$ and $k_1 = 4$), the algorithms will do the same tasks, simply swapping $\s i 0$ with $\s i 1$ and $\st i 0$ with $\st i 1$ for each $i$.

\subsection{Forcing decoding failures}

In order to show how the algorithms handle decoding failures, some decoding failures can be forced by modifying their input and parameters in such a way that Assumption \ref{assumption:Tnusubset} is not true for $\nu$ the rank weight of the inserted error and the chosen parameters. While the most straightforward way to do so would be to repeat the example above with some $\ev$ of rank weight $4$, this would require doing some new computations. This will be done later, in Example \ref{example:interleavederrorcorrection}. Here, the decoding failures will be forced by modifying the parameters used for decoding, so that they would still work for $\mathcal C$, but not up to a decoding capacity of $3$. This approach will actually be required in order to be able to get the decoding failures that occur later in the algorithms, as they require a code of length less than $\sord = 14$ or a code not equal to $C_{(\sigma, \hv, T)}$.

Fistly, we will use the same parameters as in the working example except $r = 0$ and $k_0 = 0$. These parameters satisfy all the preconditions except the requirement on Assumption \ref{assumption:Tnusubset} for $\nu = 3$, so the algorithm may not be able to correct the previously considered $\yv$ into $\cv$. They correspond to decoding with respect to the Gabidulin code $C_{(\sigma, \hv, \{ 8, 9, 10, 11, 12 \})} = C_{(\sigma, \sigma^8(\hv), \{ 0, 1, 2, 3, 4 \})}$, which is a subcode of $\mathcal C$.
When running Algorithm \ref{alg:span} and Algorithm \ref{alg:locator} with these parameters and $\yv$, they will only compute the first row of syndromes in \eqref{eq:sijexample}. In each case, while the shortest SFSR which generates the sequence given by this row still has length $3$, it is not unique, and we may not get $\shv$. If the shortest SFSR is computed by using \cite[Algorithm 2]{SJB11} skipping the check performed by the algorithm to ensure that the problem does have a single solution, instead of $\shv$, the output is $\shv^* = a^{-2502}\left(a^{2502}, a^{15390}, a^{313}, 1\right)$.
The kernel of the corresponding function $\sh^*(\sigma)$ has dimension $1$, being generated by $a^{8977}$, so the algorithm fails to compute as much error values as the degree of the error span polynomial (or the length of $\shv^*$ minus $1$) and has to return a decoding failure.
For Algorithm \ref{alg:locator}, only the first row in \eqref{eq:stijexample} is computed and the shortest $\sigma^{-1}$-SFSR which generates this row is again also of length $3$ but not unique. Algorithm 2 in \cite{SJB11} computes $\ldv^* = a^{-4802}\left(a^{4802}, a^{10732}, a^{773}, 1\right)$, and the kernel of $\ld^*(\sigma^{-1})$ has dimension $1$, being generated by $a^{15003}$, returning a decoding failure again.
If $r$ is again $0$ and $k_0$ is chosen as $2$, then the algorithms will compute, and solve the SFSR synthesis problem for, the second row in \eqref{eq:sijexample} and \eqref{eq:stijexample}. The returned failure is essentially the same, except this time the kernel has dimension zero in both cases, that is, the functions are actually invertible.

Another decoding failure is bound to happen by using the parameters in the working example except $\delta = 3$. This means that the $\sigma$-SFSR synthesis problem considered in Algorithm \ref{alg:span} will involve the first two elements in each row in \eqref{eq:sijexample}.
There is no $\sigma$-SFSR of length $1$ for the two sequences. As the length of these sequences is $2$, every $\sigma$-SFSR of length $2$ will generate both sequences. Depending on the SFSR synthesis algorithm, we may get some $\shv^*$ whose corresponding function $\shv^*(\sigma)$ either has a kernel of dimension less than $2$, getting the same decoding failure as above (for example, if the algorithm computes $(1, 0, \dots, 0)$ every time it is a solution; note that this cannot be normalized on its last entry), or has a kernel of $F^\sigma$-dimension $2$, skipping that failure. The latter is the case, for example, for $\shv^* = (a^2 + a, a^2 + a + 1, 1)$, which has both $1$ and $a$ in its kernel, since $\lclm{z - 1, z - a^{-1}\sigma(a)} = z^2 + (a^2 + a + 1)z + a^2 + a$. For this $\shv^*$, the algorithm will get some $\epv$ such as $(1, a)$ as a basis of the kernel and therefore as error values, and then will fail at the stage of computing the error locators, as there will be no solutions for the \eqref{eq:stee} for the syndromes and the computed $\epv$.
The algorithm does go on if the solutions for \eqref{eq:stee} are computed using the variant of Gabidulin's algorithm described in Section \ref{section:gabidulinsalgorithm}, later computing some $\ev$ of rank weight $2$, and therefore returning some $\cv$ which is not a codeword, getting the last decoding failure. Note that, if Gabidulin's algorithm is run with the second sequence instead of the first, the computed error locators will be different; this implies that there are no solutions for the full equations in \eqref{eq:stee}.

As shown in \eqref{eq:exampley}, the first entry of $\yv$ is zero, while the first entry of both $\cv$ and $\ev$ is nonzero. In particular, the result of removing the first entry of $\cv$ is not a codeword in the result of shortening $\mathcal C$ at its first coordinate.
Therefore, if either Algorithm \ref{alg:span} or Algorithm \ref{alg:locator} is run with the input in the working example after removing the first entry of both $\yv$ and $\hv$, both sets of syndromes, \eqref{eq:sijexample} and \eqref{eq:stijexample}, are the same, so the same $\epv$ and $\etav$ will be computed. However, the algorithm will fail at the step of computing some $B$ such that $\etav = B \hv^T$, returning the third possible decoding failure, as the $F^\sigma$-span of the entries of $\hv$ does not contain the first entry of $\etav$; this corresponds to the first column in the matrix $B$ in \eqref{eq:exampleepvB} having a nonzero entry in its first row.

As shown earlier, the last possible decoding failure might be raised if the second one is skipped by using Gabidulin's algorithm (as described in Section \ref{section:gabidulinsalgorithm}) for solving \eqref{eq:see}.
This failure would also be returned if $\mathcal C$ is a proper subset of $C_{(\sigma, \hv, T)}$, where $T$ is the set implied by the used parameters.
For example, if $\mathcal C$ is defined as the result of removing $\cv$ from $C_{(\sigma, \hv, T)}$ for the $T$ considered in the working example, the working example would end in a decoding failure, as $\cv$ is not a codeword in this $\mathcal C$.

\section{Increasing the length of the codes}
\label{section:length}

The length of the codes considered so far, which follow Definition \ref{defn:ccode}, is bounded by the order of the considered field automorphism, which is itself bounded as a consequence of the properties of the field.
For example, if $F$ is a finite field with $p^m$ elements for some prime $p$ and some $m \ge 1$, then any automorphism $\sigma : F \to F$ is a power of, and therefore its order divides the one of, the Frobenius endomorphism, whose order is $m$.
However, for some applications, particularly in the context of the lifting construction for random linear network coding (see \cite[Section IV.A and Section VI.E]{SKK08}), a code length significantly higher than the order of the field automorphism is desirable.

\subsection{Subfield subcodes and interleaved codes}

There are two immediate options for getting codes of arbitrary length over a given field $F$: interleaving codes over $F$ until the resulting length is large enough, or considering the subfield subcode with respect to $F$ of a code over a larger field $L$ that can support the desired length. We shall denote by $[F_1:F_2]$ the $F_2$-dimension of $F_1$; that is, the degree of the field extension $F_1/F_2$.

If $\mathcal C_1, \dots, \mathcal C_\ell$ are linear codes over the same field $F$, then the interleaved code $\mathcal C_1 \times \dots \times \mathcal C_\ell$ has the concatenation of codewords in $\mathcal C_1, \dots, \mathcal C_\ell$ as its codewords. This code has as its length and dimension the sum of, respectively, the lengths and the dimensions of the considered codes, while its (Hamming or rank) minimum distance is the minimum of the minimum distances of the codes. The codes might be chosen to be identical: an $[n,k,d]$ code $\mathcal C$ over the field $F$ (where $d$ is with respect to either the Hamming metric or the rank metric) interleaved with itself $\ell$ times results in an $[\ell n, \ell k, d]$ code over $F$. By $\mathcal C^{\times \ell}$, we will denote the result of interleaving the code $\mathcal C$ with itself $\ell$ times.

\begin{example}\label{example:perblockrank}
	The immediate approach for nearest-neighbor decoding with respect to the Hamming metric for interleaved codes is to independently decode each one of the $\ell$ blocks. However, as we shall illustrate in the following example, this is not well suited for the rank metric.
	From a field $F$ with $a,b \in F$ such that $\{1, a, b\}$ are $K$-linearly independent for some $K$ subfield of $F$, consider the rank metric with respect to the extension $F/K$ and the $[3, 1, 3]$ (therefore MRD) code $\mathcal C = F(1, a, b)$.
	By subtracting the $1$-weight vectors $(0, 0, b)$, $(0, a, 0)$ and $(1, 0, 0)$ to $(1, a, b)$, the respective resulting vectors $(1, a, 0)$, $(1, 0, b)$ and $(0, a, b)$ are shown to be at a distance $1$ to $\mathcal C$, having $(1, a, b)$ as the only codeword at distance $1$, since $1$ is the error-correcting capacity of $\mathcal C$.
	As a result, for the vector $\yv = (1, a, 0, 1, 0, b, 0, a, b)$ and the code $\mathcal C^{\times 3}$, a per-block nearest-neighbor error correction algorithm would retrieve $(1, a, b, 1, a, b, 1, a, b)$, which is at a rank distance $3$, while there are codewords in $\mathcal C^{\times 3}$ at a distance $2$ to $\yv$, such as $\yv + (-1, -a, 0, 0, a, 0, 1, 0,0) = (0, 0, 0, 1, a, b, 1, a, b)$.
\end{example}

\begin{remark}
	When considering rank-metric block codes over finite fields, if the length is not assumed to be constrained by the degree of the extension, the Singleton bound is usually defined (see e.g. \cite[Section II.B]{GY06} or \cite[Section II.C]{SKK08}) in such a way that, for an $[n, k]$ $F$-linear code $\mathcal C$ and the rank metric with respect to $F/K$, once $\ell$ is such that $\ell n \ge [F:K]$, the minimum distance given by the bound for $[\ell n, \ell k]$ codes over the field $F$ (such as $\mathcal C^{\times \ell}$) does not depend on $\ell$. Since the concept of MRD codes is defined with respect to this bound, $d$ does not get indefinitely further from the minimum distance that an MRD $[\ell n, \ell k]$ code over $F$ would have as $\ell$ increases. In particular, if $n = [F:K]$ and $d = n - k + 1$, the resulting $[\ell n, \ell k, d]$ code will be an MRD code for any $\ell \ge 1$; this is noted in \cite[Section VI.E]{SKK08}.
\end{remark}

For the alternative approach, we may consider an extension in the following sense:

\begin{defn}[{\cite[Definition 2.2]{GLNN18}}]\label{defn:extension}
	A field automorphism $\theta : L \to L$ is said to be an extension of the field automorphism $\sigma : F \to F$ of degree $\ell$ if $L/F$ is a field extension of degree $\ell$, $L^\theta = F^\sigma$ and the restriction of $\theta$ to $F$ is $\sigma$.
\end{defn}

\begin{remark}
	Both $L/L^\theta$ and $F/F^\sigma = F/L^\theta$ are Galois extensions. By elementary facts on Galois extensions, so is $L/F$.
	Consequently, $[L:F]$ is equal to $\thord / \sord$ and the condition on the degree of the extension being $\ell$ may be replaced with $\thord = \ell \sord$ in the definition above; in fact, this is how this concept is stated in \cite[Definition 2.2]{GLNN18}. Furthermore, $F = L^{\theta^{\sord}}$: it is immediate that $F \subseteq L^{\theta^{\sord}}$, and $\left[L:L^{\theta^{\sord}}\right] = \thord / \sord = [L:F]$.
\end{remark}

See \cite[Section 2]{GLNN18} for details about such extensions; for example, as shown in \cite[Example 2.4]{GLNN18}, extensions of a prescribed degree $\ell$ always exist for any $\sigma : F \to F$ in a finite field $F$. Unlike \cite[Section 2]{GLNN18}, we shall refrain from assigning the letters $n$ or $\mu$ to $\thord$ and $\sord$ respectively, since $n$ and $\mu$ will refer to the lengths of some codes and we are not disregarding the cases where $n < \thord$ or $\mu < \sord$.

The following is immediate, since $L^\theta = F^\sigma$.
\begin{lemma}\label{lemma:ranksubfield}
	The rank metric with respect to $L/L^\theta$, when restricted to elements with entries in the subfield $F$, matches the rank metric with respect to $F/F^\sigma$.
\end{lemma}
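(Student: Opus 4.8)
The plan is to reduce the statement to the elementary fact that linear independence of a finite family of scalars is unaffected by enlarging the ambient vector space, combined with the defining property $L^\theta = F^\sigma$ of an extension in the sense of Definition \ref{defn:extension}.

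First I would fix a vector $\vv = (a_1, \dots, a_n)$ all of whose entries lie in $F$, and unwind both definitions: $\rkw_{L/L^\theta}(\vv)$ is the $L^\theta$-dimension of the $L^\theta$-subspace of $L$ spanned by $a_1, \dots, a_n$, whereas $\rkw_{F/F^\sigma}(\vv)$ is the $F^\sigma$-dimension of the $F^\sigma$-subspace of $F$ spanned by the same elements. By Definition \ref{defn:extension}, $L^\theta = F^\sigma$, so both spans are taken over one and the same field $F^\sigma$; the only apparent difference is that the first span is computed inside $L$ and the second inside the subspace $F$.

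Next I would observe that $F$ is an $F^\sigma$-subspace of $L$ (being closed under addition and under multiplication by elements of $F^\sigma \subseteq F$), and invoke the routine fact that if $W \subseteq V$ are vector spaces over a common field, then a finite subset of $W$ is linearly independent in $W$ if and only if it is linearly independent in $V$. Applying this with $W = F$ and $V = L$ shows that the maximal cardinality of an $F^\sigma$-linearly independent subfamily of $a_1, \dots, a_n$ does not depend on whether one works in $F$ or in $L$; this common number is exactly the dimension of the corresponding span in either setting, so $\rkw_{L/L^\theta}(\vv) = \rkw_{F/F^\sigma}(\vv)$.

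Finally, since each rank metric is by definition the map $(\vv, \vv') \mapsto \rkw(\vv - \vv')$ and $F^n$ is closed under subtraction, the equality of the two weight functions on $F^n$ immediately gives the equality of the two distance functions on $F^n$, which is the assertion. There is no real obstacle here: the only substantive ingredient is the identity $L^\theta = F^\sigma$ furnished by Definition \ref{defn:extension}, and everything else is the stability of linear (in)dependence under passage to a larger space.
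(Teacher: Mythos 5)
Your argument is correct and is exactly the justification the paper has in mind: the paper dispenses with the lemma in one line ("The following is immediate, since $L^\theta = F^\sigma$"), and your write-up simply fills in the routine details of that observation (spans over the common field $F^\sigma$, invariance of linear independence under enlarging the ambient space, and passage from weights to distances). No substantive difference in approach.
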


Hence, we can construct, from the target field $F$ with its automorphism $\sigma$, some $L,\theta$ according to Definition \ref{defn:extension} that can support the desired length $n$; that is, such that $\thord = \ell \sord$ is at least $n$.
One might also first consider a field $L$ with an automorphism $\theta$ whose order $\thord$ is at least the desired code length $n$ and then choose $F$ as $L^{\theta^m}$ for some suitable $m$ dividing $\thord$ such that the resulting size of $F$ is considered acceptable. Once the extension $L/F$ is constructed, one can get an $L$-linear code $\overline{\mathcal C}$ of length $n$ and then the subfield subcode with respect to the subfield $F$, $\mathcal C = \overline{\mathcal C} \cap F^n$. For our purposes, we shall consider codes of the form $\overline{\mathcal C} = C_{(\theta, \hv, T)}$, as in Definition \ref{defn:ccode}, by choosing suitable $\hv \in L^n, T \subset \ZZ$. Note that the $F$-dimension of $\mathcal C$ might, or might not, be less than the $L$-dimension of $\overline{\mathcal C}$; this is predictable if $\overline{\mathcal C}$ follows Definition \ref{defn:ccode} as we shall describe later in Remark \ref{remark:overlineT}.

Decoding over the subfield subcode $\mathcal C$ can be done by any decoding algorithm for $\overline{\mathcal C}$, since $F^n$ inherits the rank metric from $L^n$ as noted in Lemma \ref{lemma:ranksubfield}. When $\overline{\mathcal C} = C_{(\theta, \hv, T)}$, we may use any of either Algorithm \ref{alg:span} or Algorithm \ref{alg:locator}, in the context of the field $L$ and the automorphism $\theta$ instead of $F$ and $\sigma$, under the conditions studied in previous sections.
Despite the fact that the input is some $\yv \in F^n$, the algorithms will still work in the larger field $L$ in most circumstances, since at least $n - \sord$ entries of $\hv$ will not be in $F$, and choosing a length $n$ such that $n \le \sord$ would defeat our current motivation for considering a field $L$ larger than $F$.

The following result shows that, if the considered codes follow Definition \ref{defn:ccode}, whenever a suitable field extension $L/F$ with corresponding $\theta, \sigma$ exists, the previously discussed approach of interleaving a code with itself is equivalent to an instance of our current approach through subfield subcodes with the same set $T$.

\begin{prop}\label{prop:interleavedassubfieldsubcode}
	Let $\theta : L \to L$ be an extension of $\sigma : F \to F$ of degree $\ell$ as in Definition \ref{defn:extension}.
	Consider any $\{ b_1, \dots, b_\ell \}$ $F$-basis of $L$ and any vector $\hv' \in F^\mu$ of $F^\sigma$-linearly independent entries, and define $\hv = (b_1 \hv' | b_2 \hv' | \dots | b_\ell \hv')$.
	Then, for any finite $T \subset \ZZ$, the $F$-linear code $C_{(\sigma, \hv', T)}^{\times \ell}$ equals the subfield subcode $C_{(\theta, \hv, T)} \cap F^{\ell \mu}$.
\end{prop}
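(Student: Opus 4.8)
The plan is to show containment in both directions by unwinding both definitions in terms of the defining matrices $H$ from \eqref{eq:H}. Write a generic vector of length $\ell\mu$ as $\vv = (\vv_1 | \vv_2 | \dots | \vv_\ell)$ with each $\vv_j \in F^\mu$ (for the ``$\supseteq$'' direction we will first have to argue the blocks lie in $F^\mu$, not just $L^\mu$). By Definition \ref{defn:ccode}, $\vv \in C_{(\theta,\hv,T)}$ iff $\vv\,\theta^i(\hv)^T = 0$ for all $i \in T$; since $\hv = (b_1\hv' | \dots | b_\ell\hv')$ and $\theta^i(b_m\hv') = \theta^i(b_m)\,\theta^i(\hv')$, this reads
\begin{equation}\label{eq:propproofexpand}
	0 = \vv\,\theta^i(\hv)^T = \sum_{m=1}^\ell \theta^i(b_m)\bigl(\vv_m\,\theta^i(\hv')^T\bigr) \qquad \text{for all } i \in T.
\end{equation}
The key observation is that $\theta$ restricted to $F$ is $\sigma$, so when the entries of $\vv$ lie in $F$ each inner quantity $\vv_m\,\theta^i(\hv')^T = \vv_m\,\sigma^i(\hv')^T$ is an element of $F$; meanwhile $\theta^i(b_1),\dots,\theta^i(b_\ell)$ is again an $F$-basis of $L$, since $\theta^i$ is an $F^\sigma$-automorphism of $L$ fixing $F^\sigma$ but, more to the point, it is an $F$-semilinear bijection $L\to L$ (it maps the $F$-basis $\{b_m\}$ to another $F$-linearly independent set because $\theta^i$ is injective and $F$-semilinear over the automorphism $\sigma^i$ of $F$). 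Hence the $F$-linear combination in \eqref{eq:propproofexpand} vanishes iff every coefficient $\vv_m\,\sigma^i(\hv')^T$ vanishes, i.e. iff $\vv_m \in C_{(\sigma,\hv',T)}$ for every $m$, which is exactly $\vv \in C_{(\sigma,\hv',T)}^{\times\ell}$.

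For the ``$\subseteq$'' direction the above already does the job once restricted to $F^{\ell\mu}$: if $\vv = (\vv_1|\dots|\vv_\ell)$ with all $\vv_m \in C_{(\sigma,\hv',T)} \subseteq F^\mu$, then each $\vv_m\,\sigma^i(\hv')^T = 0$, so \eqref{eq:propproofexpand} holds, giving $\vv \in C_{(\theta,\hv,T)} \cap F^{\ell\mu}$. For ``$\supseteq$'', take $\vv \in C_{(\theta,\hv,T)} \cap F^{\ell\mu}$; then $\vv \in F^{\ell\mu}$ so each block $\vv_m$ has entries in $F$, and the linear-independence argument above forces each $\vv_m\,\sigma^i(\hv')^T = 0$ for all $i \in T$, i.e. $\vv_m \in C_{(\sigma,\hv',T)}$, so $\vv \in C_{(\sigma,\hv',T)}^{\times\ell}$. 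One should also check the side condition that $\hv$ has $F^\theta = F^\sigma$-linearly independent entries, as required by Definition \ref{defn:ccode} for $C_{(\theta,\hv,T)}$ to be well defined: the entries of $\hv$ are $b_m h'_k$ for $1 \le m \le \ell$, $1 \le k \le \mu$; an $F^\sigma$-linear relation among them can be grouped by $m$ into $\sum_m b_m \bigl(\sum_k c_{m,k} h'_k\bigr) = 0$ with each inner sum in $F$, so $F$-linear independence of the $b_m$ gives $\sum_k c_{m,k} h'_k = 0$ for each $m$, and then $F^\sigma$-linear independence of the $h'_k$ kills all $c_{m,k}$.

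The main obstacle is purely the bookkeeping around semilinearity: one must be careful that $\theta^i$ does not fix $F$ pointwise but acts as $\sigma^i$, so ``$\theta^i(b_1),\dots,\theta^i(b_\ell)$ is an $F$-basis of $L$'' needs the semilinear-bijection argument rather than $F$-linearity, and the scalars $\theta^i(b_m)$ pulled out of \eqref{eq:propproofexpand} are not the original $b_m$. Everything else is routine expansion of the definition of the interleaved code and of $C_{(\cdot,\cdot,T)}$, together with the already-recorded fact (Lemma \ref{lemma:ranksubfield}) that the two rank metrics agree on $F^{\ell\mu}$, which is what makes this identification meaningful for decoding but is not actually needed for the set-theoretic equality itself.
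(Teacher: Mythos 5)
Your proof is correct and follows essentially the same route as the paper's: both expand the defining condition blockwise and use that $\{\theta^{i}(b_1),\dots,\theta^{i}(b_\ell)\}$ remains an $F$-basis of $L$ to split the single equation over $L$ into $\ell$ equations over $F$. Your additional verification that the entries of $\hv$ are $L^\theta = F^\sigma$-linearly independent (so that $C_{(\theta,\hv,T)}$ fits Definition \ref{defn:ccode}) is a worthwhile detail the paper leaves implicit.
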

\begin{proof}
	Let $C_j$, $C_j'$ be the $j$-th column of, respectively, $H_{(\theta, \hv, T)}$ and $H_{(\sigma, \hv', T)}$. Then, $C_j$ is the vertical concatenation of $\theta^{T_j}(b_1)C_j', \dots, \theta^{T_j}(b_\ell)C_j'$ for some element $T_j$ in $T$. Since $\{ \theta^{T_j}(b_1), \dots, \theta^{T_j}(b_\ell) \}$ is an $F$-basis of $L$, it follows that, for any $\cv_1, \dots, \cv_\ell \in F^\mu$, $\cv_l C_j' = 0$ for all $1 \le l \le \ell$ if and only if $(\cv_1 | \dots | \cv_\ell) C_j$, which equals $(\theta^{T_j}(b_1) \cv_1 + \dots + \theta^{T_j}(b_\ell) \cv_\ell) C_j'$, is $0$. Applying this fact to all the columns in the aforementioned matrices, the product $\cv_l H_{(\sigma, \hv', T)}$ is the zero vector in $F^\mu$ for all $1 \le l \le \ell$ if and only if $(\cv_1 | \dots | \cv_\ell) H_{(\theta, \hv, T)}$ is the zero vector in $L^{\ell \mu}$. Therefore:
	\begin{align*}
		C_{(\sigma, \hv', T)}^{\times \ell}
		& = \{ (\cv_1 | \dots | \cv_\ell) ~|~ \cv_1, \dots, \cv_\ell \in C_{(\sigma, \hv', T)} \subseteq F^\mu \} \\
		& = \{ (\cv_1 | \dots | \cv_\ell) \in F^{\ell \mu} ~|~ \cv_l H_{(\sigma, \hv', T)} = \zerov \text{ for all } 1 \le l \le \ell \} \\
		& = \{ \cv \in F^{\ell \mu} ~|~ \cv H_{(\theta, \hv, T)} = \zerov \} \\
		& = C_{(\theta, \hv, T)} \cap F^{\ell \mu}. \qedhere
	\end{align*}
\end{proof}

On the other hand, the approach through subfield subcodes yields codes which are, up to possible shortening, $F^\sigma$-linearly rank equivalent (recall Definition \ref{defn:fsigmalinearrankequivalence}) to a particular construction through interleaved codes.

\begin{prop}\label{prop:subfieldsubcodeequivalenttointerleaved}
	Consider any code of the form $\mathcal C = \overline{\mathcal C} \cap F^n$, where $\overline{\mathcal C} = C_{(\theta, \hv, T)}$ where $\theta : L \to L$ is an extension of $\sigma : F \to F$ of degree $\ell$ as in Definition \ref{defn:extension}, and $\hv \in L^n$ has $L^\theta$-linearly independent entries.
	Consider any $\bv' = (a_1, a_2, \dots, a_\sord) \in F^\sord$ whose entries constitute an $F^\sigma$-basis of $F$.
	Then, $\mathcal C$ is a $(\thord - n)$-times shortened code from a code which is $F^\sigma$-linearly rank equivalent to $C_{(\sigma, \bv', T)}^{\times \ell}$, where $\sigma$ is the restriction of $\theta$ to $F$.
\end{prop}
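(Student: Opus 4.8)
The plan is to reduce everything to the case where the length equals the order of the automorphism, where two results are already available: Corollary \ref{cor:equivalencensord} applied to $\theta$ (any two codes $C_{(\theta,\cdot,T)}$ of length $\thord$ are $L^\theta$-linearly rank equivalent) and Proposition \ref{prop:interleavedassubfieldsubcode} (which realizes $C_{(\sigma,\bv',T)}^{\times\ell}$ as a concrete subfield subcode of such a length-$\thord$ code). Gluing these together needs only two elementary commutation facts: taking a subfield subcode commutes with shortening at a fixed coordinate set, and with multiplying by an invertible matrix whose entries lie in the subfield. Since $L^\theta=F^\sigma\subseteq F$, the second fact upgrades an $L^\theta$-linear rank equivalence of codes over $L$ to an $F^\sigma$-linear rank equivalence of codes over $F$.

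First I would reduce to length $\thord$. The entries of $\hv\in L^n$ are $L^\theta$-linearly independent, so $n\le[L:L^\theta]=\thord$ and $\hv$ extends to an $L^\theta$-basis $\hv^*$ of $L$ of length $\thord$ whose first $n$ entries are those of $\hv$. By Remark \ref{remark:shortened}, $C_{(\theta,\hv,T)}$ is $C_{(\theta,\hv^*,T)}$ shortened $\thord-n$ times at the last $\thord-n$ coordinates. Intersecting with $F$-vectors and using that shortening at a fixed coordinate set commutes with the subfield subcode, $\mathcal C=C_{(\theta,\hv,T)}\cap F^n$ is exactly $\mathcal D:=C_{(\theta,\hv^*,T)}\cap F^{\thord}$ shortened $\thord-n$ times. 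Hence it suffices to show that $\mathcal D$ is $F^\sigma$-linearly rank equivalent to $C_{(\sigma,\bv',T)}^{\times\ell}$.

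To identify $\mathcal D$ up to equivalence, pick any $F$-basis $\{b_1,\dots,b_\ell\}$ of $L$ and set $\hv_0=(b_1\bv'\mid b_2\bv'\mid\dots\mid b_\ell\bv')$, where $\bv'=(a_1,\dots,a_\sord)$ is the given $F^\sigma$-basis of $F$. A short check—using that $\{b_i\}$ is $F$-independent and $\{a_j\}$ is $F^\sigma$-independent—shows the $\ell\sord=\thord$ entries of $\hv_0$ are $L^\theta=F^\sigma$-linearly independent, so $\hv_0$ is an $L^\theta$-basis of $L$. Then Proposition \ref{prop:interleavedassubfieldsubcode} (with $\mu=\sord$) gives $C_{(\theta,\hv_0,T)}\cap F^{\thord}=C_{(\sigma,\bv',T)}^{\times\ell}$, while Corollary \ref{cor:equivalencensord} applied to the order-$\thord$ automorphism $\theta$ yields an invertible $\thord\times\thord$ matrix $P$ over $L^\theta=F^\sigma$ with $C_{(\theta,\hv^*,T)}=C_{(\theta,\hv_0,T)}\,P$. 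Since $P$ and $P^{-1}$ have entries in $F^\sigma\subseteq F$, intersecting with $F^{\thord}$ gives $\mathcal D=\bigl(C_{(\theta,\hv_0,T)}\cap F^{\thord}\bigr)P=C_{(\sigma,\bv',T)}^{\times\ell}\,P$, an $F^\sigma$-linear rank equivalence. Combined with the previous step, $\mathcal C$ is $\mathcal D$ shortened $\thord-n$ times and $\mathcal D$ is $F^\sigma$-linearly rank equivalent to $C_{(\sigma,\bv',T)}^{\times\ell}$, which is the assertion.

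I do not anticipate a genuine obstacle: the two commutation facts and the verification that $\hv_0$ is a basis are routine, the only point needing care being the bookkeeping of \emph{which} coordinates are shortened, so that it is the shortening of $\mathcal D$ (rather than of $C_{(\theta,\hv^*,T)}$ itself) that appears in the conclusion—this is handled automatically by choosing $\hv^*$ as a literal extension of $\hv$ and invoking Lemma \ref{lemma:ranksubfield} to ensure the rank metrics match throughout.
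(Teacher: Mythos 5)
Your proposal is correct and follows essentially the same route as the paper: both arguments hinge on Proposition \ref{prop:interleavedassubfieldsubcode} to realize $C_{(\sigma,\bv',T)}^{\times\ell}$ as the full-length subfield subcode $C_{(\theta,\hv_0,T)}\cap F^{\thord}$, and on an invertible matrix over $L^\theta=F^\sigma$ relating the two choices of $\hv$. The only cosmetic difference is that the paper works directly with the rectangular matrix $P$ satisfying $\hv^T=P\bv^T$ and completes it to $\tilde P$, whereas you extend $\hv$ to a basis $\hv^*$ first and then invoke Remark \ref{remark:shortened} and Corollary \ref{cor:equivalencensord} — which amounts to the same computation, since $\hv^{*T}=\tilde P\bv^T$.
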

\begin{proof}
	Consider any $F$-basis $\{ b_1, \dots, b_\ell \}$ of $L$. Then, $\{ a_i b_l ~|~ 1 \le i \le \sord, 1 \le l \le \ell \}$ is an $L^\theta$-basis of $L$.
Let us define $\bv$ as the vector $(b_1 \bv' | b_2 \bv' | \dots | b_\ell \bv')$. As the entries of $\hv \in L^n$ are $L^\theta$-linearly independent, there exists an $n$-rank matrix $P \in M_{n \times \thord}\left(L^\theta\right)$ such that $\hv^T = P \bv^T$.
Following part of the argument shown in the proof of Lemma \ref{lemma:rankequivalence}, the code $\overline{\mathcal C} = C_{(\theta, \hv, T)}$ is the left kernel of the matrix $H_{(\theta, \hv, T)} = P H_{(\theta, \bv, T)}$, defined according to \eqref{eq:H}.

Now define $\mathcal D = C_{(\theta, \bv, T)} \cap F^\thord$.
The matrix $P$ can be completed into an invertible matrix $\tilde P \in M_{\thord \times \thord}\left(L^\theta\right)$ by adding $\thord - n$ linearly independent rows. For any such matrix $\tilde P$, $\mathcal C = \{ \cv \in F^n ~|~ cP \in \mathcal D \}$ is a $(\thord - n)$-times shortened code from the code $\mathcal D \tilde P^{-1} = \{ \cv \in F^\thord ~|~ c \tilde P \in \mathcal D \}$, which has the same length, dimension and rank weight distribution as $\mathcal D$, which by Proposition \ref{prop:interleavedassubfieldsubcode} is $C_{(\sigma, \bv', T)}^{\times \ell}$.
\end{proof}

We can control the dimension of the resulting subfield subcode through the following result.

\begin{lemma}\label{lemma:subcodedimension}
	Let $\theta : L \to L$ be an extension of degree $\ell$ of $\sigma : F \to F$ as in Definition \ref{defn:extension}.
	If $n = \thord$, the $F$-dimension of the code $\mathcal C = C_{(\theta, \hv, T)} \cap F^n$ is $\ell(\sord - \kappa)$, where $\kappa$ denotes the number of elements in $T$ modulo $\sord$.
\end{lemma}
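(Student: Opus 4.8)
The plan is to reduce the dimension count to a question about the rank of the parity-check matrix $H_{(\theta,\hv,T)}$ over the subfield $F$, and then to combine Proposition~\ref{prop:subfieldsubcodeequivalenttointerleaved} (which, since $n=\thord$, says $\mathcal C$ is $F^\sigma$-linearly rank equivalent to $C_{(\sigma,\bv',T)}^{\times\ell}$, with no shortening) with the dimension formula for codes of length $\sord$ satisfying Definition~\ref{defn:ccode}. First I would note that, because $F^\sigma$-linear rank equivalence preserves the dimension (as observed after Definition~\ref{defn:fsigmalinearrankequivalence}), and because interleaving $\ell$ times multiplies the dimension by $\ell$, it suffices to show that the $F$-dimension of a length-$\sord$ code $C_{(\sigma,\bv',T)}$ (with $\bv'$ an $F^\sigma$-basis of $F$) is $\sord-\kappa$, where $\kappa$ is the number of elements of $T$ modulo $\sord$.

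Next I would establish that count. Since $n=\sord$ here, the discussion following Definition~\ref{defn:ccode} already gives the answer directly: writing $k$ for the integer such that the elements of $T$ yield $\sord-k$ distinct remainders modulo $\sord$, i.e. $k=\sord-\kappa$, and recalling that the remainders of the elements of $T$ modulo $\sord$ lie in $\{0,\dots,\sord-1\}$ trivially, that discussion (via \cite[Corollary~4.13]{LL88}) asserts that $C_{(\sigma,\bv',T)}$ is an $[\sord,k]$ code. Concretely, $H_{(\sigma,\bv',T)}$ has exactly $\kappa$ distinct columns up to the redundancy introduced by repeated remainders modulo $\sord$ (columns indexed by $i$ and $i+\sord$ coincide since $\sigma^{i+\sord}=\sigma^i$), and the submatrix formed by one column per distinct remainder is a submatrix of the invertible matrix $H_{(\sigma,\bv',\{0,\dots,\sord-1\})}$, hence has rank $\kappa$; therefore $\rk H_{(\sigma,\bv',T)}=\kappa$ and the left kernel has $F$-dimension $\sord-\kappa$.

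The main obstacle, and the step deserving the most care, is the bookkeeping about \emph{where} the rank is computed: Proposition~\ref{prop:subfieldsubcodeequivalenttointerleaved} gives an equivalence of the subfield subcode $\mathcal C\subseteq F^n$ with an honest $F$-linear code $C_{(\sigma,\bv',T)}^{\times\ell}$, so the $F$-dimension is genuinely that of the interleaved code and not a dimension over $L$; one must be explicit that it is the $F$-dimension (not the $L$-dimension of $\overline{\mathcal C}$) that is being produced, and that $\kappa$ is unchanged when passing between $T$ as a subset of $\ZZ$ and its image modulo $\sord$. I would therefore phrase the argument as: by Proposition~\ref{prop:subfieldsubcodeequivalenttointerleaved} with $n=\thord$ (so zero shortenings), $\mathcal C$ is $F^\sigma$-linearly rank equivalent to $C_{(\sigma,\bv',T)}^{\times\ell}$; an $F^\sigma$-linear rank equivalence is in particular an $F$-linear isomorphism, so the two codes have equal $F$-dimension; interleaving multiplies $F$-dimension by $\ell$, so $\dim_F\mathcal C=\ell\cdot\dim_F C_{(\sigma,\bv',T)}$; and the latter equals $\sord-\kappa$ by the length-$\sord$ dimension computation recalled above. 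Assembling these gives $\dim_F\mathcal C=\ell(\sord-\kappa)$, as claimed.
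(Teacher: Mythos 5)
Your proposal is correct and follows essentially the same route as the paper: invoke Proposition \ref{prop:subfieldsubcodeequivalenttointerleaved} with $n = \thord$ (so no shortening) to get the $F^\sigma$-linear rank equivalence with $C_{(\sigma,\bv',T)}^{\times\ell}$, then use the dimension count $\sord-\kappa$ from the discussion after Definition \ref{defn:ccode}. The extra bookkeeping you supply (that the equivalence is in particular an $F$-linear isomorphism, and the explicit rank computation for $H_{(\sigma,\bv',T)}$) is just a more detailed write-up of the same argument.
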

\begin{proof}
	By Proposition \ref{prop:subfieldsubcodeequivalenttointerleaved} in our case where $n = \thord$, $\mathcal C$ is $F^\sigma$-linearly rank equivalent to $C_{(\sigma, \bv', T)}^{\times \ell}$. The result follows from the dimension of $C_{(\sigma, \bv', T)}$ being $\sord - \kappa$ as noted right after Definition \ref{defn:ccode}.
\end{proof}

	For the general case where $n$ could be less than $\thord$, the dimension of $\mathcal C$ is reduced by at most $\thord - n$ with respect to the case $n = \thord$ considered in the lemma.

\begin{corollary}\label{cor:Textension}
	Let $\theta : L \to L$ be an extension of $\sigma : F \to F$ as in Definition \ref{defn:extension}.
	Consider any $T_1, T_2 \subset \ZZ$ such that $T_1 + \sord \ZZ = T_2 + \sord \ZZ$; that is, $T_1$ and $T_2$ have the same entries modulo $\sord$.
	Then, for any $\hv \in L^n$ of $L^\theta$-linearly independent entries, the codes $C_{(\theta, \hv, T_1)} \cap F^n$ and $C_{(\theta, \hv, T_2)} \cap F^n$ are equal.
\end{corollary}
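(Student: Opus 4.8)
The plan is to reduce the statement to a single observation about the linear conditions that cut out the two subfield subcodes inside $F^n$. By Definition \ref{defn:ccode}, $C_{(\theta, \hv, T)}$ is the set of $\vv \in L^n$ with $\vv\,\theta^i(\hv)^T = 0$ for every $i \in T$, so $C_{(\theta, \hv, T)} \cap F^n$ consists precisely of those $\vv$ with entries in $F$ satisfying all of these conditions. It therefore suffices to show that, for a vector $\vv$ whose entries lie in $F$, the condition $\vv\,\theta^i(\hv)^T = 0$ depends only on the residue of $i$ modulo $\sord$; granting this, the hypothesis $T_1 + \sord\ZZ = T_2 + \sord\ZZ$ immediately shows that the two codes are defined by the same family of linear equations on $F^n$, hence coincide.

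The key step is this periodicity. First I would invoke the remark following Definition \ref{defn:extension}, which records that $F = L^{\theta^{\sord}}$; thus $\theta^{\sord}$ fixes every entry of any $\vv \in F^n$. Then, for $\vv = (v_1,\dots,v_n) \in F^n$ and any $i \in \ZZ$, I would apply the field automorphism $\theta^{-\sord}$ of $L$ to the scalar $\vv\,\theta^{i+\sord}(\hv)^T = \sum_k v_k\,\theta^{i+\sord}(h_k)$; since $\theta^{-\sord}(v_k) = v_k$ for each $k$, this gives $\theta^{-\sord}\bigl(\vv\,\theta^{i+\sord}(\hv)^T\bigr) = \sum_k v_k\,\theta^{i}(h_k) = \vv\,\theta^{i}(\hv)^T$. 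As $\theta^{-\sord}$ is injective, $\vv\,\theta^{i+\sord}(\hv)^T = 0$ if and only if $\vv\,\theta^{i}(\hv)^T = 0$, and iterating this equivalence in both directions yields that the vanishing of $\vv\,\theta^{i}(\hv)^T$ is unchanged under replacing $i$ by any integer congruent to it modulo $\sord$.

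With this in hand the conclusion is immediate. Given the hypothesis on $T_1, T_2$, each $i_1 \in T_1$ is congruent modulo $\sord$ to some $i_2 \in T_2$ and conversely; hence on $F^n$ the systems "$\vv\,\theta^{i}(\hv)^T = 0$ for all $i \in T_1$" and "$\vv\,\theta^{i}(\hv)^T = 0$ for all $i \in T_2$" are literally the same, so $C_{(\theta, \hv, T_1)} \cap F^n = C_{(\theta, \hv, T_2)} \cap F^n$.

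I do not anticipate a genuine obstacle: the entire content is the periodicity of the restricted syndrome conditions, which is a one-line consequence of $F = L^{\theta^{\sord}}$. The only point requiring a little care is that this periodicity genuinely fails for general $\vv \in L^n$ — which is exactly why $C_{(\theta, \hv, T_1)}$ and $C_{(\theta, \hv, T_2)}$ need not be equal as $L$-linear codes before intersecting with $F^n$ — so the argument must use that the entries of $\vv$ lie in $F$, and nothing more.
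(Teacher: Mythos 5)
Your proof is correct, but it takes a genuinely different route from the paper's. The paper deduces the equality from a dimension count: for $n = \thord$ it observes that $C_{(\theta, \hv, T_1 \cup T_2)} \cap F^n$ is contained in both codes and that all three have the same $F$-dimension by Lemma \ref{lemma:subcodedimension} (which itself rests on the equivalence with interleaved codes in Proposition \ref{prop:subfieldsubcodeequivalenttointerleaved}), and it then handles $n < \thord$ by a separate shortening argument. You instead prove the stronger, more local fact that for $\vv \in F^n$ the vanishing of $\vv\,\theta^{i}(\hv)^T$ depends only on $i$ modulo $\sord$, using $F \subseteq L^{\theta^{\sord}}$ from the remark after Definition \ref{defn:extension} and applying the automorphism $\theta^{-\sord}$ to the syndrome. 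This is more elementary (no dimension formula, no interleaving detour), works uniformly for all $n \le \thord$ without a case split, and shows that the two subfield subcodes are cut out by literally the same linear conditions on $F^n$ rather than merely having equal dimension over a common subcode; your closing observation that the periodicity fails for general $\vv \in L^n$ correctly identifies why the intersection with $F^n$ is essential. What the paper's route buys in exchange is that it reuses already-established structural results and makes the link to the interleaved description (Remark \ref{remark:overlineT}) explicit.
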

\begin{proof}
	If $n = \thord$, this follows from the code $C_{(\theta, \hv, T_1 \cup T_2)}$ being a subset of both codes, as the dimensions of the three codes match by the previous result.
	In general, consider any $\hv' \in L^\thord$ with $L^\theta$-linearly independent entries such that its first $n$ entries are the ones in $\hv$. The considered codes are shortened codes, on the last $\thord - n$ coordinates, from respectively $C_{(\theta, \hv', T_1)} \cap F^n$ and $C_{(\theta, \hv', T_2)} \cap F^n$. These codes are equal as their length is $\thord$, so the respective shortened codes on the same coordinate set are also equal.
\end{proof}

\begin{remark}\label{remark:overlineT}
	In particular, when considering subfield subcodes (or, by Proposition \ref{prop:interleavedassubfieldsubcode}, interleaved codes where the extension exists), instead of a given set $T$ we may consider the set $T' \subset \{ 0, \dots, \sord - 1 \}$ of elements in $T$ modulo $\sord$, or the set $\overline T = T' + \sord \{ 0, \dots, \ell - 1 \} \subseteq \{ 0, \dots, \thord - 1 \}$.
	If $n = \thord$, by Lemma \ref{lemma:subcodedimension} the $F$-dimension of $C_{(\theta, \hv, T)} \cap F^n$ is equal to $n$ minus the number of elements in $\overline T$, which is also the $L$-dimension of $C_{(\theta, \hv, T)}$; in fact, the $L$-dimension of $C_{(\theta, \hv, T)}$ is kept as the $F$-dimension of $C_{(\theta, \hv, T)} \cap F^n$ if and only if $T$ is, up to modulo $\thord$, of the form given by $\overline T$.
	In \cite{GLNN18}, which deals with skew codes (also under $n = \thord$), the set $\overline T$ is considered so that the coefficients of the skew polynomial that generates the code as a left ideal are in the subfield.
\end{remark}

\begin{example}\label{example:interleavedasasubfieldsubcode}
	Consider the code $\mathcal C$ introduced in Section \ref{section:example}. This code is $\mathcal C = C_{(\sigma, \hv, T)}$ for $\sigma$ the Frobenius endomorphism in $F = \FF_{2^{14}} = \FF_2(a)$ where $a \in F$ is such that $a^{14} + a^7 + a^5 + a^3 + 1 = 0$, $\hv = \left( a^7, \sigma(a^7), \dots, \sigma^{\sord-1}(a^7) \right)$, and $T = \{ 0, 1, 2, 3, 4,\allowbreak 8, 9, 10, 11, 12 \}$.
	The result of interleaving $\mathcal C$ with itself, $\mathcal C^{\times 2}$, is the set of possible results of concatenating two codewords in $\mathcal C$. Since $\mathcal C$ was shown in Section \ref{section:example} to be a $[14, 4, 7]$ rank code, $\mathcal C^{\times 2}$ is a $[28, 8, 7]$ rank code.
	
	The automorphism $\sigma$ admits an extension $\theta : L \to L$ for $L = \FF_{2^{28}} = F(u)$ as in Definition \ref{defn:extension} for some $u \in L$, so $\{ 1, u \}$ is an $F$-basis of $L$.
	By Proposition \ref{prop:interleavedassubfieldsubcode}, the $F$-linear subfield subcode $C_{(\theta, \bar \hv, T)} \cap F^{28}$, where $\bar \hv = (\hv | u \hv) = \left( a^7, \sigma(a^7), \dots, \sigma^{\sord-1}(a^7), ua^7, \sigma(a^7), \dots, u\sigma^{\sord-1}(a^7) \right)$, equals $\mathcal C^{\times 2}$.
	By Corollary \ref{cor:Textension}, $C_{(\theta, \bar \hv, T)} \cap F^{28}$ is also equal to $C_{(\theta, \bar \hv, T + \{ 0, 14 \})} \cap F^{28}$.
	Stated otherwise, we may describe $\mathcal C^{\times 2}$ as the left kernel in $F^{28}$ of either $H_{(\theta, \bar \hv, T)}$, which is a $28 \times 10$ matrix with entries in $L$, or $H_{(\theta, \bar \hv, T + \{ 0, 14 \})}$, which is a $28 \times 20$ matrix, again with entries in $L$.
\end{example}

\begin{example}\label{example:subfieldsubcodeasinterleaved}
	By Lemma \ref{lemma:subcodedimension}, there is only one nontrivial subfield subcode of the code $\mathcal C = C_{(\sigma, \hv, T)}$ from Section \ref{section:example}. As $T = \{ 0, 1, 2, 3, 4,\allowbreak 8, 9, 10, 11, 12 \}$ has $4$ elements modulo $4$ and $2$ elements modulo $2$, the subfield subcode of $\mathcal C$ with respect to the subfields of $F = \FF_{2^{14}}$ with $4$ and $2$ elements both have dimension $0$.
	The subfield subcode of $\mathcal C$ with respect to the subfield $K$ of $F$ of order $2^7$, $\mathcal C \cap K^{14}$, has dimension $2(7 - 6) = 2$ by Lemma \ref{lemma:subcodedimension}, as $\ell = 2$ is the degree of the extension $F/K$, the order of the restriction of $\sigma$ to $K$ is $7$, and $T$ has $\kappa = 6$ distinct elements modulo $7$.
	Since the minimum rank distance of $\mathcal C$, which is at least the one of $\mathcal C \cap K^{14}$, was shown in Section \ref{section:example} to be at least $7$ by both Theorem \ref{thm:ht} and Theorem \ref{thm:roos} and there cannot be vectors with entries in $K$ of rank weight beyond $7$ as $7$ is the degree of the extension $K/F^\sigma$, we not only conclude that $\mathcal C \cap K^{14}$ is a $[14, 2, 7]$ rank code: furthermore, it is a code whose $|K|^2 - 1 = 16\,383$ nonzero codewords all have rank $7$. This also proves that the minimum rank distance of $\mathcal C$ is exactly $7$.

	Note that the length of $\mathcal C$ equals $\sord$; otherwise its subfield subcode with respect to $K$ would have dimension at most $2$. By Remark \ref{remark:overlineT}, $\mathcal C \cap K^{14}$ is equal to both $C_{(\sigma, \hv, \{ 0, \dots, 5 \})} \cap K^{14}$ and $C_{(\sigma, \hv, \{ 0, \dots, 5 \} \cup \{ 7, \dots, 12 \})} \cap K^{14}$.
	Proposition \ref{prop:subfieldsubcodeequivalenttointerleaved} shows that this code is $F^\sigma$-linearly rank equivalent to any code of the form $C_{(\rho, \hv', T')}^{\times 2}$, where $\hv' \in K^7$ is any $F^\sigma$-basis of $K$, $\rho$ is the restriction of $\sigma$ to $K$, and $T'$ can be $T$, $\{ 0, \dots, 5 \}$, or $\{ 0, \dots, 5 \} \cup \{ 7, \dots, 12 \}$. This also implies that all nonzero codewords of such codes have rank weight $7$, which is to be expected as they would all have a BCH-like lower bound of $7$ and each one of the two components has length $7$.
\end{example}

\subsection{Decoding over interleaved codes}

As previously noted, decoding over the subfield subcode of a given code can be done through decoding in the parent code. This also results in the possibility of decoding interleaved codes constructed from interleaving a code following Definition \ref{defn:ccode} with itself, since by Proposition \ref{prop:interleavedassubfieldsubcode} they can be described as some subfield subcode.
As we shall show, this decoding procedure can be rearranged so that the field extension $L/F$ is not explicitly constructed, and as a result the algorithm works within $F$. The result of this essentially extends \cite[Section VIII]{SJB11}, since such an extension always exists for finite fields, as shown in \cite[Example 2.4]{GLNN18}.

We therefore consider the $\mu$-length $F$-linear code $C_{(\sigma, \hv', T)}$ for some $\sigma : F \to F$, some $\hv' \in F^\mu$ of $F^\sigma$-linearly independent entries (so $\mu \le \sord$) and some $T \subset \ZZ$, as well as its corresponding interleaved code $C_{(\sigma, \hv', T)}^{\times \ell}$ of length $n = \ell \mu$.
This interleaved structure considers each $\vv \in F^n$ as the concatenation of some $\vv_1, \dots, \vv_\ell \in F^\mu$.
Consequently, for any decomposition $\yv = \cv + \ev$ where $\yv, \ev \in F^n$ and $\cv \in C_{(\sigma, \hv', T)}^{\times \ell}$, we also get $\yv_l = \cv_l + \ev_l$, where $\cv_l \in C_{(\sigma, \hv', T)}$, for each block $1 \le l \le \ell$.

We assume the existence of an extension $\theta : L \to L$ of $\sigma$ of degree $\ell$. In addition, we take an $F$-basis of $L$ $\{ b_1, \dots, b_\ell \}$ and consider $\hv = (b_1 \hv' | b_2 \hv' | \dots | b_\ell \hv')$. By Proposition \ref{prop:interleavedassubfieldsubcode}, $C_{(\sigma, \hv', T)}^{\times \ell} = C_{(\theta, \hv, T)} \cap F^n$. 
The $\hv$-defining set of this code, by Remark \ref{remark:overlineT}, is $T + \sord \ZZ$.
For any $d \in \ZZ$, $\{ \theta^d(b_1), \dots, \theta^d(b_\ell) \}$ is an $F$-basis of $L$, so the $d$-th syndrome of $\yv$ with respect to $\hv$, $S_d = \yv \theta^d(\hv)^T$, is decomposed as
\begin{equation}\label{eq:sdlbasis}
	S_d = \sum_{l=1}^\ell \theta^d(b_l) S_d^l,
\end{equation}
and this decomposition is unique in the sense that the $\ell$ coefficients $S_d^l = \yv_l \sigma^d(\hv')^T$ are the only ones in $F$ such that this is satisfied. If, in addition, $d \in T + \sord \ZZ$, then $S_d = \yv \theta^d(\hv)^T = \ev \theta^d(\hv)^T$ (by \eqref{eq:sdeh} in the context of $C_{(\theta, \hv, T)}$) and
\begin{equation}\label{eq:sdl}
	S_d^l = \yv_l \sigma^d(\hv')^T = \ev_l \sigma^d(\hv')^T.
\end{equation}

The $\nu$-rank $\nu \times n$ matrix $B$ such that $\ev = \epv B$ as in \eqref{eq:epsilon}, where $\epv$ is a $\nu$-length basis for the $F^\sigma$-span of the entries in $\ev$, is the horizontal concatenation of the $\nu \times \mu$ matrices $B_1, \dots, B_\ell$ such that
\begin{equation}\label{eq:epsilonl}
	\ev_l = \epv B_l \quad \text{ for each } 1 \le l \le \ell.
\end{equation}
Note that the rank of $B_l$ corresponds to the rank weight of the error in the $l$-th block, $\ev_l$, which might be less than $\nu$, the rank weight of $\ev$.
Correspondingly, $\etav^T = B \hv^T$ as defined in \eqref{eq:eta} takes the form
\begin{equation}\label{eq:etalc}
	\etav = \sum_{l=1}^\ell b_l \etav_l
\end{equation}
where
\begin{equation}\label{eq:etal}
	\etav_l^T = B_l (\hv')^T \quad \text{ for each } 1 \le l \le \ell.
\end{equation}
This leads to the following equations, as in \eqref{eq:syndromeeq}:
\begin{equation}\label{eq:epsilonetal}
	S_d^l = \ev_l \sigma^d(\hv')^T = \epv B_l \sigma^d(\hv')^T =  \epv \sigma^d(\etav_l)^T \quad \text{for all } d \in T + \sord \ZZ, 1 \le l \le \ell.
\end{equation}
Again, the rank weight of $\etav_l$ might not reach $\nu$ for any given $l$. Decoding can be done by first finding $\epv$ and then solving for \eqref{eq:epsilonetal}, \eqref{eq:etal} and \eqref{eq:epsilonl} for every $1 \le l \le \ell$.
We shall remark that $\epv$ not being part of a relevant solution for every block (e.g. some block having an error below $\rkw(\epv) = \nu$) is not an issue since, by Proposition \ref{prop:epsilonoreta}, a solution $\epv, \etav_l$ for the $l$-th block such that the elements of $\etav_l$ are within the $F^\sigma$-span of $\hv$ can be found from $\epv$, and then the remaining steps are successful by Remark \ref{remark:decompositionfromsolution}. In order to find $\epv$, we may now take advantage of the fact that its entries are in $F$, as well as from the following result.

\begin{prop}\label{prop:SFSRmultisequenceextension}
	Let $L/F$ be a field extension of degree $\ell$ and $\theta : L \to L$, $\sigma : F \to F$ be field automorphisms such that the restriction of $\theta$ to $F$ equals $\sigma$. Let $\{ b_1, \dots, b_\ell \}$ be any $F$-basis of $L$. Consider any sequence $s_0, \dots, s_{N-1}$ in $L$. Then, for every $0 \le i < N$ there exist unique $\sq i 1, \dots, \sq i \ell \in F$ such that $\sum_{l=1}^\ell \theta^i(b_l) \sq i l = s_i$. Furthermore, for any $\nu \in \ZZ^+$ and any $\vv = (v_0, \dots, v_\nu) \in L^{\nu + 1}$, $\vv$ is a $\theta$-SFSR for the sequence if and only if it is a $\sigma$-SFSR for the $\ell$ sequences $\sq 0 l, \dots, \sq {N-1} l$ for $1 \le l \le \ell$.
\end{prop}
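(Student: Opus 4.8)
The plan is to route everything through the observation that $\theta^i$ acts on $L$ as a $\sigma^i$-semilinear bijection over $F$. Since $\theta$ restricts to $\sigma$ on $F$, the subfield $F$ is $\theta$-stable and $\theta^i|_F=\sigma^i$; being a ring homomorphism, $\theta^i$ satisfies $\theta^i(a\lambda)=\sigma^i(a)\,\theta^i(\lambda)$ for $a\in F$ and $\lambda\in L$. As $\sigma^i$ is an automorphism of $F$, a semilinear bijection carries $F$-bases to $F$-bases, so $\{\theta^i(b_1),\dots,\theta^i(b_\ell)\}$ is again an $F$-basis of $L$. The first assertion follows at once: each $s_i\in L$ has a unique expansion $s_i=\sum_{l=1}^\ell\theta^i(b_l)\,\sq il$ with $\sq il\in F$.

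The equivalence reduces to a single linearity computation. Fix $n$ with $\nu\le n<N$, substitute the expansion of $s_{n-i}$, and use that $\theta^i$ is a ring homomorphism with $\theta^i\circ\theta^{n-i}=\theta^n$ and $\theta^i|_F=\sigma^i$ to obtain
\[
\sum_{i=0}^\nu v_i\,\theta^i(s_{n-i})
=\sum_{i=0}^\nu v_i\sum_{l=1}^\ell\theta^n(b_l)\,\sigma^i\!\left(\sq{n-i}l\right)
=\sum_{l=1}^\ell\theta^n(b_l)\left(\sum_{i=0}^\nu v_i\,\sigma^i\!\left(\sq{n-i}l\right)\right).
\]
The requirement $v_0\neq0$ belongs to both notions of SFSR, so it is common to the two sides and may be assumed. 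From the identity, the implication ``$\vv$ is a $\sigma$-SFSR for each of the $\ell$ sequences $\sq0l,\dots,\sq{N-1}l$ $\Rightarrow$ $\vv$ is a $\theta$-SFSR for $s_0,\dots,s_{N-1}$'' is immediate: if every inner sum vanishes for all admissible $n$, so does the left-hand side.

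The converse is where I expect the main difficulty. For it I would fix $n$ and read the displayed identity as an $F$-linear relation among $\theta^n(b_1),\dots,\theta^n(b_\ell)$ with coefficients $c_l:=\sum_{i=0}^\nu v_i\,\sigma^i(\sq{n-i}l)$; since $\{\theta^n(b_l)\}_{l}$ is an $F$-basis of $L$, the relation forces $c_l=0$ for every $l$, which is exactly the $\sigma$-SFSR recurrence at index $n$ for the sequence $\sq0l,\dots,\sq{N-1}l$. The one delicate point is that separating coefficients by the $F$-basis requires the $c_l$ to lie in $F$, equivalently that the entries $v_i$ of $\vv$ lie in $F$ (so that each $v_i\,\sigma^i(\sq{n-i}l)\in F$). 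This is automatic in every application of the proposition, where $\vv$ is an error span vector $\shv$ or an error locator vector $\ldv$: by Definition \ref{defn:span} and Definition \ref{defn:locator} these are coordinate vectors of least common left multiples of skew polynomials over $F$, hence $F$-valued. Granting that, the $F$-basis $\{\theta^n(b_l)\}_l$ decouples the recurrences and the proof is complete; combined with Proposition \ref{prop:SFSRsubfield}, this is what permits carrying out the SFSR-synthesis step of the interleaved decoder entirely within $F$, without constructing $L/F$.
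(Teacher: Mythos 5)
Your argument is, step for step, the paper's own proof: the same semilinearity observation gives the unique expansion of $s_i$ in the basis $\{\theta^i(b_l)\}_l$, the same interchange of sums gives the displayed identity, and both directions of the equivalence are meant to be read off from it. The one place where you diverge is that you decline to perform the coefficient-separation step of the converse without knowing that the $v_i$ lie in $F$, whereas the paper's proof asserts that ``the reciprocal follows from the fact that $\{\theta^j(b_l) ~|~ 1 \le l \le \ell\}$ is an $F$-basis of $L$'' without comment. Your caution is justified: the coefficients $c_l=\sum_{i=0}^\nu v_i\,\sigma^i(\sq {n-i} l)$ lie only in the $F$-span of $v_0,\dots,v_\nu$ inside $L$, and an $L$-linear relation among the $\theta^n(b_l)$ need not have vanishing coefficients. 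Indeed the ``only if'' direction, as literally stated for arbitrary $\vv\in L^{\nu+1}$, is false: take $F=\QQ$, $L=\QQ(\sqrt2)$, $\theta=\sigma=\mathrm{id}$ (permitted, per the remark following the proposition), the basis $\{1,\sqrt2\}$, the length-two sequence $s_0=1$, $s_1=\sqrt2$, and $\vv=(1,-\sqrt2)$; then $v_0s_1+v_1s_0=0$, so $\vv$ is a $\theta$-SFSR of length $1$ for $(s_0,s_1)$, yet for the component sequence $(\sq 0 1,\sq 1 1)=(1,0)$ one gets $v_0\cdot 0+v_1\cdot 1=-\sqrt2\ne 0$.

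So the ``delicate point'' you flag is a genuine gap, but it is a gap in the proposition and in the paper's proof rather than a defect peculiar to your write-up, and your remedy (restricting the converse to $\vv\in F^{\nu+1}$) is the correct one. It also suffices for every use made of the result: in the interleaved decoder the candidate SFSR is $\shv$ (or $\ldv$), whose entries lie in $F$ because the error values and locators do, minimality over $F$ follows from the unrestricted forward implication, and Proposition \ref{prop:SFSRsubfield} then transports uniqueness back to $L$ exactly as you indicate. The forward implication, which is the only one ever invoked for vectors not known to be $F$-valued, holds for arbitrary $\vv\in L^{\nu+1}$ and is unaffected.
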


Note that the conditions on $L/F, \sigma, \theta$ are more general than the ones required by Definition \ref{defn:extension}; e.g. $\thord$ might be less than $\ell \sord$. In fact, this result applies to LFSRs by taking the identity automorphism as $\theta$.

\begin{proof}
	Applying powers of a field automorphism $\theta : L \to L$ such that $\theta(F) = F$ to an $F$-basis of $L$ results in another $F$-basis, hence the uniqueness of $\sq i 1, \dots, \sq i \ell$.
	For any $\ell \le j < N$,
	\begin{align*}
		\sum_{i=0}^\nu v_i \theta^i(s_{j-i})
		& = \sum_{i=0}^\nu \sum_{l=1}^\ell v_i \theta^i\left(\theta^{j-i}(b_l) \sq {j-i} l\right) \\
		& = \sum_{l=1}^\ell \theta^j(b_l) \sum_{i=0}^\nu v_i \sigma^i\left(\sq {j-i} l\right).
	\end{align*}
	It follows from Definition \ref{defn:SFSR} that, if $\vv$ is a $\sigma$-SFSR generating the sequences $\sq i l$, then it is a $\theta$-SFSR generating the sequence $s_{i}$. The reciprocal follows from the fact that $\{\theta^j(b_l) ~|~ 1 \le l \le \ell\}$ is an $F$-basis of $L$ for any $j \in \ZZ$.
\end{proof}

The first major step in Algorithm \ref{alg:span} consists of computing the error span vector $\shv$, which leads to the error values $\epv$, by finding the shortest $\theta^{t_1}$-SFSR that generates the $r+1$ sequences $\s i j = S_{b + t_1 i + t_2 k_j}$ for each $0 \le i \le \delta - 2, 0 \le j \le r$, once some parameters $\delta, r, b, t_1, t_2, k_0 < \dots < k_r$ have been chosen.
For each $0 \le j \le r$, we may apply Proposition \ref{prop:SFSRmultisequenceextension} to the $(\delta - 1)$-length sequence $\s 0 j, \dots, \s {\delta - 2} j$, taking $\theta^{t_1}$ and $\sigma^{t_1}$ as the automorphisms and $\{ \sigma^{b + t_2 k_j}(b_1), \dots, \sigma^{b + t_2 k_j}(b_l) \}$ as the $F$-basis of $L$. By using Proposition \ref{prop:SFSRmultisequenceextension} this way, it yields $\ell$ sequences in $F$, which we shall denote $\s i {j, l}$ for each $1 \le l \le \ell$, such that being a $\sigma^{t_1}$-SFSR for all these $\ell$ sequences is equivalent to being a $\theta^{t_1}$-SFSR for the considered sequence. In addition, the coefficients of $\s i {j,l}$ are such that
$\sum_{l=1}^\ell \theta^{t_1 i}(\theta^{b + t_2 k_j}(b_l)) \s i {j, l} = \s i j = S_{b + t_1 i + t_2 k_j}$, which by the uniqueness of \eqref{eq:sdlbasis} means that 
\begin{equation}\label{eq:sl}
	\s i {j,l} = S_{b + i t_1 + k_j t_2}^l = \yv_l \sigma^{b + i t_1 + k_j t_2}(\hv')^T.
\end{equation}

Combining this observation for all $0 \le j \le r$, generating the $r+1$ sequences given by $\s i j$ as a $\theta^{t_1}$-SFSR is equivalent to generating the $(r+1)\ell$ sequences $\s i {j,l}$ as a $\sigma^{t_1}$-SFSR: if there is a single shortest (up to scalar multiplication) $\theta^{t_1}$-SFSR with entries in $L$ generating $\s i j$, then it is also the shortest (up to scalar multiplication) $\sigma^{t_1}$-SFSR with entries in $F$ generating $\s i {j,l}$. The reciprocal follows by Proposition \ref{prop:SFSRsubfield}.

In order to be able to decode with respect to $C_{(\theta, \hv, T)} \cap F^n$, the remaining requirement is that the error span vector $\shv$ should be retrievable as the shortest $\theta^{t_1}$-SFSR for $\s i j$. For this to be the case, the parameters $\delta \ge 2, r, b, t_1, t_2, k_0 < \dots < k_r$ have to be such that the right-hand side of \eqref{eq:T} is a subset of the defining set $T + \sord \ZZ$. Additionally, while \eqref{eq:T} requires $(\thord, t_1) = 1$, for our current purpose we only need $(\sord, t_1) = 1$, which does not guarantee that $(\ell, t_1) = 1$, which is in principle necessary as $\thord = \ell \sord$. However, $\sigma^{t_1} = \sigma^{t_1 + k \sord}$ for all $k \in \ZZ$, and by Dirichlet's theorem on arithmetic progressions (see e.g. \cite{Selberg49}) there exists some $k \in \NN$ such that $t_1 + k \sord$ is a prime number larger than, hence relatively prime to, $\thord$, so being a $\theta^{t_1}$-SFSR for $\s i j$ is equivalent to being a $(\sigma^{t_1 + k \sord})$-SFSR for $\s i {j, l}$ which is equivalent to being a $(\theta^{t_1+ k \sord})$-SFSR for $\s i j$.
Finally, in order to guarantee that up to $\nu$ errors can be corrected for $C_{(\theta, \hv, T)}$, we may enforce Assumption \ref{assumption:Tnusubset}. While, in principle, we should replace $\thord$ for $\sord$ since the considered twist map for $C_{(\theta, \hv, T)}$ is $\thord$, for the same argument as for $t_1$ we may consider without loss of generality that $t_1'$ and $t_2'$ are relatively prime to $\ell$, and therefore $(\sord, t_1') = (\thord, t_1)$, $(\sord, t_2') = (\thord, t_2)$. As a result, Assumption \ref{assumption:Tnusubset} (or any of the assumptions in Section \ref{section:ontheassumption} that lead to it being true) can be considered verbatim without replacing $\thord$ for $\sord$.
Under these conditions, $\shv$ (whose entries are in $F$ since so are the error values) can be retrieved as the shortest $\theta^{t_1}$-SFSR generating $\s i j$, and therefore as the shortest $\sigma^{t_1}$-SFSR generating $\s i {j,l}$, as previously noted.

In order to perform the next steps of decoding, we shall also define
	\begin{equation}\label{eq:stijl}
		\st i {j,l} = \sigma^{-b - t_1 i - t_2 k_j}\left( \s i {j,l} \right),
	\end{equation}
which replaces \eqref{eq:epsilonetal} with the equivalent
	\begin{equation}\label{eq:steel}
		\st i {j,l} = \sigma^{-b - t_1 i - t_2 k_j}\left( \epv \sigma^{b + t_1 i + t_2 k_j}(\etav_l)^T\right) = \etav_l \sigma^{-b - t_1 i - t_2 k_j}(\epv)^T.
	\end{equation}
Once $\epv$ is extracted from $\shv$, for each $l$ one can solve for $\etav_l$ in \eqref{eq:steel} and then for $B_l$ in \eqref{eq:etal}, $\ev_l$ being then computable from \eqref{eq:epsilonl}. This yields $\ev$ by concatenating $\ev_l$, and finally $\cv = \yv - \ev$. If no codeword is at a distance satisfying Assumption \ref{assumption:Tnusubset}, the same observations as in Section \ref{section:toomanyerrors} apply, so the handling of possible decoding failures is identical.

This is summarized in Algorithm \ref{alg:interleavedspan}, which works for codes that follow Definition \ref{defn:ccode} interleaved with themselves.

\begin{algorithm}
	\caption{Decoding algorithm for interleaved codes}\label{alg:interleavedspan}
	\begin{algorithmic}[1]
		\Input A word $\yv \in F^n$ and parameters $\hv \in F^\mu, \sigma : F \to F, b, \delta \ge 2, t_1, t_2, \allowbreak {k_0, \dots, k_r}$ where $n = \ell \mu$ and $(\ell, t_1) = 1 = (\sord, t_1)$.
		\Require $b + t_1 \{ 0, \dots, \delta - 2 \} + t_2 \{ k_0, \dots, k_r \} \subset T + \sord \ZZ$.
		\Require $\sigma$ can be extended into $\theta : L \to L$ for some field extension $L/F$ of degree $\ell$.
		\Output $\cv \in F^n$ or `decoding failure'.
		\Ensure If `decoding failure' is returned, there is no $\cv \in C_{(\sigma, \hv, T)}^{\times \ell}$ such that $\rkd(\cv, \yv) = \nu$ satisfies Assumption \ref{assumption:Tnusubset}. Otherwise, $\cv$ is the nearest codeword in $C_{(\sigma, \hv, T)}^{\times \ell}$ to $\yv$.
		\Statex 

		\State $\s i {j,l} \leftarrow \yv_l \sigma^{b + t_1 i + t_2 k_j}\left( \hv \right)^T$ for $0 \le i \le \delta - 2, 0 \le j \le r, 1 \le l \le \ell$ \Comment{see \eqref{eq:sl}}
		\If{$\s i {j,l} = 0$ for all computed $i, j, l$}
			\State \Return $\yv$
		\EndIf
		\State $\shv = (\sh_0, \dots, \sh_{\bar \nu}) \leftarrow$ the shortest $\sigma^{t_1}$-SFSR generating $\s i {j,l}$
		\State Find $\epv = (\ep_1, \dots, \ep_{\bar \nu})$ from $\shv$ \Comment{see Remark \ref{remark:shld2epeta}}
		\If{$\epv$ cannot be computed from $\shv$}
			\State \Return `decoding failure'
		\EndIf
		\ForAll{$1 \le l \le \ell$}
			\State $\st i {j,l} \leftarrow \sigma^{-b - t_1 i - t_2 k_j}\left(\s i {j,l}\right)$ for $0 \le i \le \delta - 2, 0 \le j \le r$ \Comment{see \eqref{eq:stijl}}
			\State Find $\etav_l$ from $\epv$ and $\st i {j,l}$ by solving \eqref{eq:steel}
			\If{$\etav_l$ cannot be computed} \Comment{possible if $r > 0$}
				\State \Return `decoding failure'
			\EndIf
			\State Find $B_l$ from $\etav_l$ and $\hv$ by solving $\etav_l^T = B_l \hv^T$ \Comment{see \eqref{eq:etal}}
			\If{$B_l$ cannot be computed} \Comment{possible if $\mu < \sord$}
				\State \Return `decoding failure'
			\EndIf
			\State $\ev_l \leftarrow \epv_l B_l$ \Comment{see \eqref{eq:epsilonl}}
		\EndFor
		\State $\ev \leftarrow (\ev_1 | \dots | \ev_\ell)$
		\State $\cv \leftarrow \yv - \ev$
		\If{$\cv \notin C_{(\sigma, \hv, T)}^{\times \ell}$}
			\State \Return `decoding failure'
		\EndIf
		\State \Return $\cv$
	\end{algorithmic}
\end{algorithm}

\begin{remark}\label{remark:alginterleavedspangabidulin}
	For $b = r = 0$ and $\sigma^{t_1}$ the Frobenius endomorphism of $F$ a finite field (that is, when considering a Gabidulin code, or a subcode thereof, interleaved with itself), Algorithm \ref{alg:interleavedspan} is equivalent to \cite[Algorithm 4]{SJB11}. The observations given in Remark \ref{remark:algspangabidulin} for the actual identity of the algorithms analogously apply for these two algorithms.
	Note that, while Algorithm \ref{alg:interleavedspan} does now have decoding failure handling, it does not completely match the one in \cite[Algorithm 4]{SJB11}. There, a decoding failure is raised if the SFSR synthesis solution is not unique (which is written as the equivalent check $\eta \ne 0$, see \cite[Corollary 7]{SJB11}. This event is likely (but not guaranteed in general) to lead to some of the remaining decoding failures described in Section \ref{section:toomanyerrors} and checked in Algorithm \ref{alg:interleavedspan}.
\end{remark}

\begin{remark}\label{remark:lessdecodingfailuresininterleaved}
	For $\ell = 1$, Algorithm \ref{alg:interleavedspan} is Algorithm \ref{alg:span} with decoding failure handling, as described in Section \ref{section:toomanyerrors}, instead of Assumption \ref{assumption:Tnusubset} as a precondition. For example, in every instance of decoding in Section \ref{section:example} following Algorithm \ref{alg:span}, the steps match Algorithm \ref{alg:interleavedspan} for $\ell = 1$ if `decoding failure' is returned when finding a decoding failure. The following observation means that it is particularly reasonable to use decoding failure handling in Algorithm \ref{alg:interleavedspan}, since for $\ell > 1$ it is not highly improbable to decode beyond the limit given by the assumption.
	The rank of the matrix $H_\nu$ as given in \eqref{eq:HE} is equal to the rank of the $\ell(\delta - 1 - \nu)(r + 1) \times \nu$ matrix $\tilde H_\nu$, defined as the result from replacing each row in $H_\nu$, which is of the form $\theta^\iota(\etav)$ for some $\iota$, with $\ell$ rows of the form $\sigma^\iota(\etav_l)$ for each $1 \le l \le \ell$ (recall \eqref{eq:etalc}), so $\tilde H_\nu$ is the result of vertically concatenating the matrices $\sigma^\iota(M)$ for each $\iota \in T_\nu = b + t_1 \nu + t_1 \{ \nu, \dots, \delta - 2 \} + t_2 \{ k_0, \dots, k_r \}$, where $M$ is the vertical concatenation of $\etav_l$ for each $1 \le l \le \ell$.
	By Lemma \ref{lemma:nurank} and Proposition \ref{prop:SFSRmultisequenceextension}, if $\nu \le \delta - 2$ and the rank of $\tilde H_\nu$ is $\nu$, $\shv$ will be, up to nonzero scalar multiplication, the shortest $\sigma^{t_1}$-SFSR for the $(r+1)\ell$ sequences.
	If $F$ is a finite field of order $|F|$, following the ideas in \cite[Section V]{SB10}, which apply the results in \cite[Section 3.2]{Overbeck07} (mainly \cite[Lemma 3.12]{Overbeck07}), and assuming that $\etav$ is chosen uniformly among the elements in $L^\nu$ of rank weight $\nu$, if there is some subset of the form (or equal modulo $\sord$ to one of the form) $b' + s\{ 0, \dots, d \}$ in $T_\nu$ for some $d, s$ such that $(s, \sord) = 1$ and $\nu \le \ell(d + 1)$, then the probability of the rank of $\tilde H_\nu$ being less than $\nu$, and therefore the probability of failing to correct a given error of rank weight $\nu$, can be upper bounded by $4/|F|$ as a consequence of the structure in $\tilde H_\nu$.
	The most immediate way to get such a subset of $T_\nu$ is $b' = b + t_1 \nu, s = t_1, d = \delta - 2 - \nu$, so the success probability is close to $1$, unless $F$ is too small, when $\nu \le \ell(\delta - 1 - \nu)$, that is, when $\nu(\ell + 1) \le \ell(\delta - 1)$, or equivalently $\nu \le \frac{\ell}{\ell + 1}(\delta - 1)$. This bound for $\nu$ might be above the greatest $\nu$ allowed by Assumption \ref{assumption:Tnusubset} (in fact, it might reach almost $2 \tau$ if $\ell \gg 1$ and $0 \approx r \ll \delta$). Since \cite{SB10} and \cite{SJB11} deal with Gabidulin codes, this is done there for $\sigma$ the Frobenius endomorphism, $s = 1$ and $b' = \nu$ (compare the bound above for $\nu$ with \cite[Eq. (43)]{SJB11}). Generalizing this to a general finite field automorphism and a general $s$ should be straightforward, while it is possible that the results of \cite[Section 3.2]{Overbeck07} could be refined to take further advantage of the structure of the defining set when $r > 0$.
\end{remark}

\begin{example}\label{example:interleavederrorcorrection}
	Recall the code $\mathcal C = C_{(\sigma, \hv, T)}$ from Section \ref{section:example}, where $\sigma$ is the Frobenius endomorphism in $F = \FF_{2^{14}} = \FF_2(a)$ for some $a \in F$ such that $a^{14} + a^7 + a^5 + a^3 + 1 = 0$, $\hv$ is $\left( a^7, \sigma(a^7), \dots, \sigma^{13}(a^7) \right)$ and $T$ is $\{ 0, 1, 2, 3, 4,\allowbreak 8, 9, 10, 11, 12 \}$.
	In Section \ref{section:example}, an error of rank weight $3$ with respect to $\mathcal C$ was corrected.
	For any $\ell > 1$, $\lfloor\frac{\ell}{\ell + 1}(\delta - 1)\rfloor = \lfloor 5\frac{\ell}{\ell + 1}\rfloor$ is $4$ for $\ell \ge 4$ and $3$ for $\ell \in \{ 2, 3 \}$, so by the observations in Remark \ref{remark:lessdecodingfailuresininterleaved}, by running Algorithm \ref{alg:interleavedspan} with the parameters that successfully corrected errors of rank weight $3$ in Section \ref{section:example} ($b = 8$, $t_1 = 1$, $t_2 = 3$, $\delta = 6$, $r = 1$, $k_0 = 0$ and $k_1 = 2$), errors of rank weight $4$ with respect to $\mathcal C^{\times \ell}$ should be corrected with high probability for $\ell \ge 4$.
	In fact, since \cite[Algorithm 4]{SJB11} should also be able to correct errors of rank weight $4$ (see \cite[Eq. (43)]{SJB11}) for $\ell \ge 4$, by Remark \ref{remark:alginterleavedspangabidulin} there should be a high probability of correcting such errors for the parameters $b = 0$ (or $b = 8$), $t_1 = 1 = t_2$, $\delta = 6$, $r = 0$ and $k_0 = 0$.
	In order to illustrate that Algorithm \ref{alg:interleavedspan} can correct beyond these cases, we shall use it in order to correct an error of rank weight $\nu = 4$ with respect to $\mathcal C^{\times 2}$, that is, for the interleaved code from Example \ref{example:interleavedasasubfieldsubcode}, which, as described in the example, can also be seen as a subfield subcode from a code according to Definition \ref{defn:ccode}.

	Consider the error values $\epv = \left( 1, a, a^{11}, a^5 \right)$, whose error span vector has as its entries the coefficients of
	\[
		\begin{aligned}
			\sh & = \lclm{z - 1, z - a{-1} \sigma(a), z - a^{-11} \sigma(a^{11}), z - a^{-5} \sigma(a^{5})} \\
			& = z^4 + a^{14980}z^3 + a^{7019}z^2 + a^{5972}z + a^{13080} \in F[z;\sigma]
		\end{aligned}
	\]
	and is therefore
	\begin{equation}\label{eq:exampleshvl}
		\shv = \left( a^{13080}, a^{5972}, a^{7019}, a^{14980}, 1 \right).
	\end{equation}
	Let $B$ be the horizontal concatenation of the matrices $B_1$ and $B_2$, defined as
\begin{equation}\label{eq:exampleB12}
	\begin{aligned}
	B_1 & =
	\begin{pmatrix}
		1 & 1 & 1 & 1 & 1 & 1 & 1 & 1 & 1 & 1 & 1 & 1 & 1 & 1 \\
		0 & 0 & 1 & 1 & 1 & 1 & 1 & 1 & 1 & 1 & 1 & 1 & 1 & 1 \\
		0 & 1 & 1 & 1 & 1 & 1 & 0 & 0 & 0 & 0 & 0 & 0 & 0 & 0 \\
		1 & 0 & 0 & 0 & 0 & 0 & 0 & 0 & 0 & 0 & 0 & 0 & 0 & 0
	\end{pmatrix}
	, \\
	B_2 & =
	\begin{pmatrix}
		1 & 1 & 1 & 1 & 1 & 1 & 1 & 1 & 1 & 1 & 1 & 1 & 1 & 1 \\
		1 & 1 & 1 & 1 & 1 & 1 & 1 & 1 & 1 & 1 & 1 & 1 & 0 & 0 \\
		0 & 0 & 0 & 0 & 0 & 0 & 0 & 0 & 1 & 1 & 1 & 1 & 1 & 0 \\
		0 & 0 & 0 & 0 & 0 & 0 & 1 & 1 & 1 & 1 & 1 & 1 & 0 & 0
	\end{pmatrix},
	\end{aligned}
\end{equation}
	The first three rows of $B_1$ are the rows of the matrix $B$ in \eqref{eq:exampleepvB}, while the first three rows of $B_2$ are the result of horizontally flipping the three rows of $B$. Hence, $\epv B_1$ is the result of adding an error of rank weight $1$ to the error of rank weight $3$ considered in Section \ref{section:example}, while $\epv B_2$ is the result of adding an error of rank weight $1$ to the result of flipping the entries of that error vector.
	Then, $\ev = \epv B = (\epv B_1 | \epv B_2)$ is a vector in $F^{28}$ of rank weight $4$ whose error values are the entries of $\epv$ (or any other four elements in $F$ which are a basis of the same $F^\sigma$-subspace of $F$) and whose error locators (recall \eqref{eq:etal}) are
	\begin{equation}\label{eq:exampleetavl}
		\begin{aligned}
			\etav_1 & = B_1 \hv^T = \left(1, a^{9414}, a^{12430}, a^{7} \right)
			,
			\\
			\etav_2 & = B_2 \hv^T = \left(1, a^{10545}, a^{1889}, a^{5408} \right).
		\end{aligned}
	\end{equation}
	
	If $\cv_1$ is the codeword in $\mathcal C$ described in \eqref{eq:examplec} and $\cv_2 = (0, \dots, 0) \in F^{14}$, then $\cv = (\cv_1 | \cv_2)$ is a codeword in $\mathcal C^{\times 2}$.
	The first step when running Algorithm \ref{alg:interleavedspan} with $\yv = \cv + \ev$ as the input and the parameters $\hv$, $\sigma$, $b = 8$, $\delta = 6$, $t_1 = 1$, $t_2 = 3$, $r = 1$, $k_0 = 0$ and $k_1 = 2$ is the computation of the syndromes $\s i {j, l}$ for $l = 1$ and $l = 2$, and for each $0 \le i \le \delta - 2 = 4$ and each $0 \le j \le r = 1$. These syndromes are
	\[
	\begin{aligned}
		\left( \s i {j,1} \right)_{\substack{0 \le j \le 1 \\ 0 \le i \le 4}} & =
		\begin{pmatrix}
			a^{5903} & a^{12024} & a^{954} & a^{885} & a^{15923} \\
			a^{4230} & a^{12338} & a^{12893} & a^{12777} & a^{15865}
		\end{pmatrix}
		,
		\\
		\left( \s i {j,2} \right)_{\substack{0 \le j \le 1 \\ 0 \le i \le 4}} & =
		\begin{pmatrix}
			a^{3538} & a^{14609} & a^{14246} & a^{13194} & a^{8858} \\
			a^{6680} & a^{12714} & a^{13177} & a^{14706} & a^{1497}
		\end{pmatrix}.
	\end{aligned}
	\]
	Up to nonzero $F$-multiples, the shortest $\sigma$-SFSR which generates the four sequences given by the four rows is $\shv$ as described in \eqref{eq:exampleshvl}, so a nonzero $F$-multiple of $\shv$ will be retrieved by solving the SFSR synthesis problem.
	This uniqueness is lost if any of the four sequences is omitted.
	For example, if only the two sequences for $l = 1$ or for $l = 2$ are used as the input of Algorithm \cite[Algorithm 2]{SJB11} (and the warning of absence of uniqueness for the shortest SFSR is ignored), the result is not $\shv$ but, respectively, $a^{-14526}\left(a^{14526}, a^{1014}, a^{339}, a^{14214}, 1\right)$ or $a^{-8159}\left(a^{8159}, a^{462}, a^{4442}, a^{3516}, 1\right)$. The dimensions of the corresponding kernels are $0$ and $1$. Note that these would be the sequences of syndromes that would be computed when decoding, with either Algorithm \ref{alg:span} or Algorithm \ref{alg:interleavedspan} for $\ell = 1$, the first block of $\yv$, $\yv_1 = \cv_1 + \epv B_1$ or, respectively, the second, $\yv_2 = \cv_2 + \epv B_2$, with respect to $\mathcal C$. Hence, the kernels would not reach dimension $4$, leading to a decoding failure. This would also happen if the error vector $\ev$ is chosen as $(\ev_1 | \ev_1)$ or, respectively, $(\ev_2 | \ev_2)$, since this would replace the sequences for $l = 2$ with the ones for $l = 1$ or, respectively, the sequences for $l = 1$ with the ones for $l = 2$.
	As another example, if only the two sequences for $j = 0$ or for $j = 1$ are considered, Algorithm 2 in \cite{SJB11} gives two vectors of length $5$ whose corresponding kernels have dimensions $2$ and, respectively, $1$. Therefore, if Algorithm \ref{alg:interleavedspan} is run with the same parameters except $r = 0$, or with the same parameters except $r = b = 0$, the result would also be a decoding failure.
	By Remark \ref{remark:alginterleavedspangabidulin}, these would respectively correspond to decoding with respect to the interleaved Gabidulin codes $C_{(\sigma, \hv, \{ 8, \dots, 12 \})}^{\times 2}$ and $C_{(\sigma, \hv, \{ 0, \dots, 4 \})}^{\times 2}$ using \cite[Algorithm 4]{SJB11}.
	In summary, this example cannot be decoded by independently decoding each one of the two blocks or by using \cite[Algorithm 4]{SJB11}.

	After getting $\shv$, the error values $\epv$ are obtained as an $F^\sigma$-basis of the kernel of $\sh(\sigma)$, such as $\epv = \left( 1, a, a^{11}, a^5 \right)$. Then, the remaining steps consist of computing, for each $l$-th block, the error locators $\etav_l$, the matrix $B_l$ and the error vector $\ev_l$. As for Algorithm \ref{alg:span} and Algorithm \ref{alg:locator}, the error locators can be computed through the version of Gabidulin's algorithm described in Section \ref{section:gabidulinsalgorithm} instead of solving Equation \eqref{eq:steel}, which itself requires computing $\st i {j,l}$ as given in Equation \eqref{eq:stijl}. When running Gabidulin's algorithm with $(a_1, \dots, a_4) = \epv$, $(X_1, \dots, X_4) = \etav_1$, $(b_0, \dots, b_3) = (\s 0 {0,1}, \dots, \s 0 {3,1})$, $\theta = \sigma$ and $\bar b = 8$, the values for $A_k^{(j)}$ and $B_i^{(j)}$ are
\[
	\begin{aligned}
	\left( A_k^{(j)} \right)_{\substack{1 \le j \le 4 \\ j \le k \le 4}} & =
	\begin{pmatrix}
		1 & a & a^{11} & a^{5} \\
		  & a^{6449} & a^{675} & a^{9444} \\
		  &  & a^{4433} & a^{12255} \\
		  &  &  & a^{1902}
	\end{pmatrix}
	, \\
	\left( B_i^{(j)} \right)_{\substack{1 \le j \le 4 \\ 0 \le i \le 4 - j}} & =
	\begin{pmatrix}
		a^{5903} & a^{12024} & a^{954} & a^{885} \\
		a^{8371} & a^{3029} & a^{12427} &  \\
		a^{15663} & a^{944} &  &  \\
		a^{3694} &  &  &
	\end{pmatrix},
	\end{aligned}
\]
and this gives $\etav_1$ as in \eqref{eq:exampleetavl}, since the same $\epv$ that led to that value for $\etav_1$ has been used. This, in turn, gives $B_1$ as in \eqref{eq:exampleB12}, and then $\ev_1 = \epv_1 B_1$.
	For $l = 2$, the input is the same except $(X_1, \dots, X_4) = \etav_2$ and $(b_0, \dots, b_3) = (\s 0 {0,2}, \dots, \s 0 {3,2})$. This leads to the same values for $A_k^{(j)}$, as they only depend on $(a_1, \dots, a_4)$, so this computation only has to be done once, and in this case
\[
	\left( B_i^{(j)} \right)_{\substack{1 \le j \le 4 \\ 0 \le i \le 4 - j}} =
	\begin{pmatrix}
		a^{3538} & a^{14609} & a^{14246} & a^{13194} \\
		a^{8780} & a^{11113} & a^{4967} &  \\
		a^{5294} & a^{6995} &  &  \\
		a^{10178} &  &  & 
	\end{pmatrix},
\]
	which gives $\etav_2$, which gives $B_2$ as in \eqref{eq:exampleB12}, which gives $\ev_2$ as $\epv_2 B_2$. In this case, $\ev_2$ matches $\yv_2$, the second block of $\yv$, so $\yv_2 - \ev_2 = \cv_2$, which was defined as the zero vector in $F^{14}$. For the first block, the result is $\yv_1 - \ev_1 = \cv_1$, so Algorithm \ref{alg:interleavedspan} returns the intended codeword of $\mathcal C^{\times 2}$, $(\cv_1 | \cv_2) = \yv - (\ev_1 | \ev_2)$, successfully correcting an error of rank weight $4$.
\end{example}

\section*{Acknowledgements}

This work is part of a PhD thesis under the supervision of F. J. Lobillo, whom the author wants to thank kindly. The author also wants to thank A. Neri kindly for notably productive discussions which, as a side effect, led to a significant simplification of a few of the core aspects in the decoding procedure and the results thereof.

\end{document}